\newtheorem{theorem}{Theorem}[section]
\newtheorem{claim}[theorem]{Claim}
\newtheorem{corollary}[theorem]{Corollary}
\newtheorem{definition}[theorem]{Definition}
\newtheorem{lemma}[theorem]{Lemma}
\newtheorem{property}[theorem]{Property}
\newcommand{\qed}{\hfill $\Box$ \bigbreak}
\newenvironment{proof}{\noindent{\bf Proof.~}}{\qed}
\DeclareMathOperator{\Div}{Div}
\DeclareMathOperator{\name}{names}
\DeclareMathOperator{\val}{values}
\DeclareMathOperator{\skel}{skel}
\newcommand{\cA}{\ensuremath{\mathcal{A}}}
\newcommand{\cB}{\ensuremath{\mathcal{B}}}
\newcommand{\cI}{\ensuremath{\mathcal{I}}}
\newcommand{\cK}{\ensuremath{\mathcal{K}}}
\newcommand{\cL}{\ensuremath{\mathcal{L}}}
\newcommand{\cO}{\ensuremath{\mathcal{O}}}
\newcommand{\cP}{\ensuremath{\mathcal{P}}}
\DeclareFontShape{OT1}{cmr}{bx}{sc}{<-> cmbcsc10}{} 
\newcommand{\kagraph}{\ensuremath{\mathsf{G}}}
\newcommand{\kagraphr}{\ensuremath{\mathsf{G}_{\leq r}}}
\newcommand{\wcirc}{\textcolor{black}{\ensuremath\circ}}
\newcommand{\gcirc}{\textcolor{gray}{\ensuremath\bullet}}
\newcommand{\bcirc}{\textcolor{black}{\ensuremath\bullet}}
\newcommand{\local}{\textnormal{\textsc{local}}}
\newcommand{\cgst}{\textnormal{\textsc{congest}}}
\newcommand{\knowall}{\textnormal{\textsc{know-all}}}
\begin{document}

\title{A Topological Perspective on Distributed Network Algorithms%
\thanks{Some of the results in this paper were presented in an invited talk in SIROCCO 2019 conference~\cite{CastanedaFPRRT19talk}.\newline}}

\author{
	Armando Casta\~neda\thanks{Supported by UNAM-PAPIIT IN108720.} \\ {\small UNAM, Mexico}
	\and Pierre Fraigniaud\thanks{Supported by ANR projects DESCARTES and FREDA. Additional support from INRIA project GANG.} \\ {\small CNRS and Univ.\ de Paris}
	\and Ami Paz\thanks{Supported by the Austrian Science Fund (FWF): P 33775-N, Fast Algorithms for a Reactive Network Layer.}
	\\ {\small CS Faculty, Univ. of Vienna}
	\and Sergio Rajsbaum\thanks{Supported by project UNAM-PAPIIT  IN109917.} \\ {\small \;\;\;\;\;\;\; UNAM, Mexico \;\;\;\;\;\;\;}
	\and  Matthieu Roy\\ {\small CNRS, France}
	\and Corentin Travers\thanks{Supported by ANR projects DESCARTES and FREDA.} \\ {\small Univ. of Bordeaux and CNRS}
}
\date{}

\maketitle

\begin{abstract}
More than two decades ago, combinatorial topology was shown to be useful for analyzing distributed fault-tolerant algorithms in shared memory systems and in message passing systems. In this work, we show that combinatorial topology can also be useful for analyzing distributed algorithms in failure-free networks of arbitrary structure. To illustrate this, we analyze consensus, set-agreement, and approximate agreement in networks, and derive lower bounds for these problems under classical computational settings, such as the \local{} model and dynamic networks.

~\\
\textbf{Keywords:} Distributed computing; Distributed graph algorithms; Combinatorial topology

\end{abstract}

\thispagestyle{empty}
\setcounter{page}{1}
\newpage

\section{Introduction}

\subsection{Context and Objective}

A breakthrough in distributed computing was obtained in the 1990's, when \emph{combinatorial topology}, a branch of Mathematics extending graph theory to higher dimensional objects, was shown to provide a framework in which a large variety of distributed computing models can be studied~\cite{BorowskyG93,HerlihyS93,SaksZ93}. Combinatorial topology provides a powerful arsenal of tools, which considerably expanded our understanding of the solvability and complexity of many distributed problems~\cite{AttiyaCHP19,CastanedaR10,CastanedaR12,HerlihyS99}. We refer to the book by Herlihy et al.~\cite{HerlihyKR13} for an extended and detailed description of combinatorial topology applied to distributed computing, in a wide variety of settings. 

In a nutshell, combinatorial topology allows us to represent all possible executions of a distributed algorithm, along with the relations between them, as a single mathematical object, whose properties reflect the solvability of a problem. Combinatorial topology was primarily used to study failure-tolerant distributed computing in the context of  shared memory and message passing systems, but not in the context of failure-free systems in which the presence of a \emph{network} connecting the processing elements needs to be taken into account. However, a large portion of the study of distributed computing requires to take into account the structure of the network connecting the processors, e.g, when studying \emph{locality}~\cite{Peleg2000}. This paper is a first attempt to approach distributed network computing through the lens of combinatorial topology. 

The base of the topological approach for distributed computing consists of modeling all possible input configurations (resp., output configurations) as a single object called input \emph{complex} (resp., output complex), and specifying a task as a relation between the input and output complexes. Moreover, computation in a given model results in a topological deformation that modifies the input complex into another complex called the \emph{protocol complex}. The fundamental result of combinatorial topology applied to distributed computing~\cite{HerlihyKR13} is that a task is solvable in a computational model if and only if there exists a \emph{simplicial map}, called a \emph{decision map},  from the protocol complex to the output complex, that agrees with the specification of the task. In other words, for every input configuration, (1)~the execution of the algorithm leads the system into one or many configurations, forming a subcomplex of the protocol complex, and (2)~the decision map should map every configuration in this subcomplex (i.e., each of its \emph{simplexes}) into a configuration of the output complex, that is legal for the given input configuration, with respect to the specification of the task. 

Understanding the power and limitation of a distributed computing model with respect to solving a given task requires to understand under which condition the decision map exists. This requires to understand the nature of topological deformations of the input complex resulting from the execution of an algorithm, and the outcome of this deformation, i.e., the protocol complex. That is, one needs to establish the connections between the distributed computing model at hand, and the topological deformations incurred by the input complex in the course of a computation under this model.  

The connections between the computational models and the topological deformations are now well understood for several distributed computing models. For instance, in shared-memory wait-free systems, the protocol complex results from the input complex by a series of specific \emph{subdivisions} of its simplexes. 
The impossibility result for consensus in shared-memory wait-free systems is a direct consequence of this fact, as the input complex of consensus is connected, subdivisions maintain connectivity, but the output complex of consensus is not connected --- this prevents the existence of a decision map, independently of how long the computation proceeds. Similarly, in shared-memory $t$-resilient systems, the protocol complex results from the input complex not only by a series of specific subdivisions, but also by the appearance of some \emph{holes} in the course of the computation. 
These holes enable the existence of a decision map in the case of $(t+1)$-set-agreement, but are not sufficient to enable the existence of a decision map for consensus, as long as $t\geq 1$. And indeed, the FLP result~\cite{FischerLP85} implies that consensus is not solvable in asynchronous systems even in the presence of at most one failure.

This paper addresses the following issues: What is the nature of the topological deformations incurred by the input complex in the context  of network computing, i.e., when nodes are bounded to interact only with nearby nodes according to some graph metric? And, what is the impact of these deformations on the ability to solve tasks efficiently (e.g., locally) in networks? As a first step towards answering these questions in general, we tackle them in the framework of synchronous failure-free computing, which is actually the framework in which most studies of distributed network computing are conducted~\cite{Peleg2000}. 

\subsection{Our Results}

The main contribution of this paper is in studying the topological model of distributed computing in networks. In particular, we show that the protocol complex involves deformations that were not observed before in the context of distributed computing, deformations which we call \emph{scissor cuts}. 
These cuts depend on the structure of the underlying network governing the way the information flows between nodes.
We show that understanding the scissor cuts is useful for deriving lower bounds on agreement tasks. For this purpose, we model the way information flows between nodes in the network by the so-called \emph{information-flow graph}, and establish tight connections between structural properties of this graph, and the ability to solve agreement tasks in the network. For instance, we show that if the domination number of the information-flow graph is at least $k+1$, then the protocol complex is at least $(k-1)$-connected, and if the protocol complex is at least $(k-1)$-connected, then $k$-set agreement is not solvable. 

Our results apply to a rather general model: we assume a dynamic graph, i.e., a sequence $(G_t)_{t\geq 1}$ of graphs on the same set of nodes, where the graph $G_t$ represents the communication links that can be utilized in round $t$.
To analyze the state of the system after $r$ rounds, we study the question of which nodes may have heard from which other nodes: a message from a node $u$ could have arrived to a node $v$ if there is a \emph{temporal path} from $u$ to $v$, i.e., a sequence $u_{i_0},\dots,u_{i_\ell}$ of nodes with $1\leq i_0<i_1<\dots < i_\ell \leq r$ such that $u_{i_0}=u$, $u_{i_\ell}=v$, and $(u_{i_j},u_{i_{j+1}})\in E(G_{i_j})$ for every $j=0,\dots,\ell-1$.
The information flow graph represents these relations, for a given~$r$, by having a node for each node of the system, and an edge between two nodes that have a temporal path between them.

In this model, we take rather strong assumptions on the knowledge of the processes. In the \knowall{} model that we use, processing nodes are assumed to have \emph{structure awareness}, that is, every process is given the (possibly evolving) structure of the network, the position it occupies in this network, and the positions occupied by all the other processes in the network.
It follows that our results imply lower bounds for solving agreement problems in weaker models, such as the classical \local{} model, as well as dynamic networks.

For instance, a consequence of our results is that, in the \local{} model, solving $k$-set agreement in a network requires at least $r$~rounds, where $r$ is the smallest integer such that the $r$-th power of the network has domination number at most~$k$ (two nodes are adjacent in the $r$-th power of a graph if their distance in the graph is at most~$r$). Similarly, we show that solving $k$-set agreement against a dynamic network adversary~$A$ requires at least $r$~rounds, where $r$ is the smallest integer such that, for every sequence of graphs $(G_t)_{1\leq t\leq r}$ produced by $A$, this sequence has ``temporal'' dominating number at most $k$,
i.e., there is a set $S$ of at most $k$ nodes such that each node $v$ in the graph has a node $s\in S$ with a temporal path from $s$ to $v$ in~$(G_t)_{1\leq t\leq r}$.

Applying the topological approach to network computing also yields fine grained results. 
We show that for every instance of the \knowall{} model with $n$ nodes, and for every $\epsilon<\frac1{n-1}$,
if consensus is not solvable in $r$ rounds, then $\epsilon$-approximate agreement is also not solvable in $r$ rounds. This bound is tight, in the sense that there exist an instance of the \knowall{} model and a value $r$ where consensus is not solvable in $r$ rounds, while $\frac1{n-1}$-approximate agreement is solvable in $r$ rounds.

\subsection{Related Work}

The deep connections between combinatorial topology and distributed computing were concurrently and independently identified in~\cite{BorowskyG93,HerlihyS93} and~\cite{SaksZ93}. Since then, numerous outstanding results were obtained using combinatorial topology for various types of tasks, including agreement tasks such as consensus and set-agreement~\cite{SakavalasTseng2019}, and symmetry breaking tasks such as renaming~\cite{AttiyaCHP19,CastanedaR10,CastanedaR12}. A recent work~\cite{AlistarhAEGZ18} provides evidence that topological arguments are sometimes necessary. All these results are obtained in the asynchronous shared memory model with crash failures, but combinatorial topology can also be applied to Byzantine failures~\cite{MendesTH14}. Works on message passing models consider only complete communication graphs~\cite{ChaudhuriHLT00,HerlihyRT09}. Recent results show that combinatorial topology can also be applied in the analysis of mobile computing~\cite{ACPR19}, demonstrating the generality and flexibility of the topological framework applied to distributed computing. The book~\cite{HerlihyKR13} provides an extensive introduction to combinatorial topology applied to distributed computing. 

In contrast, distributed network computing has not been impacted by combinatorial topology. This domain of distributed computing is extremely active and productive this last decade, analyzing a large variety of network problems in the \local{} model~\cite{Peleg2000}, capturing the ability to solve tasks locally in networks. (The \cgst{} model has also been subject of tremendous progresses, but this model does not support full-information protocols, and thus is out of the scope of our paper). We refer to~\cite{BarenboimEG18,BarenboimEPS12,BrandtFHKLRSU16,ChangLP18,FischerGK17,Ghaffari16,GhaffariKM17,HarrisSS16,Suomela13} for a non exhaustive list of achievements in this context. However, all these achievements were based on an operational approach, using sophisticated algorithmic techniques and tools solely from graph theory. Similarly, the existing lower bounds on the round-complexity of tasks in the \local{} model~\cite{KuhnMW16,Linial92,BrandtFHKLRSU16,GoosHS17,BalliuBHORS19} were obtained using graph theoretical arguments only. The question of whether adopting a higher dimensional approach by using topology would help in the context of local computing, be it for a better conceptual understanding of the algorithms, or providing stronger technical tools for proving lower bounds, is, to our knowledge, entirely open.

Similarly to (static) distributed network computing, the fundamental research on dynamic networks~\cite{CasteigtsFGSY15,CasteigtsFQS12,KuhnO2011,BhadraF12} has rarely been impacted by combinatorial topology. Relevant works in this framework study consensus~\cite{CouloumaGP15,KuhnOM11}, set-agreement~\cite{BielyRSSW18,GodardP16} and approximate agreement~\cite{Charron-BostFN15,FuggerNS18,Charron-BostFN16}. 
Recently, techniques of \emph{point set topology}~\cite{AlpernS85} were used to characterize the solvability of consensus in the message adversary model~\cite{Nowak0W19}.
We also refer to~\cite{Charron-BostS09,KuhnLO10,RajsbaumRT08} which analyze distributed computation in a model where all processes broadcast messages at each round, but the recipients of these messages are defined by a graph which may change from round to round. The information-flow graph introduced and analyzed in our paper can be viewed as an abstraction of computation in dynamic networks, since the information-flow graph contains a summary of how information was transmitted among processes in the network during some interval of time.

\section{The \knowall{} Model}
\label{sec:model}

In this section, we describe the \knowall\/  model of computation, and show that this model is stronger than classical distributed computing models, including the \local\/ model and dynamic graphs. That is, any lower bound for the \knowall\/  model implies lowers bounds for these classical models. 

\subsection{Definition of the Model} 

We consider a set of $n$ synchronous fault-free processes, with distinct names from the set $[n]=\{1,\dots,n\}$.  

\begin{definition} 
An instance of the \knowall{} model is a sequence $\kagraph=(G_t)_{t\geq 1}$ of $n$-node directed graphs, with nodes labeld by the integers in~$[n]$.
We identify this sequence with a computational model of $n$ processes, where each process is a priori given its name $p\in[n]$ and the sequence $\kagraph$ of graphs.
The computation in this model proceeds in synchronous rounds, where at round $t$ process $p$ can transmit information to process $q$ only if the graph $G_t$ contains the arc~$(p,q)$.
\end{definition} 
In the \knowall\/ model, every process is thus given the complete knowledge of the communication patterns occurring at any time: the communication graph, the names of the processes, and their locations.
The only uncertainty is regarding the \emph{inputs} to the processes.
Hence, this model is particularly suited for studying input-output tasks, as defined next, such as consensus and list coloring. 
On the other hand, other classical tasks such as $(\Delta+1)$-coloring are trivial in this model.

\paragraph{Input-Output Tasks.}

An input-output  task $(I,O,F)$ in the $n$-process \knowall\/ model is described by a set $I$ of input values, a set $O$ of output values, and a mapping
\[F: I^n\to 2^{O^n}\]
specifying, for every $n$-tuple of input values, the set of possible legal $n$-tuples of output values. 
The input of a process $p\in[n]$ consists solely of a value~$in(p)\in I$, and its output is a value $out(p)\in O$.
While all the problems considered in this work allow any combination of input values, there are other problems in the literature which do not allow this, and assume, e.g., \emph{unique} input values, or an initial proper node coloring. In these cases, 
an algorithm solving the task is not obliged to provide any guarantees,
and accordingly, we will map any such tuple of input values to all tuples of output values.

Fix $n\geq 2$, an input-output task $(I,O,F)$, and an instance $\kagraph=(G_t)_{t\geq 1}$ of the \knowall\/ model.
Given an $n$-tuple of input values in $I$ with each process $p\in[n]$ having input~$in(p)\in I$,
an $r$-round protocol consists of having the processes synchronously communicate over the graphs $G_1,\dots,G_r$, and then having each process decide on an output value in $O$ based on the information received in these $r$ rounds of communication.
Such a protocol solves the task $(I,O,F)$ if for every $n$-tuple of input values in $I$, the values produced by the processes form an $n$-tuple of output values in $O$ which is legal by $F$.
If such a protocol exists, we say that the task $(I,O,F)$ is solvable in $r$ rounds in~$\kagraph$.

\paragraph{Flooding protocol.} A distributed algorithm solving a task has two components: a \emph{communication protocol} $\Pi$ enabling each process to gather information about the inputs of other processes, and a \emph{decision function}~$f$ that maps the gathered information to an output value. Without loss of generality, in the \knowall{} model, we can consider solely \emph{flooding} communication protocols: at round $t$, every process~$p\in [n]$ sends all the name-input pairs it is aware of to all its out-neighbors in~$G_t$, that is, it sends the pair $(p,in(p))$, and all the pairs it has received in the previous rounds. After a certain number of communication rounds, the process takes a decision based on the set of pairs it is aware of.

In general, a distributed algorithm in our setting could be much more sophisticated, and have each process send more information, or take a decision, e.g., based on the time when each message arrived or on knowledge regarding messages received by other processes.
The most general form of such algorithms is \emph{full-information} protocols, in which at every round $t=1,\dots,r$, every process sends its entire history to all its neighbors. 
While this stands in contrast with flooding protocols defined above, where each process only sends the name-input pairs it is aware of, the following lemma shows that considering only flooding protocols does not restrict the power of the \knowall\/ model. 
At the hart of its proof lies the fact that in the \knowall\/ model, each process knows the entire topology of the network in each round, including the names of the processes and their locations, and thus the decision function $f$ can depend on this information.

\begin{lemma}\label{lem:flooding-is-universal}
Every input-output  task solvable in $r$ rounds in the \knowall{} model is also solvable in $r$ rounds using a flooding protocol.
\end{lemma}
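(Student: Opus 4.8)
The plan is to show that any $r$-round full-information protocol can be simulated by a flooding protocol followed by a suitable decision function, by exploiting the fact that in the \knowall{} model the entire sequence $\kagraph = (G_t)_{t\geq 1}$ is common knowledge. First I would observe that, since every process knows $\kagraph$, it also knows for each round $t$ and each process $q$ the out-neighborhood of $q$ in $G_t$. Consequently, the message content that every process sends in every round of a full-information protocol is determined recursively from the tuple of inputs alone: what a process sends at round $t$ depends only on what it has received in rounds $1,\dots,t-1$, which in turn depends only on the inputs of the processes from which it heard, together with the (commonly known) graph structure. I would make this precise by defining, for each process $p$ and each round $t$, the ``information set'' $\cI_t(p)$ of name-input pairs that $p$ is aware of after $t$ rounds of flooding; by induction on $t$, $\cI_t(p)$ is exactly the set of pairs $(q, in(q))$ such that there is a temporal path from $q$ to $p$ using the graphs $G_1,\dots,G_t$.

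Next I would argue that the full history $h_t(p)$ that $p$ accumulates under the full-information protocol is a function only of $\cI_t(p)$ and $\kagraph$. This is the technical core: I would prove by induction on $t$ that there is a map $\Phi_t$, depending only on $\kagraph$ (and the protocol), such that $h_t(p) = \Phi_t(p, \cI_t(p))$ for every input assignment and every $p$. The base case $t=0$ is immediate since $h_0(p)$ consists of $p$'s own input, which is the single element of $\cI_0(p)$. For the inductive step, the history $h_t(p)$ is assembled from $h_{t-1}(p)$ together with the messages $p$ receives at round $t$, namely $\{ h_{t-1}(q) : (q,p)\in E(G_t)\}$; by the induction hypothesis each $h_{t-1}(q) = \Phi_{t-1}(q,\cI_{t-1}(q))$, and since $\cI_{t-1}(q)\subseteq \cI_t(p)$ whenever $(q,p)\in E(G_t)$ — indeed $\cI_t(p) = \{(p,in(p))\} \cup \bigcup_{(q,p)\in E(G_t)} \cI_{t-1}(q)$ — and the set of in-neighbors of $p$ in $G_t$ is determined by $\kagraph$, the whole history $h_t(p)$ is recoverable from $\cI_t(p)$ and $\kagraph$ alone. (One subtlety: from $\cI_t(p)$ one must be able to reconstruct, for each contributing $q$, which sub-collection of pairs constitutes $\cI_{t-1}(q)$; this is where knowledge of the graph sequence is essential, since the reachability sets are purely graph-theoretic.)

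Finally I would conclude: given an $r$-round full-information protocol $\Pi$ with decision function $g$ that solves the task, define a flooding protocol in which each process $p$ computes $\cI_r(p)$, and define a new decision function $f$ by $f(p,\cI_r(p)) = g(p,\Phi_r(p,\cI_r(p)))$. Since $f(p,\cI_r(p))$ equals $g(p, h_r(p))$, the flooding protocol equipped with $f$ outputs exactly the same value at every process as $\Pi$, for every input assignment; hence it solves the task in $r$ rounds. The main obstacle I expect is the bookkeeping in the inductive step of the second paragraph — specifically, verifying carefully that the decomposition $\cI_t(p) = \{(p,in(p))\}\cup\bigcup_{(q,p)\in E(G_t)}\cI_{t-1}(q)$ together with common knowledge of $\kagraph$ genuinely lets a process disentangle the contributions of its various in-neighbors, rather than just their union. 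This is intuitively clear because the sets $\cI_{t-1}(q)$ are determined by temporal reachability in the known graph sequence, but it is the one place where the argument uses the full strength of the \knowall{} assumption and deserves an explicit, if short, justification.
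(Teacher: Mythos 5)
Your proof is correct and follows essentially the same route as the paper's: reduce to the full-information protocol, then use the common knowledge of $\kagraph$ to reconstruct from the flooded name-input pairs the entire history a process would have received under the full-information protocol, and feed that into the original decision function. The paper states this reconstruction in one sentence; you make it rigorous with the explicit induction on $t$ and the map $\Phi_t$, and you correctly identify the one point that needs care, namely that $\cI_{t-1}(q)\subseteq\cI_t(p)$ can be recovered by projecting onto the (graph-determined) set of names reachable to $q$ in $t-1$ rounds.
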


\begin{proof}
Let $(I,O,F)$ be an input-output task solvable in $r$ rounds in the \knowall{} model $\kagraph=(G_t)_{t\geq 1}$ using some algorithm $(\Pi,f)$. Then $(I,O,F)$ is also solvable in $r$ rounds using the full-information protocol,
so it is enough to consider the case where $\Pi$ is the full-information protocol. 

Let us execute the flooding protocol for $r$ rounds. Since every process is aware of the sequence of directed graphs $G_t$, $1 \leq t \leq r$, it can reconstruct the sequence of messages it would have received in the full-information protocol~$\Pi$, based on the messages received during the execution of the flooding protocol. It follows that every process is able to apply the decision function $f$ of the algorithm $(\Pi,f)$ after $r$ rounds.  
\end{proof}

Thanks to Lemma~\ref{lem:flooding-is-universal}, in order to analyze algorithms in the  \knowall\/ model, we can focus on the \emph{information-flow graph}, which describes the execution of a flooding protocol in the \knowall\/ model, defined as follows. 

\begin{definition}
Let $\kagraph=(G_t)_{t \geq 1}$ be an instance of the \knowall{} model, and let $r\geq 0$. The \emph{information-flow graph}  associated to $\kagraph$ after $r$ rounds is the directed graph $\kagraphr$ whose $n$ nodes are labeled by integers in $[n]$, and there is an arc $(p,q)$ from $p$ to $q$ in $\kagraphr$ if and only if $q$ receives the pair $(p,in(p))$ when flooding in~$\kagraph$ for $r$ rounds, i.e., there is a temporal path from $p$ to $q$ of length at most $r$.
\end{definition}

\subsection{Relation to Other Models}
\label{subsec:other-models}

Recall that, in the \local{} model~\cite{Peleg2000}, the $n$ processes  synchronously communicate using a fixed (possibly directed) communication graph~$G$, which is unknown to the processes. Every process is given a distinct identifier taken from some range of integers. 
As the execution goes by, the processes learn the topology of $G$. More precisely, after $r$ rounds of communication,
each process is aware of its $r$-neighborhood in $G$, including the identifiers of the processes in this neighborhood, and their possible inputs.
A protocol solving a given task in the \local{} model is an \emph{$r$-round} protocol if every process makes a decision in at most $r$ rounds, for every assignment of inputs and identifiers  to the processes. 

\begin{property}
\label{lemma-from-local-to-knowall}
If an input-output task is solvable in $r$ rounds in the \local{} model with communication graph $G$, then it is solvable in $r$ rounds  in the \knowall{} model $\kagraph=(G_t)_{t\geq 1}$ with $G_t=G$ for $t\geq 1$. 
\end{property}

In the framework of dynamic networks, an \emph{adversary} $A$ is a collection of infinite sequences~$(G_t)_{t\geq 1}$ of directed graphs on the same set $[n]$ of vertices, representing processes. When the execution of a protocol starts, the adversary picks any of the sequences $(G_t)_{t\geq 1}\in A$, and the processes synchronously communicate following the graphs in the sequence, that is, at round $t\geq 1$ processes exchange messages along the edges of $G_t$. A  protocol solves a task in $r$ rounds against a dynamic network adversary $A$ if, for every $(G_t)_{t\geq 1}\in A$, the protocol solves the task in $(G_t)_{t\geq 1}$. 

\begin{property}
\label{lemma-from-dynamic-to-knowall}
If an input-output task is solvable in $r$ rounds against a dynamic network adversary~$A$, then, for every $\kagraph=(G_t)_{t\geq 1}\in A$, the task is solvable in $r$ rounds  in the \knowall{} model $\kagraph$. 
\end{property}

\section{The Topology of Computing in the \knowall{} Model}

In this section, we describe the protocol complex related to distributed computing in the \knowall{} model, and establish a necessary and sufficient condition for input-output task solvability in this model. 
Most of the topological notions we use are routinely applied for studying distributed systems, as described in the book~\cite{HerlihyKR13}. 
The reader unfamiliar with these notions may read the brief recap in Appendix~\ref{appendix:topological basics}, which contains the topological notions required for the statement of our results. 

One exception to this is a new type of mappings, which we define next.
Recall that, given a simplex $\sigma$ of a complex $\cA$, the \emph{star of $\sigma$ in $\cA$}, denoted $\mathrm{St}(\sigma,\cA)$, is the sub-complex of $\cA$ induced by all its facets containing~$\sigma$. 

We say that a mapping $f:\cA\to 2^\cB$ is a \emph{spreading map} if, for every simplex $\sigma\in\cA$, 
\[
f(\sigma)\subseteq \bigcup_{\phi \, \in \, \mathrm{St}(\sigma,\cA)} f(\phi). 
\]
Note that every carrier map $f:\cA\to 2^\cB$ is a spreading map, because carrier maps enforce that  $f(\sigma)\subseteq f(\phi)$ whenever $\sigma\subseteq\phi$. However, a spreading map  does not need to be a carrier map, as the image of a simplex $\sigma = \phi'\cap\phi''$ may have its image spread partially in $f(\phi')$, and partially in $f(\phi'')$, with $f(\sigma)\not\subseteq f(\phi')$. 

While computation in asynchronous, crash-prone systems is modeled using carrier maps (specifically, subdivisions), the computation in the \knowall{} model is modeled by a different type of spreading maps, which we call scissor cuts. These are defined below.

\subsection{The Protocol Complex of the \knowall{} Model}
\label{subsec:topologicalcomputing}

Given a  distributed computing task $(I,O,F)$ to be solved in the \knowall{} model, two complexes play a major role in this framework, the \emph{input complex}, denoted by $\cI$, and the \emph{output complex}, denoted by~$\cO$.
The input complex $\cI$ is the \emph{pseudosphere} $\Psi(\{1,\ldots,n\},I)$, with the set of facets
\[
\big \{ \{(1,v_1),\dots,(n,v_n)\} : \forall i\in[n], v_i \in I \big \}. 
\]
The set of all facets of the output complex $\cO$ is
\[
\big \{ \{(1,v'_1),\dots,(n,v'_n)\} : \forall i\in[n], v'_i \in O, \;\mbox{and} \;  \exists v \in I^n, (v'_1,\dots,v'_n) \in F(v) \big \}. 
\]
Note that the output complex includes only combinations of output values that are legal with respect to the problem at hand. Note also that the input and output complexes do not depend on the communication medium considered, and that both complexes are pure---all their facets have the same dimension.

For instance, in the case of binary consensus in an $n$-process system (see Figure~\ref{fig:scisor}),  the set of facets of the input complex is 
\[
\big \{ \{(1,v_1),\dots,(n,v_n)\}:  \forall i\in[n], v_i \in\{0,1\}\big \}.
\]
This complex is topologically equivalent to the $(n-1)$-dimensional sphere $S^{n-1}$. 
(Formally, the topological spaces are \emph{homeomorphic}, that is, there is a continuous bijection from $|\cI|$ to $S^{n-1}$, whose inverse is continuous as well.)
The output complex of binary consensus is composed of two disjoints $(n-1)$-facets, $\tau_0$ and~$\tau_1$, defined by
\[
\tau_0 = \{(1,0),\dots,(n,0)\}, \; \mbox{and} \; \tau_1=\{(1,1),\dots,(n,1)\}.
\]

One can rephrase the operational definition $(I,O,F)$ of input-output task in the framework of combinatorial topology as follows: a task $(\cI,\cO,\Delta)$  is simply described by a carrier map 
\[
\Delta:\cI\to\cO,
\] 
i.e., a map that sends each simplex of the input complex $\cI$ to a subcomplex of the output complex $\cO$.
For a given facet $\sigma=\{(1,v_1),\dots,(n,v_n)\}\in\cI$, the set of facets in $\Delta(\sigma)$ is defined by 
\begin{equation}
\label{eq:defDelta}
\{(1,v'_1),\dots,(n,v'_n)\} \in\Delta(\sigma) \iff  (v'_1,\dots,v'_n)\in F(v_1,\dots,v_n).
\end{equation}
The carrier map $\Delta$ of binary consensus maps every input $(n-1)$-facet~$\sigma$ containing both input values~0 and~1 to the two $(n-1)$-facets $\tau_0$ and $\tau_1$, and maps each $(n-1)$-facet $\sigma_b$ with a unique input value $b\in\{0,1\}$ to the output $(n-1)$-facet~$\tau_b$. 

\begin{figure}[tb]
\centerline{\includegraphics[width=16cm]{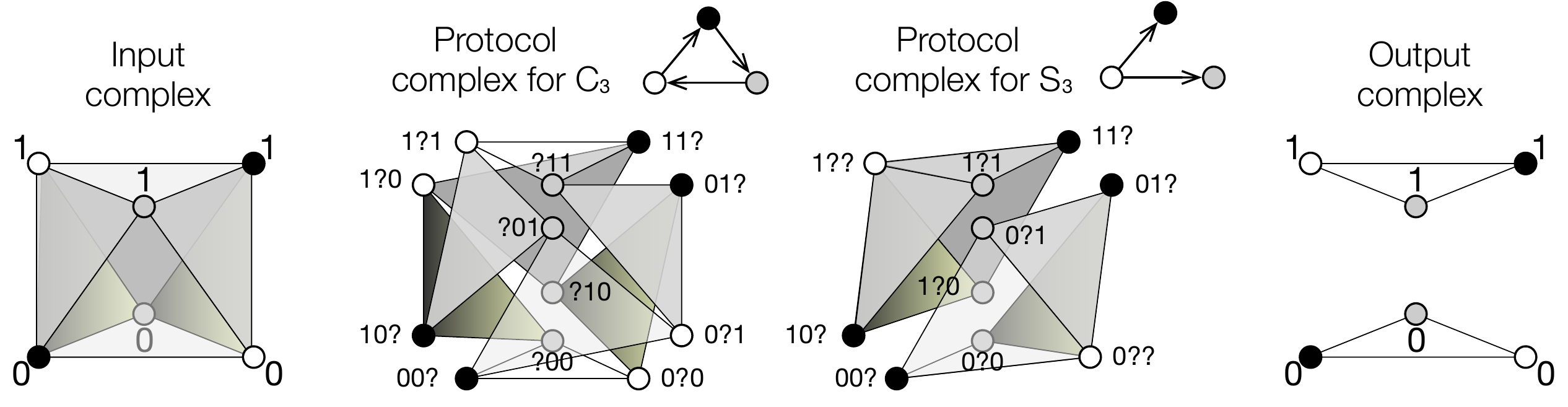}}
\caption{\sl\small Impact of the information-flow graph on the protocol complex for binary consensus with three processes. Labels next to vertices are input and output values, in the input and output complexes respectively, or views in protocol complexes. A view ``$xyz$'' labeling a vertex means that the process corresponding to this vertex knows the input values $x$ from process $\wcirc$, $y$ from process $\bcirc$, and $z$ from process~$\gcirc$. A question mark in a label indicates that the process does not know the corresponding input value. }
\label{fig:scisor}
\end{figure}

In any distributed computing model, in each point in time during the execution of a communication protocol, one can define a complex whose vertices are pairs $(p,w)$ where $w$ is the state of process $p$, i.e., its \emph{view} of the computation. 
A set of vertices with distinct process names forms a \emph{protocol simplex} if there is a protocol execution where those processes collect those views. All possible protocol simplexes make up the \emph{protocol complex}. 

In the \knowall{} model, we can identify the view of each process with the set of inputs it is aware of.
Due to synchrony and absence of failures, in an instance of the \knowall{} model with information-flow graph $\kagraphr$, each input simplex in $\cI$ results in a single execution, that leads to a unique simplex in the protocol complex $\cP$ associated with $\kagraphr$. 
The following fact is a direct consequence of the definition of the information-flow graph.

\begin{lemma}
\label{lem-protocol-complex}
Let $\kagraph$ be an instance of the \knowall\/ model, let $r\geq 0$, and let $(\cI,\cO,\Delta)$ be a task. The protocol complex $\cP$ associated with $\kagraph$ and $\cI$ after $r$ rounds, is the pure $(n-1)$-dimensional complex whose facets are all the sets of the form $\{(1,w_1),\dots,(n,w_n)\}$ 
for which there exists a facet $\{ (1,v_1),\dots,(n,v_n)\}$ of $\cI$ such that,
for $i=1,\dots,n$, the view $w_i$ satisfies
\[
w_i=\{(j,v_j) : i=j \;\mbox{or} \; (j,i)\in E(\kagraphr)\}.
\] 
The function~$\Xi:\cI\to\cP$ maps each facet of $\cI$ to a single facet of $\cP$, satisfying 
\[
\Xi(\{(1,v_1),\dots,(n,v_n)\}) =\{(1,w_1),\dots,(n,w_n)\}.
\]
\end{lemma}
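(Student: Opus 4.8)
The plan is to prove the lemma by directly unwinding the definitions, since it carries essentially no topological content. First I would invoke Lemma~\ref{lem:flooding-is-universal} to restrict attention to the flooding protocol, so that (as noted just before the statement) the view of a process may be identified with the set of name-input pairs it has collected. Fix $r\geq 0$ and a facet $\sigma=\{(1,v_1),\dots,(n,v_n)\}$ of $\cI$, and run flooding for $r$ rounds starting from $\sigma$. Process $i$ ends up knowing the pair $(j,v_j)$ precisely when either $j=i$ (a process always holds its own name-input pair) or $j\neq i$ and $(j,v_j)$ reached $i$ within $r$ rounds; by the definition of the information-flow graph, the latter happens if and only if $(j,i)\in E(\kagraphr)$. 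Hence the view of process $i$ is exactly $w_i=\{(j,v_j): i=j \text{ or } (j,i)\in E(\kagraphr)\}$, and this depends only on $\sigma$ (together with $\kagraph$ and $r$), because synchrony and the absence of failures make the execution on a given input facet unique. This is exactly what defines the facet map $\Xi$ by $\Xi(\sigma)=\{(1,w_1),\dots,(n,w_n)\}$.

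Next I would check that these facet images describe all of $\cP$. Each $\Xi(\sigma)$ is an $(n-1)$-simplex: it has $n$ vertices carrying the $n$ distinct process names, and it is a protocol simplex since it is the global state reached in the (unique) execution on input $\sigma$. By the general definition of the protocol complex, a set of vertices (with distinct names) is a simplex of $\cP$ exactly when those processes jointly collect those views in some execution; since there is a single execution per input facet, every such set is a face of some $\Xi(\sigma)$, and conversely every face of every $\Xi(\sigma)$ is realized by the corresponding subset of processes in the execution on $\sigma$. Thus $\cP$ is generated by $\{\Xi(\sigma):\sigma\text{ a facet of }\cI\}$; since each $\Xi(\sigma)$ uses the full name set $[n]$, no one of them is a proper face of another, so these are precisely the facets of $\cP$, and in particular $\cP$ is pure of dimension $n-1$. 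This yields the stated description of $\cP$ and of $\Xi$ on facets.

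I do not expect any genuine obstacle here; the only points deserving a word of care are (i) the identification of a process' view with its set of name-input pairs, which is legitimate only because of Lemma~\ref{lem:flooding-is-universal} and the structure awareness built into the \knowall{} model, and (ii) the uniqueness of the execution, which relies on synchrony and fault-freeness --- without them a single input facet could give rise to several protocol simplexes and $\Xi$ would fail to be a function on facets. I would also remark in passing that $\Xi$ is in fact injective on facets, since two distinct input facets differ in some coordinate $v_i$ and the corresponding views $w_i$ then differ in their unique pair with first coordinate $i$; this is not needed for the statement. Extending $\Xi$ to all of $\cI$ and identifying the resulting map as a spreading map (a scissor cut) rather than a carrier map belongs to the subsequent development and is outside the scope of this lemma.
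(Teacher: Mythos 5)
Your proposal is correct and takes the same approach as the paper, which states the lemma as ``a direct consequence of the definition of the information-flow graph'' following the identification of views with sets of name-input pairs (justified by Lemma~\ref{lem:flooding-is-universal}) and the uniqueness of executions under synchrony and fault-freeness. Your careful unwinding of those definitions, including the purity check and the remark on where the hypotheses are used, is exactly the reasoning the paper leaves implicit.
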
 

\paragraph{Notation.}
Given a view $w=\{(j_1,v_{j_1}),\dots,(j_k,v_{j_k}) \}$, we denote by $\name(w)=\{j_1,\dots,j_k\}$ the set of processes whose names appear in $w$, and by $\val(w)=\{v_{j_1},\dots,v_{j_k}\}$ the multiset of values appearing in $w$. 

An important point stated in Lemma~\ref{lem-protocol-complex} is that the facets of the input complex $\cI$ are preserved in the protocol complex $\cP$, i.e., there is a one-to-one correspondence between the facets of these two complexes. 
This phenomena is an important and interesting consequence of the fact that the system we study is not prone to asynchrony or failures, making the computation deterministic. However, while the facets of $\cI$ and $\cP$ remain in a one-to-one correspondence, their intersections may not possess  this property: two facets intersecting on a face of some dimension in $\cI$ may correspond to two facets intersecting only in a face of a lower dimension in $\cP$ (but not of a higher dimension). 
We thus use the terminology ``scissor cuts'' for describing  the transformation of $\cI$ into $\cP$ by $\Xi$.

\paragraph{Terminology.} 
	In the \knowall\/ model, the spreading map $\Xi:\cI\to\cP$ can be interpreted as a consequence of a sequence of topological deformations, which we call \emph{scissor cuts}.
	This mapping is a special type of spreading map, defined for colored complexes. Specifically, $\Xi$ maps each facet of $\cI$ to a single facet of $\cP$,  and it maps every simplex $\sigma\in \cI$ to the set  
	\[
	\Xi(\sigma)=\{\tau \in \Xi(\phi) : \; \mbox{$\phi$ is a facet of $\cI$}, \; \sigma\subseteq\phi\mbox{, and}\; \name(\tau)=\name(\sigma)\}. 
	\]

\paragraph{Example.}
Figure~\ref{fig:scisor} displays  two illustrations of the protocol complex for binary consensus, for two different information-flow graphs on three processes: the consistently directed cycle~$\kagraphr=C_3$, and the directed star~$\kagraphr=S_3$ whose center has two out-neighbors. Process names are omitted, and instead are represented by the  colors of the circles ($\wcirc$, $\gcirc$, and $\bcirc$). The number of vertices in the protocol complexes depends on the information-flow graph. Let us focus first on process~$\wcirc$. A vertex $(\wcirc,v)$ in the input complex yields two vertices in the protocol complex for $C_3$, depending on the input value received from process~$\gcirc$. Instead, a vertex $(\wcirc,v)$ in the input complex yields a single vertex in the protocol complex for $S_3$ because, according to this information-flow graph, process~$\wcirc$ receives no inputs from other processes. On the other hand, every vertex $(\bcirc,v)$ in the input complex yields two vertices in both protocol complexes. This is because, in both information-flow graphs, $C_3$ and $S_3$, process~$\bcirc$ receives the input from process~$\wcirc$. Similarly, every vertex $(\gcirc,v)$ in the input complex yields two vertices in both protocol complexes, because, in both information-flow graphs, process~$\gcirc$ receives the input from another process, from process~$\bcirc$ in $C_3$ and from process~$\wcirc$ in~$S_3$. 

\medskip

An important property of scissor cuts is that they are not arbitrary, but preserve a lot of the structure of the input complex. In fact, they can be understood by considering only cuts occurring between adjacent facets. To get an intuition of this property, let us consider the input complex $\cI$ on Figure~\ref{fig:scisor}, and let $\sigma_0=\{(\wcirc,0),(\gcirc,0),(\bcirc,0)\}$. Let $\sigma_1,\sigma_2,\sigma_3$ be the three facets of $\cI$ of the form $\{(\wcirc,x),(\gcirc,y),(\bcirc,z)\}$ with $x+y+z =1 $, and let $\sigma_4,\sigma_5,\sigma_6$ be the three facets of $\cI$ of the form $\{(\wcirc,x),(\gcirc,y),(\bcirc,z)\}$ with $x+y+z = 2$. Note that, for $1\leq i \leq 3$, $\sigma_i$ shares an edge with $\sigma_0$, while, for $4\leq i \leq 6$,  $\sigma_i$ only shares a vertex with $\sigma_0$. In fact, we have $ \bigcup^6_{i=1} ( \sigma_0  \cap \sigma_i ) = \bigcup^3_{i=1} (\sigma_0\cap \sigma_i )$, which is the empty triangle formed by the edges and nodes of $\sigma_0$. Now, let us consider the protocol complexes for $C_3$ and for $S_3$. In both cases, the cuts preserve the latter equality, that is, we also have  $\bigcup^6_{i=1}\big(\Xi(\sigma_0) \cap  \Xi(\sigma_i) \big)  = \bigcup^3_{i=1} \big( \Xi(\sigma_0) \cap \Xi(\sigma_i) \big)$. This property is not coincidental, but systematically holds, as shown below. 

\begin{lemma}\label{lem:first-of-two}
Let $1\leq s\leq t$ be integers. Let $\sigma_0,\sigma_1,\dots,\sigma_t$ be different facets of $\cI$ such that, for every $i\in\{1,\dots,s\}$, $\sigma_i$ is sharing an $(n-2)$-face with~$\sigma_0$. 
If $\bigcup^t_{i=1} (\sigma_0 \cap \sigma_i ) = \bigcup^s_{i=1} (\sigma_0 \cap \sigma_i)$
then $\bigcup^t_{i=1} \big(\Xi(\sigma_0) \cap \Xi(\sigma_i) \big)  = \bigcup^s_{i=1} \big( \Xi(\sigma_0) \cap \Xi(\sigma_i) \big).$
\end{lemma}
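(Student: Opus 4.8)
### Proof Plan

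The plan is to work with the explicit combinatorial description of $\Xi$ given in Lemma~\ref{lem-protocol-complex} and in the Terminology paragraph, reducing the claimed equality about intersections in $\cP$ to a statement about \emph{names} and \emph{views}. Recall that for facets $\sigma_0,\sigma_i$ of $\cI$, a vertex $(p,w)$ lies in $\Xi(\sigma_0)\cap\Xi(\sigma_i)$ iff $p$ appears in both, and the view $w$ of $p$ computed from $\sigma_0$ equals the view of $p$ computed from $\sigma_i$; since every facet of $\cI$ contains all $n$ process names, this happens iff $\Xi(\sigma_0)(p)=\Xi(\sigma_i)(p)$, i.e., $p$ sees the same name-value pairs under both input assignments. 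Because the information-flow graph $\kagraphr$ is fixed, the view of $p$ depends only on the restriction of the input assignment to the in-neighborhood $N^-_r(p)=\{j : (j,p)\in E(\kagraphr)\}\cup\{p\}$. So I would define, for each facet $\sigma$ of $\cI$ and each process $p$, the ``local input'' $\sigma|_p$ as the restriction of $\sigma$ to $N^-_r(p)$, and observe: $(p,w)\in\Xi(\sigma_0)\cap\Xi(\sigma_i)$ iff $\sigma_0|_p=\sigma_i|_p$. Thus the $(n-1)$-dimensional facet-indexed picture collapses to a purely set-theoretic one, parametrized by $p\in[n]$.

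With this translation, the hypothesis $\bigcup_{i=1}^t(\sigma_0\cap\sigma_i)=\bigcup_{i=1}^s(\sigma_0\cap\sigma_i)$ says: for every vertex $(p,v)\in\sigma_0$, if some $\sigma_i$ with $i\le t$ also contains $(p,v)$ (equivalently, $\sigma_i$ agrees with $\sigma_0$ on $p$'s input value), then already some $\sigma_j$ with $j\le s$ agrees with $\sigma_0$ on $p$'s input value. Equivalently, writing $v(p)=v$ for the input value of $p$ in $\sigma_0$: the set of processes on which $\sigma_0$ agrees with \emph{some} $\sigma_i$, $i\le t$, equals the set of processes on which $\sigma_0$ agrees with some $\sigma_j$, $j\le s$. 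I would like to upgrade this from ``agreement on a single process'' to ``agreement on a whole in-neighborhood $N^-_r(p)$''. This is where the extra hypothesis — that $\sigma_1,\dots,\sigma_s$ each share an $(n-2)$-face with $\sigma_0$ — must be used: those $\sigma_i$ differ from $\sigma_0$ in the input value of exactly one process, call it $q_i$. The key structural claim is: if $\sigma_0|_p=\sigma_i|_p$ for some $i\le t$ (so $p$ hears the same thing under $\sigma_0$ and $\sigma_i$), then for every $j\in N^-_r(p)$ there is some $k\le s$ with $\sigma_0|_j = \sigma_0$-value at $j$ matched by $\sigma_k$ — more precisely, I want to show $(p,w)\in\Xi(\sigma_0)\cap\Xi(\sigma_i)$ forces $(p,w)\in\bigcup_{k\le s}(\Xi(\sigma_0)\cap\Xi(\sigma_k))$. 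The argument: $\sigma_0$ and $\sigma_i$ agreeing on all of $N^-_r(p)$ means $\sigma_0$ and $\sigma_i$ differ only on processes \emph{outside} $N^-_r(p)$; pick any such differing process $q$ (if there is none, $\sigma_0=\sigma_i$, impossible as they are distinct), and let $\sigma'$ be the facet obtained from $\sigma_0$ by flipping only the input of $q$ to its $\sigma_i$-value. Then $\sigma'$ shares an $(n-2)$-face with $\sigma_0$, and $(q,\cdot)\notin\sigma_0\cap\sigma'$ while $(p,\cdot)\in\sigma_0\cap\sigma'$ — so $q$ is a process where $\sigma_0$ disagrees with a facet sharing an $(n-2)$-face, but wait, $\sigma'$ need not be among $\sigma_1,\dots,\sigma_t$. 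To fix this I instead argue contrapositively at the level of the hypothesis applied to the vertex $(q,v(q))$: I must exhibit the needed membership directly, so the real content is that $N^-_r(p)\ne[n]$ in general, and the vertices of $\sigma_0$ on which agreement fails for \emph{all} $i\le t$ are exactly those on which it fails for all $i\le s$, and these must lie outside every relevant $N^-_r(\cdot)$.

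Let me restructure the core step cleanly. Fix $(p,w)\in\Xi(\sigma_0)\cap\Xi(\sigma_i)$ with $i\le t$; I must find $k\le s$ with $(p,w)\in\Xi(\sigma_0)\cap\Xi(\sigma_k)$, i.e., $\sigma_0|_p=\sigma_k|_p$. Since $\sigma_0\ne\sigma_i$ but they agree on $N^-_r(p)$, there is a process $q\notin N^-_r(p)$ with $v_{\sigma_0}(q)\ne v_{\sigma_i}(q)$; in particular $(q,v_{\sigma_0}(q))\in\sigma_0$ but $(q,v_{\sigma_0}(q))\notin\sigma_i$. Now apply the hypothesis. The hypothesis gives set equality of subcomplexes of $\sigma_0$, but it does not directly talk about $q$ unless some $\sigma_{i'}$, $i'\le t$, also avoids $(q,v_{\sigma_0}(q))$ — which $\sigma_i$ does. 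So $(q,v_{\sigma_0}(q))$ is \emph{not} in $\bigcup_{i'\le t}(\sigma_0\cap\sigma_{i'})$? No: $(q,v_{\sigma_0}(q))$ might still be in $\sigma_0\cap\sigma_{i'}$ for some \emph{other} $i'$. Hmm. So the vertex $(q,v_{\sigma_0}(q))$ is not automatically excluded. The genuinely correct move: consider instead the vertex-set $\sigma_0\setminus\big(\bigcup_{i'\le t}(\sigma_0\cap\sigma_{i'})\big) = \sigma_0\setminus\big(\bigcup_{k\le s}(\sigma_0\cap\sigma_k)\big)$ — these are the vertices of $\sigma_0$ present in \emph{no} neighbor, and the hypothesis says this set is the same whether computed over $[s]$ or $[t]$. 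Call a process \emph{private} if its $\sigma_0$-vertex lies in this set. I claim: $(p,w)\in\Xi(\sigma_0)\cap\Xi(\sigma_i)$ implies $N^-_r(p)$ contains no private process, and conversely if $N^-_r(p)$ contains no private process then there is already some $k\le s$ with $\sigma_0|_p=\sigma_k|_p$. The forward direction is immediate (agreement on $N^-_r(p)$ makes every $j\in N^-_r(p)$ non-private via $\sigma_i$). The converse — the main obstacle — requires a union/covering argument: for each $j\in N^-_r(p)$ choose $k_j\le s$ with $(j,v_{\sigma_0}(j))\in\sigma_{k_j}$; since each such $\sigma_{k_j}$ shares an $(n-2)$-face with $\sigma_0$, it agrees with $\sigma_0$ on \emph{all processes except one}, namely except $q_{k_j}\notin N^-_r(p)$ is \emph{false in general} — $q_{k_j}$ could be inside $N^-_r(p)$. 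The resolution is to note that the $k_j$ can be chosen so that $q_{k_j}=j$ is \emph{not} required; rather, among $\{\sigma_1,\dots,\sigma_s\}$ the ones witnessing non-privateness of the processes in $N^-_r(p)$ collectively must, by a pigeonhole/minimality argument, fix all of $N^-_r(p)$ simultaneously — and since $|N^-_r(p)|$ could be large while each $\sigma_k$ fixes all-but-one coordinate, I would argue: if some $j\in N^-_r(p)$ has $q_{k}=j$ for the chosen witness, that's fine, $\sigma_k$ then agrees with $\sigma_0$ off $j$... this still doesn't close. I expect the clean finish is actually an \textbf{induction on $t-s$}: peel off $\sigma_t$; if $\sigma_0\cap\sigma_t\subseteq\bigcup_{i<t}(\sigma_0\cap\sigma_i)$ we reduce $t$ by one keeping the hypothesis; otherwise some vertex $(p,v)\in\sigma_0\cap\sigma_t$ is covered only by $\sigma_t$ among $[t]$, contradicting the hypothesis unless it's covered within $[s]$ — and then one shows $\Xi(\sigma_0)\cap\Xi(\sigma_t)$ adds nothing new to $\bigcup_{i<t}\Xi(\sigma_0)\cap\Xi(\sigma_i)$ because any vertex $(p,w)\in\Xi(\sigma_0)\cap\Xi(\sigma_t)$ needs $\sigma_0,\sigma_t$ to agree on $N^-_r(p)\ni p$, and in particular to agree on $p$'s own value, so $(p,v_{\sigma_0}(p))\in\sigma_0\cap\sigma_t\subseteq\bigcup_{k\le s}(\sigma_0\cap\sigma_k)$, giving $k\le s$ with $(p,v_{\sigma_0}(p))\in\sigma_k$, and by the $(n-2)$-face hypothesis $\sigma_k$ agrees with $\sigma_0$ everywhere except one process $q_k\ne p$; if $q_k\notin N^-_r(p)$ then $\sigma_0|_p=\sigma_k|_p$ and we are done, while if $q_k\in N^-_r(p)$ we recurse on $q_k$ in place of $p$. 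This recursion terminates because $N^-_r(p)$ is finite and the ``bad coordinate'' $q_k$ strictly decreases the set of still-unmatched neighbors. This last descent is the technical heart and the step I expect to be the main obstacle to write carefully; everything before it is bookkeeping translating $\Xi$ into the local-view language.
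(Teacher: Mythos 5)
There is a genuine gap: you read the hypothesis $\bigcup_{i=1}^t(\sigma_0\cap\sigma_i)=\bigcup_{i=1}^s(\sigma_0\cap\sigma_i)$ as an equality of \emph{vertex sets} (``for every vertex $(p,v)\in\sigma_0$, if some $\sigma_i$ with $i\le t$ contains it, then some $\sigma_j$ with $j\le s$ does''), but in the paper this is an equality of \emph{simplicial complexes}, i.e.\ every \emph{simplex} on the left is a simplex on the right. These are not the same, and under your vertex-set reading the lemma is actually false. Concretely, take $n=3$ with binary inputs, $\sigma_0=(0,0,0)$, $\sigma_1=(1,0,0)$, $\sigma_2=(0,1,0)$, $\sigma_3=(0,0,1)$, and $s=2$, $t=3$. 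As vertex sets, $\bigcup_{i=1}^2(\sigma_0\cap\sigma_i)=\sigma_0=\bigcup_{i=1}^3(\sigma_0\cap\sigma_i)$, so your hypothesis holds; but as complexes they differ, since $\{(1,0),(2,0)\}$ is a simplex of the right-hand side only when $\sigma_3$ is included. And indeed the conclusion fails under this reading: if $\kagraphr$ has processes $1,2$ hearing only from each other and no one hearing from $3$, then $\Xi(\sigma_0)\cap\Xi(\sigma_3)$ is nonempty (processes $1$ and $2$ see $\{(1,0),(2,0)\}$ in both), while $\Xi(\sigma_0)\cap\Xi(\sigma_1)=\Xi(\sigma_0)\cap\Xi(\sigma_2)=\emptyset$.

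Once the hypothesis is read correctly, the proof is short, and your ``local view'' translation is already the right first step. The complex $\bigcup_{j=1}^s(\sigma_0\cap\sigma_j)$ is pure of dimension $n-2$ with facets among the $\sigma_0\cap\sigma_j$, $j\le s$. The simplex $\sigma_0\cap\sigma_i$ (for any $s<i\le t$) belongs to it, hence is a face of a \emph{single} facet $\sigma_0\cap\sigma_\ell$ for some $\ell\le s$. Now take $(p,w)\in\Xi(\sigma_0)\cap\Xi(\sigma_i)$. By your own observation, $w\subseteq\sigma_0\cap\sigma_i\subseteq\sigma_0\cap\sigma_\ell\subseteq\sigma_\ell$, so $p$'s view under $\sigma_\ell$ is also $w$, giving $(p,w)\in\Xi(\sigma_0)\cap\Xi(\sigma_\ell)$. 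That is all; one $\ell$ works uniformly for every vertex of $\Xi(\sigma_0)\cap\Xi(\sigma_i)$. Your proposed descent, by contrast, picks a possibly different $\sigma_k$ at each step, based only on a single coordinate, and never produces one facet $\sigma_k$ agreeing with $\sigma_0$ on all of $\name(w)$; moreover the termination claim (``the bad coordinate strictly decreases the set of still-unmatched neighbors'') is not justified, since the witnesses $q_k$ can repeat or jump. The missing idea is precisely that the hypothesis forces $\sigma_0\cap\sigma_i$ to sit inside \emph{one} of the $(n-2)$-faces $\sigma_0\cap\sigma_\ell$, which collapses the descent to a single step.
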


\begin{proof}
It is sufficient to show that, for every $s+1\leq i \leq t$, $\Xi(\sigma_0) \cap \Xi(\sigma_i) \subseteq \bigcup^s_{j=1} \big( \Xi(\sigma_0) \cap \Xi(\sigma_j) \big)$ whenever $\Xi(\sigma_0) \cap \Xi(\sigma_i) \neq \emptyset$. So, let us consider a facet $\sigma_i\in\cI$ such that $\Xi(\sigma_0) \cap \Xi(\sigma_i) \neq \emptyset$, and let $(p_j, w_j) \in \Xi(\sigma_0) \cap \Xi(\sigma_i)$. Let $p_k\in\name(w_j)$, i.e., $p_k$ is either an in-neighbor of $p_j$ in $\kagraphr$, or $p_k=p_j$. Let $x_k$ be the input of $p_k$, so $(p_k,x_k)\in w_j$. Since $(p_j, w_j) \in \Xi(\sigma_0) \cap \Xi(\sigma_i)$, it holds that $(p_k,x_k)\in \sigma_0 \cap \sigma_i$. By assumption, $\sigma_0 \cap \sigma_i \subseteq \bigcup^s_{j=1} (\sigma_0 \cap \sigma_j)$, so there exists $\ell\in\{1,\dots,s\}$ such that $\sigma_0 \cap \sigma_i \subseteq \sigma_0 \cap \sigma_\ell$. It follows that $(p_k,x_k)\in \sigma_0 \cap \sigma_\ell$. As a consequence,  $(p_k,x_k)$ is in the view of $p_j$ in $\Xi(\sigma_\ell)$. Therefore, $(p_j, w_j) \in \Xi(\sigma_0) \cap \Xi(\sigma_\ell)$. We conclude that $\Xi(\sigma_0) \cap \Xi(\sigma_i) \subseteq  \Xi(\sigma_0) \cap \Xi(\sigma_\ell)$, as desired. 
\end{proof}

\subsection{Topological Characterization of Task Solvability}

So far, we have proceeded in two parallel paths.
The first, \emph{operational} path,
was about algorithms in the \knowall\/ model, 
where information propagates between processes
according to some information-flow pattern $\kagraphr$ (cf. Section~\ref{sec:model}). 
The second, \emph{topological} path,
relates the inputs of processes defined by an input complex,
their views modeled in the protocol complex,
and their desired outputs, 
appearing in the output complex (cf. Section~\ref{subsec:topologicalcomputing}). 
The connection between these paths is established in the next lemma, which directly follows from the definitions. 

\begin{lemma}\label{theo:iffcondition4solvability}
A task $(I,O,F)$ is solvable in $r$ rounds  in the \knowall{} model $\kagraph$ if and only if, 
for the topological formulation $ (\cI,\cO,\Delta)$ of the task, 
there exists a chromatic simplicial map $\delta:\cP\rightarrow\cO$
satisfying $\delta(\Xi(\sigma))\in\Delta(\sigma)$  for every facet $\sigma\in\cI$, 
where $\cP$ is the protocol complex associated with $\kagraph$ and $\cI$ after $r$ rounds. 
\end{lemma}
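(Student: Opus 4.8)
\textbf{Proof plan for Lemma~\ref{theo:iffcondition4solvability}.}

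The plan is to unwind both sides of the claimed equivalence down to the same combinatorial object, using Lemma~\ref{lem:flooding-is-universal} to reduce to flooding protocols and Lemma~\ref{lem-protocol-complex} to identify $\cP$ and $\Xi$ explicitly. First I would prove the forward direction. Suppose $(I,O,F)$ is solvable in $r$ rounds; by Lemma~\ref{lem:flooding-is-universal} there is a flooding protocol with a decision function $f$ assigning to each process $p$ and each possible view $w$ of $p$ after $r$ rounds an output value $f(p,w)\in O$, such that for every input facet $\sigma=\{(1,v_1),\dots,(n,v_n)\}\in\cI$ the resulting output tuple lies in $F(v_1,\dots,v_n)$. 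Define $\delta$ on vertices of $\cP$ by $\delta(p,w)=(p,f(p,w))$. I would check this is well-defined and chromatic essentially by construction (it preserves the name coordinate), and that it is simplicial: by Lemma~\ref{lem-protocol-complex} every facet of $\cP$ is $\Xi(\sigma)$ for a unique facet $\sigma\in\cI$, and $\delta$ sends it to $\{(1,f(1,w_1)),\dots,(n,f(n,w_n))\}$, which is a facet of $\cO$ precisely because the output tuple is legal by $F$ for the input tuple of $\sigma$, hence lies in $\cO$ by the definition of the output complex; so $\delta$ maps facets to simplexes and therefore extends to a simplicial map. Finally $\delta(\Xi(\sigma))\in\Delta(\sigma)$ is exactly the restatement, via~\eqref{eq:defDelta}, of the condition that the output tuple lies in $F$ applied to the input tuple of $\sigma$.

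For the converse, suppose such a chromatic simplicial map $\delta:\cP\to\cO$ exists. I would read off a flooding-based decision function: for process $p$ with view $w$ (a possible view in some execution, hence $(p,w)$ is a vertex of $\cP$), set $f(p,w)$ to be the value coordinate of $\delta(p,w)$, which is well-defined since $\delta$ is chromatic. Running the flooding protocol for $r$ rounds on an input facet $\sigma$ produces, by Lemma~\ref{lem-protocol-complex}, exactly the views $w_1,\dots,w_n$ with $\Xi(\sigma)=\{(1,w_1),\dots,(n,w_n)\}$; applying $f$ coordinatewise gives the tuple whose associated simplex is $\delta(\Xi(\sigma))$, which by hypothesis lies in $\Delta(\sigma)$, i.e.\ by~\eqref{eq:defDelta} the output tuple is in $F$ of the input tuple of $\sigma$. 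Since this holds for every input facet, the protocol solves $(I,O,F)$ in $r$ rounds.

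The routine parts are the bookkeeping: that $\delta$ being simplicial is equivalent to it sending each facet of $\cP$ (which is pure $(n-1)$-dimensional by Lemma~\ref{lem-protocol-complex}) to a simplex, and that the chromatic condition matches the requirement that each process' output depends only on its own name and view. The only point that needs a little care — and which I expect to be the main (mild) obstacle — is making sure the translation between ``protocol'' and ``map'' is genuinely a bijection of data: one must observe that a vertex $(p,w)$ of $\cP$ records precisely ``process $p$ has view $w$,'' so that a decision function on views and a vertex map on $\cP$ are literally the same thing, and that $\Xi(\sigma)$ faithfully records the full execution on input $\sigma$ (which is where determinism, guaranteed by synchrony and absence of failures, is used). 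Once this identification is in place, both implications are immediate from the definitions, which is why the lemma is stated as following directly from them.
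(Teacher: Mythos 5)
Your proposal is correct and matches the paper's intent: the paper explicitly declines to give a proof, stating only that the lemma ``directly follows from the definitions,'' and your argument is exactly the natural unwinding of those definitions (via Lemma~\ref{lem:flooding-is-universal} to pass to flooding protocols, Lemma~\ref{lem-protocol-complex} to identify the vertices of $\cP$ with process-view pairs and the facets of $\cP$ with $\Xi$-images, and equation~\eqref{eq:defDelta} to translate legality under $F$ into membership in $\Delta(\sigma)$). The only point worth making a hair more explicit if writing this up in full is that checking $\delta$ sends facets of $\cP$ to simplices of $\cO$ suffices for simpliciality because $\cP$ is pure and simplicial complexes are closed under taking faces, and that the bijection $\Xi$ between facets of $\cI$ and facets of $\cP$ (which you rely on) holds because each process always knows its own input, so the full input tuple is recoverable from the view tuple.
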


The simplicial map $\delta:\cP\rightarrow\cO$ is called a \emph{decision map}. If $\delta(i,w_i)=(i,v'_i)$, then the corresponding algorithm specifies that process~$i$ with view~$w_i$ outputs $f(i,w_i)=v'_i$. 
 
\paragraph{Example.}

Let us consider Figure~\ref{fig:scisor} again. The protocol complex for $S_3$ is disconnected, while for $C_3$ it is 0-connected (i.e., path-connected). The presence of a universal node~$\wcirc$ (dominating all other nodes) in the information-flow graph~$S_3$ results in all processes becoming aware of the input of the process corresponding to that node.  Therefore, the protocol complex for $S_3$ is split into two subcomplexes, the one corresponding to process~$\wcirc$ with input~0, and the other corresponding to process~$\wcirc$ with input~1. Similarly, the protocol complex for the complete graph $K_3$ with bidirectional edges (not depicted in Figure~\ref{fig:scisor}), is entirely disconnected, i.e., composed of eight pairwise non-intersecting facets, because there is no uncertainty under the complete information-flow graph, as every process receives the input of every other process.  

Since the protocol complex for $S_3$ is disconnected, consensus is solvable in this graph. To see why, consider $\delta$ that maps every vertex $(p,0**)$ of the protocol complex to vertex $(p,0)$ of the output complex, and every vertex $(p,1**)$ of the protocol complex to vertex $(p,1)$ of the output complex. This is a chromatic simplicial map, and thus, by Lemma~\ref{theo:iffcondition4solvability} consensus is solvable. In contrast, there is no such mapping $\delta:\cP\to\cO$ for the protocol complex $\cP$ corresponding to~$C_3$, because $\cP$  is 0-connected. Let us consider the path $((\wcirc,1?1),(\gcirc,?01),(\bcirc,00?))$ in the protocol complex for $C_3$. Vertex $(\wcirc,1?1)$ must be mapped to vertex $(\wcirc,1)$ in the output complex because $(\wcirc,1?1)$ belongs to a facet with all processes having input value~1. Similarly, vertex $(\bcirc,00?)$ must be mapped to vertex $(\bcirc,0)$ because $(\wcirc,00?)$ belongs to a facet with all processes having input value~0. If a mapping $\delta$ maps $(\gcirc,?01)$ to $(\gcirc,1)$, then the simplex $\{(\gcirc,?01),(\bcirc,00?)\}$ is mapped to  $\{(\gcirc,1),(\bcirc,0)\}$, which is not a simplex of~$\cO$. The same occurs if $(\gcirc,?01)$ is mapped to $(\gcirc,0)$, as $\{(\wcirc,1),(\gcirc,0)\}$ is not a simplex of~$\cO$. Thus, there is no simplicial map~$\delta$, and, by Lemma~\ref{theo:iffcondition4solvability}, consensus is not solvable. 

\section{Connectivity and Domination}

In this section, we establish a connection between a given instance of the \knowall\/ model, the structure of the information-flow graph resulting from this instance, and the topology of the protocol complex induced by the information-flow graph.
In particular, we show that, assuming that the input complex $\cI$ is a pseudosphere (which is the case, e.g., for consensus and $k$-set agreement), if the information-flow graph has large domination number, then the protocol complex is highly connected. Later in the paper, we show that high connectivity is an obstacle to solving agreement tasks in the \knowall{} model, as was shown in the past for several other models of computation.

Recall that a \emph{dominating set} of a directed graph~$G$ is a set of nodes $D\subseteq V(G)$ such that, for every node~$v\in V(G)\setminus D$, there exists a node $u\in D$ such that $(u,v)\in E(G)$. The \emph{domination number} of a digraph~$G$, denoted $\gamma(G)$, is the minimum~$k$ such that $G$ has a dominating set of size~$k$. Also recall that, for $k\geq 1$, a complex is $k$-connected if for every $1 \leq k' \leq k$, any continuous map from the $k'$-dimensional sphere to a geometric realization of the complex can be extended to a continuous map from the $(k'+1)$-dimensional disk.

We now state and prove one of our main technical contributions. 

\begin{theorem} \label{thm:domination-implies-connectivity}
	Let $\kagraph$ be an instance of the \knowall{} model, $r$ be a positive integer, and $(\cI,\cO,\Delta)$ be a task. 
	If $\gamma(\kagraphr)> k$, then the protocol complex~$\cP$ associated with $\kagraph$ and $\cI$ after $r$ rounds is at least $(k-1)$-connected.  
\end{theorem}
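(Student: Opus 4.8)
The plan is to build the protocol complex $\cP$ one facet at a time and keep track of how each new facet is glued on. Since $\cI$ is a pseudosphere, its facets --- and, by Lemma~\ref{lem-protocol-complex}, those of $\cP$ --- are indexed by the functions $c\colon[n]\to I$: the facet $\Xi(c)$ has vertices $(i,w_i)$, where $w_i$ is the restriction of $c$ to the closed in-neighborhood $N^-[i]=\{i\}\cup\{j:(j,i)\in E(\kagraphr)\}$. I may assume $|I|\ge 2$ and $k\ge 1$, the remaining cases being trivial. First I would fix a linear order on $I$ with least element $0$, enumerate the functions $[n]\to I$ lexicographically as $c_1<\dots<c_N$ (so $c_1\equiv 0$), and let $\cP_m$ be the subcomplex of $\cP$ generated by $\Xi(c_1),\dots,\Xi(c_m)$. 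Since $\cP_1$ is a single simplex and $\cP_N=\cP$, it suffices to prove by induction on $m$ that $\cP_m$ is $(k-1)$-connected.

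For the inductive step I would write $\cP_m=\cP_{m-1}\cup\overline{\Xi(c_m)}$ and determine the attaching locus $L_m:=\cP_{m-1}\cap\overline{\Xi(c_m)}$. A face of $\overline{\Xi(c_m)}$ is recorded by the set $T\subseteq[n]$ of colors it uses, and it already lies in $\cP_{m-1}$ exactly when $c_m$ agrees with some lexicographically smaller function on $U_T:=\bigcup_{i\in T}N^-[i]$; since such a smaller function exists iff $c_m$ is nonzero at some coordinate outside $U_T$, this condition is equivalent to $\mathrm{supp}(c_m)\not\subseteq U_T$. Setting $S=\mathrm{supp}(c_m)$ (nonempty as $m\ge 2$) and, for $s\in[n]$, $M_s=\{i\in[n]:s\notin N^-[i]\}$ --- the set of nodes \emph{not} dominated by $s$ in $\kagraphr$ --- this identifies $L_m$ with the union of full simplices $\bigcup_{s\in S}\Delta^{M_s}$.

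It then remains to show $L_m$ is $(k-2)$-connected. Each $M_s$ is nonempty because $\{s\}$ is not a dominating set ($\gamma(\kagraphr)>k\ge 1$), and every intersection $\bigcap_{s\in J}\Delta^{M_s}=\Delta^{\bigcap_{s\in J}M_s}$ is a simplex or empty, so the Nerve Lemma gives $L_m\simeq\mathcal N(\{M_s\}_{s\in S})$. Here $\bigcap_{s\in J}M_s\ne\emptyset$ iff some node is dominated by no element of $J$, i.e.\ iff $J$ is not a dominating set of $\kagraphr$; since $\gamma(\kagraphr)>k$, this holds for every $J\subseteq S$ with $|J|\le k$, so the nerve contains the full $(k-1)$-skeleton of the simplex on vertex set $S$. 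A complex between that skeleton and the full simplex is $(k-2)$-connected, because the $(k-1)$-skeleton of a simplex is $(k-2)$-connected and attaching cells of dimension $\ge k$ leaves $\pi_i$ unchanged for $i\le k-2$. Finally I would invoke the standard gluing lemma: if $X=A\cup B$ with $A$, $B$ subcomplexes, $A$ being $a$-connected, $B$ being $b$-connected, and $A\cap B$ nonempty and $c$-connected, then $X$ is $\min\{a,b,c+1\}$-connected. Applied with $A=\cP_{m-1}$ ($(k-1)$-connected by induction), $B=\overline{\Xi(c_m)}$ (contractible), $A\cap B=L_m$ ($(k-2)$-connected, nonempty), it gives that $\cP_m$ is $(k-1)$-connected, closing the induction.

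The hard part, and the step most in need of care, is the identification $L_m=\bigcup_{s\in S}\Delta^{M_s}$: this is precisely where the lexicographic ordering earns its keep, converting the statement ``$c_m$ agrees with an earlier function on $U_T$'' into the purely combinatorial ``$\mathrm{supp}(c_m)\not\subseteq U_T$''. After that the argument is an assembly of standard tools (the Nerve Lemma, the connectivity of skeleta of simplices, and the $\min\{a,b,c+1\}$ gluing lemma), and the only thing to watch is the behavior of the low-dimensional cases --- $k=1$ (where ``$(k-2)$-connected'' just means nonempty) and $|S|\le k$ (where several of the complexes above are in fact contractible).
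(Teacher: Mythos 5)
Your proof is correct, and it takes a genuinely different route from the paper's, even though both are gluing arguments of the same flavor. The paper distills two abstract properties of the spreading map $\Xi$ --- Lemma~\ref{lem:first-of-two} (the petal structure of the pseudosphere is preserved under $\Xi$) and Lemma~\ref{lem:second-of-two} (intersections $\bigcap_i\Xi(\sigma_i)$ of images of adjacent facets have dimension at least $k-t$ when $\gamma(\kagraphr)>k$) --- and then proves a generic connectivity result (Lemma~\ref{lemma-shellability-connectivity}) for \emph{any} shellable complex carrying \emph{any} onto map satisfying properties P1 and P2, by a double induction on the connectivity level $\ell$ and the shelling length $m$, with the two-piece Nerve Lemma (Lemma~\ref{lemma-nerve}) applied at each inner step and the attaching locus analyzed by recursively restricting to a smaller shellable subproblem $(\cA',\cB',f')$ supported on the petals. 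You instead fix a concrete ordering (lexicographic on input assignments, with $0$ least) and compute the attaching locus $L_m$ explicitly and in closed form as $\bigcup_{s\in\mathrm{supp}(c_m)}\Delta^{M_s}$; the general Nerve Lemma for good covers then identifies the homotopy type of $L_m$ with a complex sandwiched between the $(k-1)$-skeleton of a simplex and the simplex itself, whose $(k-2)$-connectivity is immediate, and a single application of the $\min\{a,b,c+1\}$ gluing lemma closes the induction --- no inner recursion, no shellability of $\cP$, no P1/P2 abstraction. Your identification $L_m=\bigcup_{s\in S}\Delta^{M_s}$ (together with the observation $\bigcap_{s\in J}M_s\ne\emptyset\iff J$ is not a dominating set) is essentially the concrete, dual face of the paper's Lemma~\ref{lem:second-of-two}, and surfaces the role of the domination number more transparently. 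What each route buys: the paper's abstraction is reusable for other spreading maps and other shellable input complexes; your computation is tied to the pseudosphere structure and the specific definition of $\Xi$, but it is shorter, self-contained modulo two standard tools, and gives explicit combinatorial data ($S$ and the sets $M_s$) describing what the scissor cuts glue at each stage. The one point to state carefully in a polished write-up, which you flagged yourself, is the characterization that a face of $\Xi(c_m)$ with name-set $T$ already lies in $\cP_{m-1}$ iff $\mathrm{supp}(c_m)\not\subseteq U_T$; this does depend on the choice of lexicographic order with least element $0$, and is the step that makes the whole computation tractable.
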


The rest of this section is dedicated to the proof of Theorem~\ref{thm:domination-implies-connectivity}. This proof exemplifies the nature of the scissor cuts produced by $\Xi$ on $\cI$, resulting in $\cP$. 

Let $I$ be the set of input values of the task $(\cI,\cO,\Delta)$. 
If $|I| = 1$, then Theorem~\ref{thm:domination-implies-connectivity} holds as the protocol complex~$\cP$ is a single simplex of dimension $n-1$.
From this point on, we therefore assume that $|I| > 1$.

We begin by presenting another important property satisfied by the scissor cuts. It relates the ``depth of the cuts'' applied to $\cI$ with the connectivity of the resulting complex $\cP$. Later we will see that the domination number of the information-flow graph is precisely the ``depth of the cut''.
To get an intuition of the property, let us again consider Figure~\ref{fig:scisor}. Let $\sigma_0=\{(\wcirc,0),(\gcirc,0),(\bcirc,0)\}\in\cI$, and let $\sigma_1=\{(\wcirc,x),(\gcirc,y),(\bcirc,z)\}$ be one of the facets of $\cI$ with $x+y+z=1$. In the case of the protocol complex for $C_3$,  for every choice of $\sigma_1$, we observe that  $\Xi(\sigma_0)\cap \Xi(\sigma_1)$ is a single node, i.e., a simplex of dimension~0. On the other hand, in the case of the protocol complex for $S_3$, $\Xi(\sigma_0)\cap \Xi(\sigma_1)$ may either be an edge, or the empty set, depending on the choice of $\sigma_1$. The lemma below provides a lower bound on the dimension of $\Xi(\sigma_0)\cap \Xi(\sigma_1)$, as a function of the domination number of the information-flow graph.

\begin{lemma}\label{lem:second-of-two}
Let $t\geq 1$, and let $\sigma_0, \ldots, \sigma_t$ be a sequence of $t+1$ different facets of $\cI$ such that, for every $1\leq i\leq t$, $\sigma_i$ shares an $(n-2)$-face with $\sigma_0$. 
Then, for every $k$ such that $\gamma(\kagraphr)> k$, 
$
\bigcap^t_{i=0} \; \Xi(\sigma_i)
$
is of dimension at least $k-t$. 
\end{lemma}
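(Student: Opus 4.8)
The plan is to describe the intersection $\bigcap_{i=0}^{t}\Xi(\sigma_i)$ explicitly as a face of $\Xi(\sigma_0)$, and then to read off its dimension from the domination number of $\kagraphr$. First I would record that, since the facets of $\cI$ are exactly the sets $\{(1,v_1),\dots,(n,v_n)\}$ with $v_\ell\in I$, two facets share an $(n-2)$-face precisely when they differ in a single coordinate. So I would write $\sigma_0=\{(1,v_1),\dots,(n,v_n)\}$ and, for $1\le i\le t$, let $c_i\in[n]$ be the unique coordinate in which $\sigma_i$ differs from $\sigma_0$ (with some value $v'_{c_i}\ne v_{c_i}$ there). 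A subtlety to keep in mind is that the $c_i$ need not be distinct, so I would work with the set $C=\{c_1,\dots,c_t\}$, which satisfies $|C|\le t$.

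Next I would unwind the description of $\Xi$ from Lemma~\ref{lem-protocol-complex}. For each facet $\sigma_i$, the vertex of $\Xi(\sigma_i)$ carried by process $j$ records the input values of $j$ and of its in-neighbors in $\kagraphr$; writing $N^{-}[j]=\{j\}\cup\{\ell:(\ell,j)\in E(\kagraphr)\}$ for this closed in-neighborhood, the view of $j$ differs between $\sigma_0$ and $\sigma_i$ if and only if $c_i\in N^{-}[j]$. Consequently, the vertex of $\Xi(\sigma_0)$ carried by $j$ also belongs to $\Xi(\sigma_i)$ iff $c_i\notin N^{-}[j]$, and hence it belongs to all of $\Xi(\sigma_1),\dots,\Xi(\sigma_t)$ iff $N^{-}[j]\cap C=\emptyset$. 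I would then conclude that $\bigcap_{i=0}^{t}\Xi(\sigma_i)$ is exactly the face of $\Xi(\sigma_0)$ spanned by those of its vertices carried by the processes in $U=\{j\in[n]:N^{-}[j]\cap C=\emptyset\}$, i.e.\ by the processes that $C$ fails to dominate (in the closed sense) in $\kagraphr$. Its dimension is therefore $|U|-1$, with the usual convention that the empty simplex has dimension $-1$.

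The final and only non-routine step is to show $|U|\ge k-t+1$, which then yields $\dim\bigcap_{i=0}^t\Xi(\sigma_i)=|U|-1\ge k-t$. This is where the hypothesis $\gamma(\kagraphr)>k$ enters: if instead $|U|\le k-t$, then $C\cup U$ would be a dominating set of $\kagraphr$ --- every vertex not in $C\cup U$ lies outside $U$ and is therefore dominated by some vertex of $C$ --- of cardinality at most $|C|+|U|\le t+(k-t)=k$, contradicting $\gamma(\kagraphr)>k$. I would also note that the argument stays consistent in the degenerate regime: if $U=\emptyset$ then $C$ itself is dominating, so $t\ge|C|\ge\gamma(\kagraphr)>k$, and the bound $\dim\ge k-t$ holds automatically.

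The bulk of the work is bookkeeping --- tracking which process views survive when a single input coordinate is perturbed --- so the main obstacle I anticipate is purely a matter of care: handling the possible repetition among the coordinates $c_i$ via the set $C$, and correctly identifying the surviving vertices with the processes left undominated by $C$. Once that identification is in place, the passage from ``$C$ small and $U$ small'' to ``a dominating set smaller than $\gamma(\kagraphr)$'' is the short conceptual crux of the proof.
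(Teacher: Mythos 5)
Your proposal is correct and follows essentially the same argument as the paper: you identify the surviving processes $U=\name\bigl(\bigcap_i\Xi(\sigma_i)\bigr)$, observe that $U$ together with the differing coordinates $C$ must dominate $\kagraphr$, and bound $|U|$ from the hypothesis $\gamma(\kagraphr)>k$ and $|C|\le t$. The paper presents the same idea directly (showing $D=\name(S)\cup\{p_1,\dots,p_t\}$ is a dominating set and then counting) rather than by contradiction, but the dominating set and the cardinality bound are identical.
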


\begin{proof}
Consider the simplex $S=\bigcap^t_{i=0} \; \Xi(\sigma_i)$
(this is a simplex since each $\Xi(\sigma_i)$ is a simplex, and so is their intersection).
In order to show that $S$ is of dimension at least $k-t$, we can restrict our attention to $t\leq k$,
as the dimension of a simplex cannot go below $-1$.

For every $i$, $1 \leq i \leq t$, let $p_i$ be the process whose identity does not appear in $\sigma_0 \cap \sigma_i$, i.e., $p_i\notin\name(\sigma_0 \cap \sigma_i)$.
Since $\sigma_0$ and $\sigma_i$ share an $(n-2)$-face, $p_i$ is uniquely defined. 

First, we show that the set 
\[
D = \text{names}(S) \cup \{p_1,\ldots,p_t\}
\]
is a dominating set of $\kagraphr$.
Let $q$ be a process in $[n]\setminus D$ (if no such process exists, $D=[n]$, which is trivially a dominating set).
Since, in particular, $q\notin \text{names}(S)$, we get that $q$ has different views in the executions $\Xi(\sigma_0)$ and $\Xi(\sigma_i)$, for some $i\in\{1,\dots,t\}$. 
As $\sigma_0$ and $\sigma_i$ share an $(n-2)$-face, only the process $p_i$ is able to distinguish between the two corresponding input configurations.
Hence, there is a directed edge from $p_i$ to $q$ in $\kagraphr$.
Therefore, $D$ is a dominating set for $\kagraphr$. 

The fact that $D$ is a dominating set for $\kagraphr$  implies that 
\[
|S| + |\{p_1,\ldots,p_t\}| \geq |D|\geq k+1.
\] 
It follows that $|S| \geq k + 1 - |\{p_1,\ldots,p_t\}| \geq k + 1 - t $, from which we conclude that the dimension of $S$ is at least $k-t$. 
\end{proof}

Lemmas~\ref{lem:first-of-two} and~\ref{lem:second-of-two} are actually the key facts enabling us to prove that if the domination number of $\kagraphr$ is large, then the protocol complex after $r$ rounds has high connectivity.

Recall that the $d$-\emph{skeleton} $\skel^{d}\cK$ of a complex~$\cK$ is the complex composed of all the faces of $\cK$ of dimension at most $d$.
The proof also uses the concept of \emph{shellable} complexes, defined as follows.

\begin{definition} 
Let $\mathcal{K}$ be a pure complex of dimension~$d$. 
$\mathcal{K}$ is \emph{shellable} if there is an ordering 
$\phi_1, \ldots, \phi_r$ of its facets such that, for every $1 \leq t \leq r-1$, the complex 
$
\cK_t= \big( \bigcup^t_{i=1} \phi_i \big) \cap \phi_{t+1}
$
is a pure subcomplex of dimension $d-1$ of $\skel^{d-1}\phi_{t+1}$. 
Such a sequence $\phi_1, \ldots, \phi_r$  of facets is called a \emph{shelling order} of $\mathcal{K}$. 
\end{definition}

One simple example of a shellable complex is that of a complex defined by the boundary of a single simplex, or any subcomplex of such complex; in this case, any order of the facets is a shelling order, as formally stated next.

\begin{lemma}[Theorem 13.2.2 in {\cite{HerlihyKR13}}]
	\label{fact-shelling-boundary}
	Let $\sigma$ be a simplex of dimension $d$, and let $\mathcal{K}$ be a pure $(d-1)$-dimensional subcomplex of $\skel^{d-1}\sigma$.
	Then $\mathcal{K}$ is shellable, and any sequence of its facets is a shelling order of $\mathcal{K}$.
\end{lemma}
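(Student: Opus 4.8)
The plan is to unwind the definition of shellability using the very simple structure of $\skel^{d-1}\sigma$. First I would observe that, writing $v_0,\dots,v_d$ for the $d+1$ vertices of $\sigma$, the facets of $\skel^{d-1}\sigma$ are precisely the $(d-1)$-faces $\sigma_j := \sigma\setminus\{v_j\}$ of $\sigma$, for $j=0,\dots,d$. Since $\cK$ is a pure $(d-1)$-dimensional subcomplex of $\skel^{d-1}\sigma$, each of its facets is one of the $\sigma_j$; I fix an arbitrary enumeration $\phi_1,\dots,\phi_r$ of them, with $r\le d+1$.

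The key elementary fact is that for $j\neq j'$ one has $\sigma_j\cap\sigma_{j'}=\sigma\setminus\{v_j,v_{j'}\}$, a single $(d-2)$-face, which is a facet of $\skel^{d-2}\sigma_j$. I would then fix $1\le t\le r-1$ and compute $\cK_t=\big(\bigcup_{i=1}^t\phi_i\big)\cap\phi_{t+1}=\bigcup_{i=1}^t(\phi_i\cap\phi_{t+1})$; since the $\phi_i$ are pairwise distinct facets of $\skel^{d-1}\sigma$, each $\phi_i\cap\phi_{t+1}$ is a $(d-2)$-face of $\phi_{t+1}$. Hence $\cK_t$ is a nonempty (it contains $\phi_1\cap\phi_{t+1}$) union of facets of $\skel^{d-2}\phi_{t+1}$ together with all their faces --- in particular a genuine subcomplex, because unions and intersections of subcomplexes are subcomplexes --- so it is a pure $(d-2)$-dimensional subcomplex of $\skel^{d-2}\phi_{t+1}$, exactly as required by the definition of a shelling order (applied with $\cK$ of dimension $d-1$ in place of dimension $d$). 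As the enumeration $\phi_1,\dots,\phi_r$ was arbitrary, every order of the facets of $\cK$ is a shelling order, and in particular $\cK$ is shellable.

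I do not expect any real obstacle: the statement is essentially a bookkeeping exercise, once one notices that any two codimension-one faces of a simplex meet in a single codimension-two face, so the ``new'' part $\cK_t$ attached at each step is automatically pure of the right dimension regardless of the order. The only points deserving a word of care are (i)~checking that $\cK_t$ is genuinely a simplicial complex and not merely a set of faces, and (ii)~the degenerate low-dimensional cases $d\in\{0,1\}$, handled by the usual convention that the empty simplex has dimension $-1$. Alternatively, since this is Theorem~13.2.2 of \cite{HerlihyKR13}, one may simply invoke the book; the short argument above is included only for completeness.
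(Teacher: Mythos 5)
Your argument is correct. The paper does not prove this lemma but simply cites Theorem~13.2.2 of~\cite{HerlihyKR13}; your unwinding of the definition --- observing that any two distinct $(d-1)$-faces of a $d$-simplex meet in exactly one $(d-2)$-face, so that $\cK_t=\bigcup_{i=1}^t(\phi_i\cap\phi_{t+1})$ is automatically a nonempty pure $(d-2)$-dimensional subcomplex of $\skel^{d-2}\phi_{t+1}$ for any ordering --- is the standard direct proof and matches the source. The only point worth flagging is the convention needed for the boundary case $d=1$ (where $\cK_t$ is the empty simplex of dimension $-1$), which you already mention; the book's definitions accommodate it, so there is no gap.
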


Another typical example of a shellable complex is a pseudosphere.

\begin{lemma}[Theorem 13.3.6 in \cite{HerlihyKR13}]
\label{fact-pseudosphere-shellable}
The pseudosphere $\Psi(\{1,\ldots,n\},I)$ is shellable.
\end{lemma}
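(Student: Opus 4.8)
The plan is to exhibit an \emph{explicit} shelling order of the facets of $\Psi(\{1,\ldots,n\},I)$, built from the lexicographic order on value tuples. Recall that $\Psi(\{1,\ldots,n\},I)$ is pure of dimension $d=n-1$, so shellability is meaningful. Fix any total order $<$ on $I$ (we may assume $I$ is finite, which is the relevant case; if $|I|=1$ the complex is a single simplex and there is nothing to prove), and identify each facet $\phi=\{(1,v_1),\dots,(n,v_n)\}$ with the tuple $(v_1,\dots,v_n)\in I^n$. Let $\phi_1,\phi_2,\dots,\phi_r$, with $r=|I|^n$, be the facets listed in the lexicographic order on $I^n$ induced by $<$. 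I claim this is a shelling order.

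The core of the argument is to compute, for a fixed $t$ with $1\le t\le r-1$, the complex $\cK_t=\big(\bigcup_{i=1}^t\phi_i\big)\cap\phi_{t+1}$ explicitly. Writing $\phi_{t+1}=\{(1,w_1),\dots,(n,w_n)\}$, I would prove the set-theoretic identity
\[
\cK_t \;=\; \bigcup_{\substack{j\in[n]\\ w_j\ne\min I}}\big(\phi_{t+1}\setminus\{(j,w_j)\}\big),
\]
that is, $\cK_t$ is the union of exactly those $(n-2)$-faces of $\phi_{t+1}$ obtained by deleting a vertex whose value is not the $<$-minimum of $I$. For the inclusion $\supseteq$: if $w_j\ne\min I$, pick any $v_j'<w_j$; the facet with tuple $(w_1,\dots,w_{j-1},v_j',w_{j+1},\dots,w_n)$ is lexicographically smaller than $\phi_{t+1}$, hence appears among $\phi_1,\dots,\phi_t$, and its intersection with $\phi_{t+1}$ is precisely $\phi_{t+1}\setminus\{(j,w_j)\}$. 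For $\subseteq$: any $\tau\in\cK_t$ is a face of $\phi_{t+1}$ that also lies in some earlier facet $\phi_i$ with tuple $\vec v$; letting $m$ be the first coordinate where $\vec v$ and $\vec w$ differ, we have $v_m<w_m$, so $m\notin\name(\tau)$ (since $\tau$ forces agreement on its coordinates) and $w_m>v_m\ge\min I$, whence $\tau\subseteq\phi_{t+1}\setminus\{(m,w_m)\}$.

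Granting the identity, the conclusion is immediate: each $\phi_{t+1}\setminus\{(j,w_j)\}$ is an $(n-2)$-dimensional face of the $(n-1)$-simplex $\phi_{t+1}$, so $\cK_t$ is a union of $(n-2)$-simplices, hence a pure subcomplex of dimension $n-2$ of $\skel^{n-2}\phi_{t+1}=\skel^{d-1}\phi_{t+1}$; and since $t+1\ge 2$, the facet $\phi_{t+1}$ is not the lexicographically smallest tuple, so at least one coordinate $w_j$ exceeds $\min I$ and $\cK_t$ is nonempty. Thus $\phi_1,\dots,\phi_r$ is a shelling order and $\Psi(\{1,\ldots,n\},I)$ is shellable. (An alternative, more structural route is to note that $\Psi(\{1,\ldots,n\},I)=\Psi(\{1\},I)\ast\cdots\ast\Psi(\{n\},I)$ is an iterated join of $n$ discrete $0$-dimensional complexes, each trivially shellable, and to invoke that a join of shellable complexes is shellable; the direct argument above avoids this machinery and also makes the "scissor cut" intuition transparent.)

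The step I expect to require the most care is the set-theoretic description of $\cK_t$ — in particular the $\subseteq$ direction and the bookkeeping of which coordinate is "freed up" by going to a lexicographically earlier facet — together with the boundary conventions: that $\cK_t$ is never the empty complex for $t\ge1$, which is exactly why the lexicographically smallest facet must head the list, and the standing fact (Lemma~\ref{fact-shelling-boundary}) that unions of top-dimensional faces of a simplex's boundary are automatically pure. None of this is deep, but it is the place where a sloppy argument would break.
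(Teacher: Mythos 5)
The paper does not prove this lemma; it simply cites Theorem~13.3.6 of Herlihy, Kozlov, and Rajsbaum, and your lexicographic shelling is a correct, self-contained argument for it — both the set-theoretic description of $\cK_t$ as the union of the $(n-2)$-faces $\phi_{t+1}\setminus\{(j,w_j)\}$ with $w_j\neq\min I$, and the nonemptiness for $t\geq 1$, check out. One harmless slip: you attribute to Lemma~\ref{fact-shelling-boundary} the claim that a union of $(n-2)$-faces of a simplex is pure of dimension $n-2$, but that lemma asserts the shellability of a pure subcomplex of a simplex's boundary, not its purity; the purity you need is immediate from the definition of a pure complex and requires no citation.
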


To get an intuition of why a pseudosphere is shellable, let us consider the input complex displayed on Figure~\ref{fig:scisor}, which is the pseudosphere $\Psi(\{1,2,3\},\{0,1\})$. 
Set $\phi_1$ as the lower facet (triangle) $\{(\wcirc,0),(\gcirc,0),(\bcirc,0)\}$. 
Set $\phi_2$, $\phi_3$, and $\phi_4$ as the three facets $\{(\wcirc,0),(\gcirc,0),(\bcirc,1)\}$, $\{(\wcirc,0),(\gcirc,1),(\bcirc,0)\}$, and $\{(\wcirc,1),(\gcirc,0),(\bcirc,0)\}$, in arbitrary order. 
Then set $\phi_5$, $\phi_6$, and $\phi_7$ as the three facets $\{(\wcirc,0),(\gcirc,1),(\bcirc,1)\}$, $\{(\wcirc,1),(\gcirc,1),(\bcirc,0)\}$, and $\{(\wcirc,1),(\gcirc,0),(\bcirc,1)\}$, in arbitrary order. 
Finally, set $\phi_8$ as $\{(\wcirc,1),(\gcirc,1),(\bcirc,1)\}$. 
This is one of the many shelling orderings establishing that $\Psi(\{1,2,3\},\{0,1\})$ is shellable. However, note that picking $\phi_1=\{(\wcirc,0),(\gcirc,0),(\bcirc,0)\}$ and $\phi_2=\{(\wcirc,0),(\gcirc,1),(\bcirc,1)\}$ cannot be a part of any shelling order of $\Psi(\{1,2,3\},\{0,1\})$, neither can the choice $\phi_1=\{(\wcirc,0),(\gcirc,0),(\bcirc,0)\}$ and $\phi_2=\{(\wcirc,1),(\gcirc,1),(\bcirc,1)\}$. 
Lemma~\ref{fact-pseudosphere-shellable} implies that the input complex $\cI = \Psi(\{1,\ldots,n\},I)$ is a pure $(n-1)$-dimensional complex that is shellable. 

The lemma below is the core of the technical part of the proof of Theorem~\ref{thm:domination-implies-connectivity}. It can be thought as a finer version of Lemma 13.4.2 in~\cite{HerlihyKR13}.

\begin{lemma}
	\label{lemma-shellability-connectivity}
	Let $\mathcal{A}$ be a pure shellable complex of dimension $d$, and let $\mathcal{B}$ be any complex.
	Let  $f$ be an onto mapping from the facets of $\mathcal{A}$ to the facets of $\mathcal{B}$. Fix an integer $\ell\geq 0$ such that $f$ satisfies the following two properties: 
	\begin{description}
	\item[\small P1:] 
		For every two integers $s,t$ satisfying $1\leq s\leq t$, and every sequence $\sigma_0,\sigma_1,\dots,\sigma_t$ of $t+1$ different facets of $\cA$ where, for every $i\in[s]$, $\sigma_i$ shares a $(d-1)$-face with~$\sigma_0$ and $\bigcup^t_{i=1} (\sigma_0 \cap \sigma_i) = \bigcup^s_{i=1} (\sigma_0 \cap \sigma_i)$, it holds that 
		$
		  \bigcup^t_{i=1} \big (  f(\sigma_0) \cap f(\sigma_i) \big)  = \bigcup^s_{i=1} \big (f(\sigma_0) \cap f(\sigma_i) \big);
		$	
	\item[\small P2:]  
		For every $t\geq 0$, and for every sequence $\sigma_0,\sigma_1, \ldots, \sigma_t$ of $t+1$ different facets of $\cA$
		where each $\sigma_i$, $i\geq 1$, shares a $(d-1)$-face with $\sigma_0$, the complex
		$
		\bigcap^t_{i=0} \; f(\sigma_i)
		$
		is of dimension at least $\ell-t+1$. 
	\end{description}
	Then, $\mathcal{B}$ is $\ell$-connected.
\end{lemma}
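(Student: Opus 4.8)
The plan is to prove the statement by induction on the number of facets of $\cA$, using a fixed shelling order $\phi_1,\dots,\phi_m$ of $\cA$ guaranteed by shellability. For $1\le j\le m$, let $\cA_j$ be the subcomplex of $\cA$ generated by $\phi_1,\dots,\phi_j$, let $\cB_j=f(\cA_j)$ (the subcomplex of $\cB$ generated by the facets $f(\phi_1),\dots,f(\phi_j)$), and note $\cB_m=\cB$ since $f$ is onto. The inductive claim will be that each $\cB_j$ is $\ell$-connected. The base case $j=1$ is immediate, since $\cB_1$ is a single simplex, hence contractible, hence $\ell$-connected for every $\ell$. For the inductive step, I would pass from $\cB_j$ to $\cB_{j+1}=\cB_j\cup f(\phi_{j+1})$ and apply a gluing argument: if $\cB_j$ is $\ell$-connected, $f(\phi_{j+1})$ is a simplex (hence contractible), and the intersection $\cB_j\cap f(\phi_{j+1})$ is $(\ell-1)$-connected, then by a Mayer--Vietoris / van Kampen style argument (the standard ``Nerve-free'' gluing lemma used in this area, e.g.\ the argument behind Lemma~13.4.2 in~\cite{HerlihyKR13}) the union $\cB_{j+1}$ is $\ell$-connected. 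So the entire weight of the proof falls on understanding the intersection $\cB_j\cap f(\phi_{j+1})$.

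The heart of the argument is therefore to show: for every $j$, the complex $\cC_j := \cB_j\cap f(\phi_{j+1})$ is $(\ell-1)$-connected. Here is where shellability and properties P1, P2 enter. By the definition of shelling order, $\cA_j\cap\phi_{j+1}$ is a pure $(d-1)$-dimensional subcomplex of $\skel^{d-1}\phi_{j+1}$; concretely, it is a union of some of the $(d-1)$-faces of $\phi_{j+1}$, say the ones ``shared'' with facets $\sigma_1,\dots,\sigma_s$ among $\phi_1,\dots,\phi_j$ (with $\sigma_0:=\phi_{j+1}$). I want to transport this to $\cB$: I will argue that $\cC_j = \bigcup_{i=1}^{s}\big(f(\phi_{j+1})\cap f(\sigma_i)\big)$. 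The inclusion $\supseteq$ is clear. For $\subseteq$, a facet (or face) of $\cC_j$ lies in $f(\phi_{j+1})\cap f(\sigma)$ for some facet $\sigma\in\cA_j$; $\sigma$ need not share a $(d-1)$-face with $\phi_{j+1}$, but $\sigma\cap\phi_{j+1}\subseteq \cA_j\cap\phi_{j+1}=\bigcup_{i=1}^s(\phi_{j+1}\cap\sigma_i)$, and then P1 (applied with $\sigma_0=\phi_{j+1}$, the $\sigma_i$ as the $s$ adjacent facets, and $\sigma$ thrown in as one of the extra $\sigma_{s+1},\dots,\sigma_t$) forces $f(\phi_{j+1})\cap f(\sigma)\subseteq\bigcup_{i=1}^s(f(\phi_{j+1})\cap f(\sigma_i))$. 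Hence $\cC_j=\bigcup_{i=1}^{s}\big(f(\phi_{j+1})\cap f(\sigma_i)\big)$, a union of subsimplices of the single simplex $f(\phi_{j+1})$.

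Now $\cC_j$ is a union of faces of one simplex, so it is itself a pure complex of some dimension, and by Lemma~\ref{fact-shelling-boundary} any ordering of its facets is a shelling order; in particular $\cC_j$ is shellable. To conclude it is $(\ell-1)$-connected I would run a second, inner induction of the same shape: order the pieces $f(\phi_{j+1})\cap f(\sigma_1),\dots,f(\phi_{j+1})\cap f(\sigma_s)$ arbitrarily, let $\cC_j^{(r)}=\bigcup_{i=1}^r\big(f(\phi_{j+1})\cap f(\sigma_i)\big)$, and glue one piece at a time. Each piece $f(\phi_{j+1})\cap f(\sigma_i)$ is a simplex (contractible); and the intersection $\cC_j^{(r)}\cap\big(f(\phi_{j+1})\cap f(\sigma_{r+1})\big)=\bigcup_{i=1}^{r}\big(f(\phi_{j+1})\cap f(\sigma_i)\cap f(\sigma_{r+1})\big)$ — here P2 applied to the sequence $\phi_{j+1},\sigma_{r+1},\sigma_1,\dots,\sigma_r$ (all of $\sigma_1,\dots,\sigma_s$ share a $(d-1)$-face with $\phi_{j+1}$, and so does $\sigma_{r+1}$) tells us each term $f(\phi_{j+1})\cap f(\sigma_i)\cap f(\sigma_{r+1})$ has dimension $\ge \ell - 2 + 1 = \ell-1$ when only three facets are involved, but for the whole intersection $\bigcap_{i=0}^{t}f(\sigma_i)$ with $t$ facets we get dimension $\ge \ell-t+1$; I need to keep the bookkeeping so that the gluing-depth stays within the ``$\ell-t+1$'' budget. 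The cleanest way is to fold both inductions into a single statement proved by induction on $t$ simultaneously, which is presumably why Lemma~\ref{lemma-shellability-connectivity} is phrased with the shift ``$\ell-t+1$'' rather than just ``$\ell-1$''. This bookkeeping — making the two nested gluing inductions share one parameter $t$ so that P2's dimension bound is exactly what the gluing lemma consumes at each step — is the part I expect to be the main obstacle; the topological gluing lemma itself (contractible $\cup$ $\ell$-connected along $(\ell-1)$-connected $\Rightarrow$ $\ell$-connected) is standard and can be cited.

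\textbf{Main difficulty.} Beyond the bookkeeping above, the subtle point is that $f$ is only assumed onto \emph{on facets} and is \emph{not} assumed to be a carrier/simplicial map in any compatible sense, so $f(\cA_j)$ must be interpreted as ``the complex generated by the image facets'' and one must be careful that faces of $\cB_j$ that are not themselves images of faces of $\cA$ are still controlled — this is exactly what P1 buys, and getting the quantifier structure in the $\cC_j\subseteq\bigcup f(\phi_{j+1})\cap f(\sigma_i)$ step right (choosing the ``extra'' $\sigma_i$'s in P1 correctly) is where care is needed.
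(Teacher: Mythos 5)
Your plan matches the paper's proof in all its structural moves: induct along a shelling order, glue one facet image at a time via the Nerve Lemma (Lemma~\ref{lemma-nerve}), use P1 to show the intersection $\mathcal{C}_j=\big(\bigcup_{i\leq j}f(\phi_i)\big)\cap f(\phi_{j+1})$ equals the union over petals $\bigcup_{i\in P}\big(f(\phi_i)\cap f(\phi_{j+1})\big)$, and use P2 for dimension bounds. The part you flag as ``the main obstacle'' --- making the nested gluing bottom out --- is the genuine crux, and the device you suggest (folding into one induction on $t$) is not what works. The paper instead runs an \emph{outer} induction on $\ell$ with an inner induction on the length of the shelling. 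To show $\mathcal{C}_j$ is $(\ell-1)$-connected, it does not try to shell or glue on the $\mathcal{B}$-side; it builds a fresh instance of the \emph{same lemma} one level down: $\mathcal{A}'=\bigcup_{i\in P}(\phi_i\cap\phi_{j+1})$, a pure $(d-1)$-dimensional subcomplex of $\partial\phi_{j+1}$ (hence shellable by Lemma~\ref{fact-shelling-boundary}), $\mathcal{B}'=\mathcal{C}_j$, and $f'(\phi_i\cap\phi_{j+1})=f(\phi_i)\cap f(\phi_{j+1})$. It then \emph{verifies} that P1 and P2 descend to $(\mathcal{A}',\mathcal{B}',f')$ at $(d',\ell')=(d-1,\ell-1)$ --- the shift $\ell-t+1$ in P2 is exactly calibrated so that $\ell'-t'+1=\ell-(t'+1)+1$ matches after prepending $\phi_{j+1}$ to the sequence --- and invokes the induction hypothesis on $\ell$. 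This is the precise form of your ``second inner induction of the same shape,'' and it is what makes the recursion well-founded.

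Two concrete issues in your sketch. First, your claim that $\mathcal{C}_j$ is shellable ``by Lemma~\ref{fact-shelling-boundary}'' does not follow: that lemma applies to pure $(d-1)$-dimensional subcomplexes of $\skel^{d-1}\sigma$, but $\mathcal{C}_j\subseteq f(\phi_{j+1})$ need not be pure nor be contained in the boundary of $f(\phi_{j+1})$ (e.g., $f$ is only assumed onto, so two facets of $\mathcal{A}$ could map to the same facet of $\mathcal{B}$, in which case some $f(\sigma_i)\cap f(\phi_{j+1})$ is all of $f(\phi_{j+1})$). The paper never needs shellability on the $\mathcal{B}$-side; it uses it only for $\mathcal{A}'\subseteq\partial\phi_{j+1}$. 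Second, the verification that P1 holds for $(\mathcal{A}',\mathcal{B}',f')$ is not just bookkeeping: it requires showing, via a dimension count among distinct petals, that the ``extra'' facets in a P1-style sequence for $\mathcal{A}'$ coincide with the adjacent ones, i.e., $\{\phi_{i_1},\dots,\phi_{i_{t'}}\}=\{\phi_{i_1},\dots,\phi_{i_{s'}}\}$. Your sketch does not address this inheritance step at all, and it is where the work in the paper's inductive step actually lives.
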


Observe that the two properties assumed in the statement of Lemma~\ref{lemma-shellability-connectivity} hold for the scissor cuts, where $\cA=\cI$, $\cB=\cP$, and $f=\Xi$, as established in lemmas~\ref{lem:first-of-two} and~\ref{lem:second-of-two}. The proof of Lemma~\ref{lemma-shellability-connectivity} uses two known facts from combinatorial topology. First, a subcomplex of a boundary complex of a simplex is shellable (Lemma~\ref{fact-shelling-boundary} above). Second, in order to show that a complex is well connected, we can present a cover of it by two subcomplexes, such that they both have high connectivity, and so does their intersection. 
This latter fact is a simple corollary of the so-called \emph{Nerve Lemma}, and is stated next.

\begin{lemma}[Corollary 10.4.3 in \cite{HerlihyKR13}]
	\label{lemma-nerve}
	Let $\mathcal{K}$ and $\mathcal{L}$ be two $\ell$-connected complexes such that $\mathcal{K} \cap \mathcal{L}$ is $(\ell-1)$-connected.
	Then, $\mathcal{K} \cup \mathcal{L}$ is $\ell$-connected.
\end{lemma}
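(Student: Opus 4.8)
The plan is to reduce $\ell$-connectivity of $\mathcal{K}\cup\mathcal{L}$ to homology statements through the Hurewicz theorem, and to control homology with the Mayer--Vietoris sequence. Throughout I realize $\mathcal{K}$ and $\mathcal{L}$ as subcomplexes of the common complex $\mathcal{K}\cup\mathcal{L}$, so that $|\mathcal{K}\cup\mathcal{L}|=|\mathcal{K}|\cup|\mathcal{L}|$ and $|\mathcal{K}\cap\mathcal{L}|=|\mathcal{K}|\cap|\mathcal{L}|$; then $(|\mathcal{K}|,|\mathcal{L}|)$ is an excisive CW pair, so both the Mayer--Vietoris sequence in reduced homology and the van Kampen theorem apply to this decomposition.

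First I would settle $\pi_0$ and $\pi_1$. Since $\ell\ge 0$, $\mathcal{K}\cap\mathcal{L}$ is in particular $(-1)$-connected, i.e.\ nonempty; fixing a vertex $x$ in it, every point of $|\mathcal{K}\cup\mathcal{L}|$ is joined to $x$ inside $|\mathcal{K}|$ or $|\mathcal{L}|$, so $\mathcal{K}\cup\mathcal{L}$ is path-connected, and if $\ell=0$ we are done. Assume $\ell\ge 1$. Now $\mathcal{K}\cap\mathcal{L}$ is $0$-connected, hence path-connected, so van Kampen gives $\pi_1(\mathcal{K}\cup\mathcal{L})\cong\pi_1(\mathcal{K})*_{\pi_1(\mathcal{K}\cap\mathcal{L})}\pi_1(\mathcal{L})$; since $\mathcal{K}$ and $\mathcal{L}$ are $\ell$-connected with $\ell\ge1$, both $\pi_1(\mathcal{K})$ and $\pi_1(\mathcal{L})$ vanish, and an amalgamated product of trivial groups is trivial, so $\pi_1(\mathcal{K}\cup\mathcal{L})=0$. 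Thus $\mathcal{K}\cup\mathcal{L}$ is simply connected; if $\ell=1$ we are done.

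For $\ell\ge 2$ I would finish with Mayer--Vietoris and Hurewicz. By Hurewicz applied to $\mathcal{K}$ and $\mathcal{L}$ one has $\tilde H_i(\mathcal{K})=\tilde H_i(\mathcal{L})=0$ for $i\le\ell$, and applied to $\mathcal{K}\cap\mathcal{L}$ one has $\tilde H_i(\mathcal{K}\cap\mathcal{L})=0$ for $i\le\ell-1$. For each $k$ with $1\le k\le\ell$, the segment
\[
\tilde H_k(\mathcal{K})\oplus\tilde H_k(\mathcal{L})\longrightarrow \tilde H_k(\mathcal{K}\cup\mathcal{L})\longrightarrow \tilde H_{k-1}(\mathcal{K}\cap\mathcal{L})
\]
of the Mayer--Vietoris sequence has both ends zero (the left because $k\le\ell$, the right because $k-1\le\ell-1$), so $\tilde H_k(\mathcal{K}\cup\mathcal{L})=0$. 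Hence $\mathcal{K}\cup\mathcal{L}$ is simply connected with vanishing reduced homology through degree $\ell$, and the Hurewicz theorem (applied inductively) yields $\pi_k(\mathcal{K}\cup\mathcal{L})=0$ for all $k\le\ell$, which is precisely $\ell$-connectivity.

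The part demanding the most care --- rather than any deep obstacle --- is the topological bookkeeping at the outset: one must take $\mathcal{K}$ and $\mathcal{L}$ inside the same ambient complex so that $|\mathcal{K}\cap\mathcal{L}|$ really equals $|\mathcal{K}|\cap|\mathcal{L}|$, and then justify that the closed pair $(|\mathcal{K}|,|\mathcal{L}|)$ is excisive for Mayer--Vietoris and admissible for van Kampen (e.g.\ by thickening each closed subcomplex to an open neighborhood that deformation-retracts onto it, or by invoking the CW versions of these theorems). Once that is in place, the rest is a routine diagram chase.
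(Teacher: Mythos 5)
Your proof is correct. The paper itself gives no proof of this lemma: it is imported as Corollary 10.4.3 of Herlihy, Kozlov and Rajsbaum, where it arises as the two-element special case of the Nerve Lemma (the nerve of the cover $\{\mathcal{K},\mathcal{L}\}$ being a single $1$-simplex, hence contractible). Your argument instead proves the statement from scratch with the standard gluing toolkit --- van Kampen for $\pi_1$, Mayer--Vietoris for reduced homology, and Hurewicz in both directions --- which is essentially how the two-set case of the Nerve Lemma is established anyway. The trade-off is the expected one: the Nerve Lemma route generalizes immediately to covers by arbitrarily many subcomplexes, while your direct argument is self-contained and makes the role of each connectivity hypothesis transparent (the $(\ell-1)$-connectivity of $\mathcal{K}\cap\mathcal{L}$ enters exactly at the right-hand terms $\tilde H_{k-1}(\mathcal{K}\cap\mathcal{L})$ of the Mayer--Vietoris segments and at the path-connectedness hypothesis of van Kampen). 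Your attention to the bookkeeping --- realizing both complexes inside $\mathcal{K}\cup\mathcal{L}$ so that $|\mathcal{K}\cap\mathcal{L}|=|\mathcal{K}|\cap|\mathcal{L}|$, and justifying excisiveness for the closed CW pair --- is appropriate and correctly resolved. The only cases you leave implicit are $\ell\le -1$, where the claim is vacuous or reduces to non-emptiness of the union; this is harmless, since the paper only invokes the lemma with $\ell\ge 0$.
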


In order to prove Lemma~\ref{lemma-shellability-connectivity}, we need one more notation, of \emph{petals},
which is used only in the proof of this lemma.
Given a shelling order $\phi_1, \ldots, \phi_r$ of a shellable $d$-dimensional complex $\cK$, the complex
$\cK_t=\left(\bigcup^t_{i=1} \phi_i\right) \cap \phi_{t+1}=\bigcup^t_{i=1} (\phi_i \cap \phi_{t+1})$ is a subcomplex of the boundary complex of $\phi_{t+1}$, and as such, it can be represented as a union of the complexes induced
by some $(d-1)$-faces $\tau_1, \hdots, \tau_s$ of $\phi_{t+1}$. 
Moreover, each $\tau_j$, $1\leq j \leq s$, is a face of a facet $\phi_{i_j}$ of $\bigcup^t_{i=1} \phi_i$ for some $1\leq i_j\leq t$, i.e., $\phi_{t+1}$ and $\phi_{i_j}$ share the $(d-1)$-face $\tau_j$. In short, 
\[\cK_t= \bigcup^t_{i=1} (\phi_i  \cap \phi_{t+1}) = \bigcup^s_{j=1}  \tau_j = \bigcup^s_{j=1} ( \phi_{i_j}  \cap \phi_{t+1}).\]
We call the set $S=\{i_1,\dots,i_s\}$ the \emph{petals} of $\phi_{t+1}$. More precisely: 
\begin{definition}\label{def:petals}
		Given a shelling order $\phi_1, \ldots, \phi_r$ of a pure shellable complex $\cK$ of dimension~$d$, and an index $t\in\{1,\dots,r-1\}$, a \emph{petal of $\phi_{t+1}$} is a facet $\phi_i$, $1\leq i\leq t$, for which $\phi_i\cap\phi_{t+1}$ is of dimension $d-1$.
		A \emph{minimal covering set of petals of $\phi_{t+1}$} is a set $P\subseteq \{1,\dots,t\}$ of indices that have the following properties.
		\begin{enumerate}
			\item
			Covering: $\bigcup_{i\in P} ( \phi_{i}  \cap \phi_{t+1}) = \bigcup^t_{i=1} (\phi_i  \cap \phi_{t+1})$.
			\item 
			Petals: For each $i\in P$, $\phi_i\cap\phi_{t+1}$ is a petal.
			\item
			Minimal: For each $i,i'\in P$ such that  $i\neq i'$, $\phi_i\cap\phi_{t+1}\neq \phi_{i'}\cap\phi_{t+1}$.	
		\end{enumerate}
\end{definition}

We now have all the ingredients to prove Lemma~\ref{lemma-shellability-connectivity}.

\paragraph{Proof of Lemma~\ref{lemma-shellability-connectivity}.~}
	The proof is by induction on $\ell$. For each $\ell$, we will apply induction on the length $m$ of a shelling order of $\cA$.
	
	\medskip
	
	\noindent -- \; For $\ell=0$, we need to prove that $\cB$ is $0$-connected. We do so 
	by induction on the length of a shelling order of $\cA$.
	So, let us fix a shelling order $\phi_1, \ldots, \phi_m$ of $\cA$. 
	We have $\cB = \bigcup^m_{i=1} f(\phi_i)$ because $f$ is an onto mapping from the facets of $\cA$ to the facets of $\cB$.
	
	\begin{itemize} 
	\item The base case is $m=1$, that is, $\cB=f(\phi_1)$. In this case, $\cB$ is a simplex. Thus, it is $c$-connected for every $c$, and  in particular it is 0-connected. 
	\item
	For the inductive step, with $m\geq 2$, let us assume that properties P1 and P2 hold for $\cA,\cB,f$ and $\ell=0$, and that the lemma holds for $m-1$.
	Next, 
	we apply Lemma~\ref{lemma-nerve} in order to prove that 
	$\cB=\big (\bigcup^{m-1}_{i=1}f(\phi_i) \big) \cup f(\phi_m)$ is $0$-connected.
	To this end, we prove 
	that (1) $\bigcup^{m-1}_{i=1} f(\phi_i)$ is $0$-connected,
	(2) $f(\phi_m)$ is $0$-connected,
	and (3) their intersection $\big( \bigcup^{m-1}_{i=1} f(\phi_i)  \big) \cap f(\phi_m)$ is $(-1)$-connected.

	Consider the subcomplex $\bigcup^{m-1}_{i=1} \phi_i$ of $\cA$, the subcomplex $\bigcup^{m-1}_{i=1} f(\phi_i)$ of $\cB$, and the mapping $f$ between their facets.
	Note that since properties P1 and P2 hold for  $\cA,\cB,f$, and $\ell=0$, they also hold for the subcomplexes defined above. 
	Since the lemma holds for $m-1$, we conclude that $\bigcup^{m-1}_{i=1} f(\phi_i)$ is $0$-connected.
	We also have that $f(\phi_m)$ is $0$-connected, as it is a simplex.	
	
	For the intersection, let $\phi_j$, $1\leq j\leq m-1$, be a petal of $\phi_m$, i.e., $\phi_j\cap\phi_{m}$ is of dimension $d-1$. 
	Such a~$\phi_j$ exists since the sequence $\phi_1,\ldots,\phi_m$ is a shelling order.
	By Property P2 with $\sigma_0=\phi_m$, $\sigma_1=\phi_j$, and $t=1$, the complex 
	$f(\phi_j) \cap f(\phi_m)$ is of dimension at least~$0$, and, specifically, it is non-empty.
	This implies that 
	$\bigcup^{m-1}_{i=1} \big ( f(\phi_i)  \cap f(\phi_m)\big )$ is also non-empty, i.e., $(-1)$-connected, as desired.
	\end{itemize}
	
\noindent -- \; For the inductive step, from $\ell-1$ to $\ell$, where $\ell\geq 1$, let us assume that the statement of the theorem holds for $\ell-1$, and consider a shelling order $\phi_1, \ldots, \phi_m$ of $\cA$.
	Our aim is to show that ${\cB} = \bigcup^m_{i=1} f(\phi_i)$ is $\ell$-connected.
	As in the base case $\ell=0$, we proceed by induction on the length $m$  of the shelling order. 
	
	\begin{itemize} 
	\item 
	The base case is $m=1$.
	Again, since $f(\phi_1)$ is a simplex, it is $\ell$-connected.

	\item 
	For the inductive step, with $m\geq 2$, let us assume as before that properties P1 and P2 hold for $\cA,\cB,f$, and $\ell$, and that the lemma holds for $m-1$.
	We follow the same outline of proof, that is, we show that 
	(1) $\bigcup^{m-1}_{i=1} f(\phi_i)$ is $\ell$-connected,
	(2) $f(\phi_m)$ is $\ell$-connected,
	and (3) their intersection $\big( \bigcup^{m-1}_{i=1} f(\phi_i)  \big) \cap f(\phi_m)$ is $(\ell-1)$-connected.
	Lemma~\ref{lemma-nerve} then implies that 
	$\cB=\big (\bigcup^{m-1}_{i=1}f(\phi_i) \big) \cup f(\phi_m)$ is $\ell$-connected, as desired
	
	As in the case $\ell=0$, by considering the subcomplexes $\bigcup^{m-1}_{i=1} \phi_i$, $\bigcup^{m-1}_{i=1} f(\phi_i)$, and the mapping $f$ between their facets, and using the induction hypothesis for $\ell$ and $m-1$, we conclude that 
	$\bigcup^{m-1}_{i=1} f(\phi_i)$ is $\ell$-connected.
	We also have that $f(\phi_m)$ is $\ell$-connected, as it is a simplex.	
	
	We now show that $\left( \bigcup^{m-1}_{i=1} f(\phi_i) \right) \cap f(\phi_m)$ is $(\ell-1)$-connected.
	To this end, we define new simplicial complexes $\cA',\cB'$, and a mapping $f'$ between their facets.
	Let $P$ be a minimal covering set of petals of $\phi_m$. That is, 
	\[ 
	\bigcup^{m-1}_{i=1}( \phi_i  \cap \phi_m ) = 
	\bigcup_{i\in P} \left( \phi_i \cap \phi_m  \right),
	\]
	with each intersection $\phi_i \cap \phi_m$ on the right hand side ($i\in P$) being a distinct simplex of dimension $d-1$.
	By Property~P1, we have that  
	\[
	\bigcup^{m-1}_{i=1} \big (f(\phi_i)  \cap f(\phi_m) \big) = 
	\bigcup_{i\in P} \big( f(\phi_i) \cap f(\phi_m) \big).
	\]
	Let $\cA' = \bigcup_{i\in P} \left( \phi_i \cap \phi_m \right)$ and $\cB' = \bigcup_{i\in P} \left( f(\phi_i) \cap f(\phi_m) \right)$ be the right hand sides of the two above equalities. 
	Note that $\cA'$ is pure of dimension $d-1$ and is a subcomplex of the boundary complex of $\phi_m$, thus it is shellable by Lemma~\ref{fact-shelling-boundary}.
	By the last equality, it is sufficient to show that $\cB'$ is $(\ell-1)$-connected. 
	Consider the mapping $f'$ from the facets of $\cA'$ to those of $\cB'$
	defined by 
	\[f'(\phi_i \cap \phi_m) = f(\phi_i) \cap f(\phi_m)\]
	for every $i\in P$.
	This mapping is onto by the definition of $\cA'$ and $\cB'$.
	We now use Lemma~\ref{lemma-shellability-connectivity} for $\cA',\cB',f',d'=d-1$, and $\ell'=\ell-1$, which is applicable by the induction hypothesis.
	
	To show that Property~P1 holds for $\cA',\cB',f'$, and $d'$ (P1 does not depend on $\ell'$), 
	let us consider two integers $1\leq s'\leq t'$,
	and a sequence $\phi_{i_0} \cap \phi_m, \ldots, \phi_{i_{t'}} \cap \phi_m$
	of different facets of $\cA'$ where, for every $j\in[s']$, $\phi_{i_j} \cap \phi_m$ shares a $(d-2)$-face with~$\phi_{i_0} \cap \phi_m$, and 
	\[
	\bigcup^{t'}_{j=1} \big((\phi_{i_0} \cap \phi_m) \cap (\phi_{i_j} \cap \phi_m)\big) = \bigcup^{s'}_{j=1} \big((\phi_{i_0} \cap \phi_m) \cap (\phi_{i_j} \cap \phi_m)\big).
	\] 
	For each $j\geq1$, the fact that $i_0$ and $i_j$ are different petals of $\phi_m$ implies that $\phi_{i_0} \cap \phi_m\neq\phi_{i_j} \cap \phi_m$.
	Hence, the right hand side is a union of simplices of dimension exactly $d-2$, making it a pure complex of dimension $d-2$.

	We claim that $\{\phi_{i_1},\ldots,\phi_{i_{t'}}\}=\{\phi_{i_1},\ldots,\phi_{i_{s'}}\}$.
	Assume otherwise, i.e., there is a facet $\phi_{i'}$ with $i'\in\{i_1,\ldots,i_{t'}\}$ such that $\phi_{i'}\neq\phi_{i_j}$ for each $1\leq j\leq s'$.
	Since the complex in the right hand side is pure of dimension $d-2$, the equality implies that 
	$\phi_{i_0} \cap \phi_m \cap \phi_{i'}$ is a facet there, i.e., $\phi_{i_0} \cap \phi_m \cap \phi_{i'}=\phi_{i_0} \cap\phi_m \cap \phi_{i_j}$ for some $1\leq j\leq s'$.
	As $\phi_{i_0}, \phi_{i'}$, and $\phi_{i_j}$ are distinct petals of $\phi_m$, their three intersections with $\phi_m$, namely $\phi_{i_0}\cap\phi_m, \phi_{i'}\cap\phi_m$, and $\phi_{i_j}\cap\phi_m$, are distinct simplices of dimension $d-1$.
	The intersection of these three simplices must thus be of dimension at most $d-3$.
	However, the equality $\phi_{i_0} \cap \phi_m \cap \phi_{i'}=\phi_{i_0} \cap\phi_m \cap \phi_{i_j}$ implies that $\phi_{i_0} \cap \phi_m \cap \phi_{i'} \cap \phi_{i_j}=\phi_{i_0} \cap \phi_m \cap \phi_{i'}$, where the left hand side has dimension at most $d-3$ and the right hand side has dimension $d-2$, a contradiction.	
	Hence, $\{\phi_{i_1},\ldots,\phi_{i_{t'}}\}=\{\phi_{i_1},\ldots,\phi_{i_{s'}}\}$, as claimed.
	
	Applying $f$ completes the proof of P1. Indeed, we have 	
	\[
	\bigcup^{t'}_{j=1}f(\phi_{i_j}) = 
	\bigcup^{s'}_{j=1}f(\phi_{i_j}),
	\]
	hence 
	\[
	f(\phi_{i_0}) \cap f(\phi_m) \cap \bigcup^{t'}_{j=1}f(\phi_{i_j}) = 
	f(\phi_{i_0}) \cap f(\phi_m) \cap \bigcup^{s'}_{j=1}f(\phi_{i_j}),
	\]
	so 
	\[
	\bigcup^{t'}_{j=1}\big(f(\phi_{i_0}) \cap f(\phi_m) \cap f(\phi_{i_j}) \big)= 
	\bigcup^{s'}_{j=1}\big(f(\phi_{i_0}) \cap f(\phi_m) \cap f(\phi_{i_j}) \big),
	\]
	which implies	
	\[
	\bigcup^{t'}_{j=1} \big (  f'(\phi_{i_0} \cap \phi_m) \cap f'(\phi_{i_j} \cap \phi_m) \big) = \bigcup^{s'}_{j=1} \big (  f'(\phi_{i_0} \cap \phi_m) \cap f'(\phi_{i_j} \cap \phi_m) \big),
	\]
	completing the proof of Property P1.	
	
	To show that Property~P2 also holds for $\cA',\cB',f',d'$, and $\ell'$, let us fix $t'\geq 0$, and consider a sequence $\phi_{i_0} \cap \phi_m, \ldots, \phi_{i_{t'}} \cap \phi_m$ of $t'+1$ facets of $\cA'$, i.e., $i_j\in P$ for every $j=0,\dots,t'$.
	For each $1\leq j\leq t'$, the facet $\phi_{i_j} \cap \phi_m$ shares a $(d-2)$-face with $\phi_{i_0} \cap \phi_m$ (this is the assertion in P2, but in our specific case this is true for any choice of a sequence, since $\phi_{i_0}$ and $\phi_{i_j}$ are petals of $\phi_m$).
	We have 
	\[
	\bigcap^{t'}_{j=0} f'(\phi_{i_j}\cap \phi_m) = \bigcap^{t'}_{j=0} \big (f(\phi_{i_j}) \cap f(\phi_m)\big ) = \Big ( \bigcap^{t'}_{j=0} f(\phi_{i_j}) \Big ) \cap  f(\phi_m).
	\]
	To show that this intersection is of dimension at least $\ell'-t'+1$, we use the assumption that Property~P2 holds in the original setting, with $\cA,\cB,f,d$, and $\ell$. 
	We apply P2 with $t=t'+1$, and with the sequence $\phi_m, \phi_{i_0},\ldots,\phi_{i_{t'}}$ of $t+1$ facets of $\cA$, where $\phi_m$ is the facet intersecting all the other $t$ facets.
	Note that, for every $0\leq j \leq t'$, $\phi_{i_j}$ shares a $(d-1)$-face with~$\phi_m$ since $\phi_{i_j}$ is a petal of $\phi_m$.
	Property~P2 then assures that the above intersection is of dimension at least $\ell-t+1 =\ell'-t'+1$.
	As a consequence, Property~P2 also holds for $\cA',\cB',f',d'$, and $\ell'$.
	
	By induction, the theorem holds for $\cA',\cB',f',d'$, and $\ell-1$. 
	Therefore, $\cB'$ is $(\ell-1)$-connected. 
	It follows that $\bigcup^{m}_{i=1} f(\phi_i)$ is $\ell$-connected, which completes the induction step.
		\end{itemize} 
This completes the proof of Lemma~\ref{lemma-shellability-connectivity}. \qed

To complete the proof of Theorem~\ref{thm:domination-implies-connectivity}, observe that lemmas~\ref{lem:first-of-two} and~\ref{lem:second-of-two} guarantee that the two assumptions in the statement of Lemma~\ref{lemma-shellability-connectivity} hold for $\ell=k-1$, and  thus Lemma~\ref{lemma-shellability-connectivity} shows that $\cP$ is $(k-1)$-connected, as desired. 

\section{Applications to Agreement Tasks}

In this section, we show how to use topology to derive lower bounds and impossibility results in the context of distributed network computing, with implications to classical models such as the \local\/ model and dynamic networks. 
We start by studying the classical agreement task of consensus, and a relaxed problem called $k$-set agreement, where processes may decide on up to $k$ different values. We then move our attention to a different variant of consensus, \emph{approximate agreement}, where decision values must all reside in a small range.

\subsection{Consensus and Set-Agreement}
\label{subsec:appli2consensus}

Let $k\geq 1$. Recall that, in the $k$-set agreement task, the processes must agree on at most~$k$ of the input values. 
It is known that, in the context of asynchronous shared memory computing, the level of connectivity of the protocol complex is closely related to the
ability to solve $k$-set agreement (see, e.g.,~\cite{HerlihyR94,HerlihyR00,HerlihyS99}). 
We show that this also holds in the \knowall\/ model. 

To illustrate this  result, let us assume that the protocol complex $\cP$ is 0-connected (path connected), i.e., for every pair of vertices in $\cP$ there is a sequence of edges of $\cP$ forming a path connecting these two vertices. Then, consensus, i.e., 1-set agreement, cannot be solved. (This is, e.g., the case of the protocol complex for $C_3$ in Figure~\ref{fig:scisor}, while the protocol complex for $S_3$ is not 0-connected.)
To see this, assume consensus can be solved, and let $(p,w)$ be a vertex in the protocol complex representing an execution for a process $p$ that decides~0,
and $(p',w')$ a vertex where $p'$ decides~1. Such vertices exist because in the execution with all inputs equal to $b \in \{0,1\}$, processes must decide $b$. The protocol complex~$\cP$  is 0-connected, so there is a path in~$\cP$ connecting $(p,w)$ and $(p',w')$. Each vertex in this path has a binary decision value, and so, some edge along that path must have one endpoint deciding~0 and the other endpoint deciding~1. This edge is in a facet whose outputs contain both 0s and 1s, contradicting the specification of consensus. 
Theorem~\ref{theo:impossibility-set-agreement} below uses a generalization of the above argument to higher-dimensional connectivity, in order to study the time complexity of solving $k$-set agreement in instances of the \knowall{} model.

More specifically, Theorem~\ref{theo:impossibility-set-agreement} states that, for any instance $\kagraph$ of the \knowall{} model, if the information-flow graph $\kagraphr$ associated with $\kagraph$ after $r$ rounds has a domination number $\gamma(\kagraphr)$ larger than $k$, then $k$-set agreement is not solvable in $\kagraph$ in $r$ rounds. It is not hard to prove this statement when the number of input values is at least the number of processes. Indeed, assume, for the purpose of contradiction, that there exists an algorithm~$A$ solving $k$-set agreement in $r$ rounds in $\kagraph$ with $\gamma(\kagraphr)>k$. 
Consider an input assignment $in$, where all input values $in(p)$, $p\in[n]$, are distinct. Let $S$ be the set of output values produced by $A$, and let $P$ be the set of processes with input values in $S$. We have $|P|=|S|\leq k<\gamma(\kagraphr)$, hence there is a process $q\notin P$ which does not hear from any node in $P$, and produces some output value~$v$. Let $p\in P$ be the process with input~$v$, and consider another input assignment~$in'$, where all inputs are the same, excepted for the input of $p$ which is replaced by some~$v'\neq v$. Since there is no edge $(p,q)$ in $\kagraphr$, the process~$q$ does not distinguish~$in'$ from~$in$, and outputs~$v$ in~$in'$ as well. This output is incorrect, as $v$ has not been proposed in~$in'$, and therefore $k$-set agreement is not solvable in $\kagraph$ in $r$ rounds. This reasoning cannot be applied when the number of input values is less than the number of processes, and in particular when the number of input values is $k+1<n$ (e.g., for binary consensus among at least three processes). Theorem~\ref{theo:impossibility-set-agreement} says that $k$-set agreement remains unsolvable even in this case. 

\begin{theorem}
	\label{theo:impossibility-set-agreement}
	Let $\kagraph$ be an instance of the \knowall{} model, and $r\geq 0$.
	If $\gamma(\kagraphr)> k$, then $k$-set agreement is not solvable  in $r$ rounds in $\kagraph$.
\end{theorem}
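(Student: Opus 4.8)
The plan is to reduce the impossibility of $k$-set agreement to the connectivity result already established in Theorem~\ref{thm:domination-implies-connectivity}, generalizing the $0$-connectivity argument sketched above for consensus to all dimensions up to $k-1$. Concretely, suppose for contradiction that $k$-set agreement is solvable in $r$ rounds in $\kagraph$ with $\gamma(\kagraphr)>k$. By Theorem~\ref{thm:domination-implies-connectivity}, the protocol complex $\cP$ associated with $\kagraph$ and the input complex $\cI$ of $k$-set agreement (which is a pseudosphere) is $(k-1)$-connected. By Lemma~\ref{theo:iffcondition4solvability}, there is a chromatic simplicial decision map $\delta:\cP\to\cO$ compatible with $\Delta$. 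The goal is to show that such a $\delta$ cannot exist, by exhibiting a ``sphere'' inside $\cP$ that $\delta$ cannot extend over a disk.

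First I would set up the input side. Fix $k+1$ distinct input values $0,1,\dots,k$ from $I$ (this is where $|I|>1$ is used, together with the fact that the set-agreement task is defined for any input set of size at least $k+1$; if $|I|<k+1$ the theorem may need the simpler direct argument, but the natural statement assumes enough values). Consider the subcomplex of $\cI$ whose facets are all $\{(1,v_1),\dots,(n,v_n)\}$ with every $v_i\in\{0,\dots,k\}$ and with all $k+1$ values appearing; more useful is to look at the restriction of the pseudosphere to these $k+1$ values, which is topologically a sphere of dimension $k-1$ after an appropriate ``corner'' identification — specifically, one wants a subcomplex $\cS\subseteq\cI$ homeomorphic to $S^{k-1}$ whose image $\Xi(\cS)\subseteq\cP$ is still a $(k-1)$-sphere, and on which $\delta$ is forced to behave like the identity on the $k+1$ ``monochromatic corners''. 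The cleanest route is: the input complex of $k$-set agreement on values $\{0,\dots,k\}$ contains, as a subcomplex, (a subdivision of) the boundary of a $k$-simplex, with the $i$-th vertex of that $k$-simplex corresponding to the monochromatic input ``everyone has value $i$''. Set-agreement's carrier map $\Delta$ forces $\delta(\Xi(\text{monochromatic-}i\text{ facet}))$ to be the monochromatic output facet for value $i$, and more generally forces the decisions on $\Xi(\sigma)$ to use only values present in $\val(\sigma)$.

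Next I would run the parity/degree argument. Compose: $S^{k-1}\simeq\partial\Delta^k \hookrightarrow \cI \xrightarrow{\Xi} \cP \xrightarrow{\delta} \cO$, and then post-compose with a simplicial collapse $\cO\to\partial\Delta^k$ sending the output vertex $(i,v)$ to the abstract vertex $v$ (well-defined since $\cO$ for $k$-set agreement on $\{0,\dots,k\}$ only contains facets whose value-set has size $\le k$, i.e.\ misses at least one value, so this maps into the boundary). On the monochromatic corners this composite is the identity, hence it is a degree-one map $S^{k-1}\to S^{k-1}$, so in particular it is not null-homotopic. But $(k-1)$-connectivity of $\cP$ means the inclusion-then-$\Xi$ map $S^{k-1}\to |\cP|$ extends to the disk $D^k$; pushing that extension through $\delta$ and the collapse would give a null-homotopy of a degree-one map, a contradiction. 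Therefore no decision map exists, and $k$-set agreement is unsolvable.

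The main obstacle I expect is the bookkeeping of the first step: constructing the sphere $\cS\subseteq\cI$ correctly and verifying that $\Xi(\cS)$ is genuinely $(k-1)$-connected/spherical rather than something $\Xi$ has cut apart — though here Theorem~\ref{thm:domination-implies-connectivity} applied to the whole of $\cP$ does the heavy lifting, since we only need $\cP$ itself to be $(k-1)$-connected and then use the carrier-map constraints of $\Delta$ to pin down $\delta$ on corners. A subtler point is making precise that the monochromatic facets of $\cI$ really do get mapped by $\Xi$ to facets on which $\delta$ is forced (they correspond to views in which a process sees only copies of one value, so the decision must be that value), and that these images are pairwise disjoint in $\cP$ — which follows because the complete/monochromatic configurations leave no uncertainty. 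Once these are in place, the degree argument is standard algebraic topology, essentially the same as the classical proof that $k$-set agreement reduces to $(k-1)$-connectivity in Herlihy--Shavit style arguments; I would cite the analogous statement (e.g.\ the impossibility lemma in~\cite{HerlihyKR13}) for the final degree-theoretic contradiction rather than reprove it.
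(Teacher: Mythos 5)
Your high-level strategy is sound and is essentially the degree-theoretic phrasing of the same obstruction the paper exploits (the paper uses Sperner's Lemma via simplicial approximation; degree theory on the boundary of a $k$-simplex is the standard equivalent). But there is a genuine gap in the middle step, and it is not merely ``bookkeeping'': the plan of building a sphere $\cS \subseteq \cI$ and pushing it forward through $\Xi$ does not work, because $\Xi$ is a \emph{spreading} map, not a simplicial or continuous map. Indeed the whole point of scissor cuts is that $\Xi$ can take two facets sharing a high-dimensional face in $\cI$ to facets sharing only a low-dimensional face (or nothing) in $\cP$. So $\Xi(\cS)$ need not be a sphere, nor even connected, and ``$(k-1)$-connectivity of $\cP$'' provides no control over what $\Xi$ does to a fixed subcomplex of $\cI$. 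The assertion that ``Theorem~\ref{thm:domination-implies-connectivity} applied to the whole of $\cP$ does the heavy lifting'' is precisely where the argument breaks.

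The correct repair is to build the map $S^{k-1}\to|\cP|$ \emph{inside} $\cP$, face by face of $\partial\Delta^k$: send the vertex $i$ into the monochromatic-$i$ facet $\Xi(\sigma_i)$ of $\cP$, then extend over the edge $\{i,j\}$ inside $\cP[\{i,j\}]$, then over the $2$-faces inside $\cP[J]$ for $|J|=3$, and so on. For this extension to exist at the $j$-th stage you need $\cP[J]$ to be $(j-1)$-connected for every $J$ with $|J|=j+1$ — i.e., you need connectivity of the \emph{restricted} protocol complexes $\cP[J]$ for all non-empty $J\subseteq I$, not just of $\cP$ itself. That is exactly the extra claim the paper isolates and proves (by re-applying Theorem~\ref{thm:domination-implies-connectivity} to the pseudosphere $\Psi([n],J)$ and its protocol complex) before invoking simplicial approximation. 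Once you add this ingredient, your degree argument goes through: the map built this way is a Sperner-type map into $\partial\sigma_I$ of degree one, and a null-homotopy in $|\cP|$ (which $(k-1)$-connectivity of $\cP$ provides) would push forward to a null-homotopy of a degree-one self-map of $S^{k-1}$, a contradiction. Also note a minor point you flagged correctly: the images $\Xi(\sigma_i)$ of the monochromatic facets are indeed pairwise disjoint and forced by the carrier map, but what forces the face structure on the rest of the map is the constraint $\delta(\cP[J])\subseteq\{\text{outputs with values in }J\}$, which is again about the $\cP[J]$'s — another sign that the subcomplex-level connectivity, not just the global one, is what must be established.
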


The proof of Theorem~\ref{theo:impossibility-set-agreement} uses Sperner's Lemma, which essentially states that every Sperner coloring with $d+1$ colors of a subdivision of a $d$-dimensional simplex contains a cell of that subdivision whose vertices are colored by $d+1$ distinct colors. The proof is also based on the notion of \emph{simplicial approximation} (see Chapter 3.7 in~\cite{HerlihyKR13}).

\begin{definition}
 Let $\cA$ and $\cB$ be simplicial complexes. 
 A carrier map $\chi: \cA \to 2^\cB$ has a \emph{simplicial approximation} if there exists a subdivision $\Div \cA$ of $\cA$, and a simplicial map ${g:\Div \cA \to \cB}$ such that, for every simplex $\sigma$ of $\cA$,  $g(\Div \sigma)$ is a subcomplex of $\chi(\sigma)$.  
\end{definition}

A crucial point is that high-connectivity implies the existence of simplicial approximation. 

\begin{lemma}[Theorem 3.7.7(2) in~\cite{HerlihyKR13}] \label{lem:simplicial-approx}
 Let $\cA$ and $\cB$ be simplicial complexes. Let ${\chi: \cA \to 2^\cB}$  be a carrier map such that, for every $d$-dimensional simplex $\sigma$ of $\cA$, the subcomplex $\chi(\sigma)$ is $(d-1)$-connected. Then $\chi$ has a simplicial approximation $(\Div,g)$.
\end{lemma}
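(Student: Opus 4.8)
The plan is to build the subdivision $\Div\cA$ and the simplicial map $g:\Div\cA\to\cB$ by induction on dimension, one skeleton at a time, maintaining throughout the invariant that for every simplex $\rho$ already processed, $g(\Div\rho)$ is a subcomplex of $\chi(\rho)$; once the invariant holds for all simplices, the pair $(\Div,g)$ is by definition a simplicial approximation of $\chi$. For the $0$-skeleton, a vertex $v$ is a $0$-simplex, so $\chi(v)$ is $(-1)$-connected, i.e.\ non-empty; we leave vertices unsubdivided and let $g(v)$ be an arbitrary vertex of $\chi(v)$.

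For the inductive step, assume $\Div$ and a simplicial $g$ satisfying the invariant have been defined on $\skel^{d-1}\cA$, and fix a $d$-simplex $\sigma$ of $\cA$. Its boundary $\partial\sigma$ lies in $\skel^{d-1}\cA$, so $\Div\partial\sigma$ and $g$ on it are already available; since $\chi$ is a carrier map, $g$ sends $\Div\partial\sigma$ into $\bigcup_{\tau\subsetneq\sigma}\chi(\tau)\subseteq\chi(\sigma)$. As $|\Div\partial\sigma|=|\partial\sigma|$ is a $(d-1)$-sphere, $|g|$ restricts to a continuous map $S^{d-1}\to|\chi(\sigma)|$. This is the only place the hypothesis is used: since $\chi(\sigma)$ is $(d-1)$-connected, this map extends to a continuous map $F_\sigma:|\sigma|\to|\chi(\sigma)|$ of the $d$-disk (for $d=0$ the extension is the chosen vertex, and for $d=1$ it is supplied by path-connectedness; for $d\geq 2$ it is exactly the defining property of $(d-1)$-connectivity).

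It then remains to straighten $F_\sigma$ into a simplicial map without moving the boundary. Triangulate $|\sigma|$ as the cone over $\Div\partial\sigma$ from an interior point: this complex contains $\Div\partial\sigma$ as a subcomplex, realizes $|\sigma|$, and $F_\sigma$ is continuous on it and already simplicial on $\Div\partial\sigma$. Invoking the relative form of the Simplicial Approximation Theorem yields a subdivision $\Div\sigma$ that refines only the interior of $\sigma$, together with a simplicial approximation $g_\sigma:\Div\sigma\to\chi(\sigma)$ of $F_\sigma$ agreeing with $g$ on $\Div\partial\sigma$. Since distinct $d$-simplices of $\cA$ meet only in faces of dimension at most $d-1$, which have already been subdivided and are left untouched, the maps $g_\sigma$ glue with each other and with the previously defined $g$ into a well-defined simplicial map on $\Div(\skel^d\cA)$; the invariant is preserved because $g(\Div\sigma)\subseteq\chi(\sigma)$ by construction while $g(\Div\tau)\subseteq\chi(\tau)$ for proper faces $\tau$ by induction. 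Iterating over $d$ (finitely many steps when $\cA$ is finite-dimensional, and otherwise taking the coherent union over all skeleta) produces the desired $(\Div,g)$.

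The main obstacle is the step that passes from $F_\sigma$ to $g_\sigma$: one must apply the Simplicial Approximation Theorem \emph{relative} to the already-subdivided and already-mapped boundary $\Div\partial\sigma$, so that the subdivision is confined to the interior of $\sigma$ and the independent choices made for different $d$-simplices remain mutually compatible. A secondary point requiring care is the correct reading of ``$(d-1)$-connected'' at the low values $d=0$ and $d=1$, where it must be interpreted as ``non-empty'' and ``path-connected'' respectively in order to supply the needed extensions over the $0$- and $1$-disks.
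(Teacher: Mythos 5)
Your argument is correct and is essentially the standard proof of this result: the paper itself gives no proof, citing Theorem 3.7.7(2) of Herlihy--Kozlov--Rajsbaum, whose proof is exactly this skeleton-by-skeleton induction (non-emptiness of $\chi(v)$ for vertices, then using $(d-1)$-connectivity of $\chi(\sigma)$ to extend the already-constructed map on $|\Div\partial\sigma|\cong S^{d-1}$ over the disk, followed by relative simplicial approximation keeping the boundary fixed). You also correctly identify the two delicate points --- the need for the \emph{relative} form of simplicial approximation so that the subdivisions of distinct $d$-simplices glue, and the low-dimensional readings of $(-1)$- and $0$-connectivity --- so nothing is missing.
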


\paragraph{Proof of Theorem~\ref{theo:impossibility-set-agreement}. }

By Theorem~\ref{thm:domination-implies-connectivity}, if $\gamma(\kagraphr)> k$ then the protocol complex~$\cP$ is at least $(k-1)$-connected. Thus, to establish Theorem~\ref{theo:impossibility-set-agreement} it is sufficient to show that if the protocol complex~$\cP$ for $\kagraphr$ is at least $(k-1)$-connected, then $k$-set agreement is not solvable in $r$ rounds in $\kagraph$. 
In fact, we show something slightly different: 
we first apply Theorem~\ref{thm:domination-implies-connectivity} to show that for every non-empty set $J$ of input values, the subcomplex of $\cP$ induced by $J$ is at least $(k-1)$-connected, and then prove that if all these subcomplexes are  $(k-1)$-connected then $k$-set agreement is not solvable.

Let us fix the information-flow graph $\kagraphr$ with domination number at least $k+1$. Let $I$ be a set of at least $k+1$ distinct values, and let $\cI$ denote the corresponding input complex. That is, $\cI$ is  the pseudosphere $\Psi(\{p_1,\ldots,p_n\},I)$, and $\cP$ denotes the protocol complex for the input complex~$\cI$ and the graph~$\kagraphr$. 
For any subset $J \subseteq I$, we are interested in the subcomplex of $\cP$ that arises when processes are given only inputs from $J$. 
Let $\cP[J]$ denote this subcomplex. 
The following claim strengthens the statement of Theorem~\ref{thm:domination-implies-connectivity}. 

\begin{claim}  \label{claim:sub-protocol-complex-are-connected}
	If $\kagraphr$  has domination number at least $k+1$, 
	then, for every non-empty $J \subseteq I$, $\cP[J]$  is at least $(k-1)$-connected.
\end{claim}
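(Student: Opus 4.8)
The plan is to reduce Claim~\ref{claim:sub-protocol-complex-are-connected} to Theorem~\ref{thm:domination-implies-connectivity} by observing that $\cP[J]$ is itself a protocol complex for the same instance $\kagraph$, but with a different input complex. Specifically, for a nonempty $J\subseteq I$, the subcomplex of $\cI=\Psi(\{p_1,\dots,p_n\},I)$ obtained by restricting each process to inputs from $J$ is exactly the pseudosphere $\cI_J=\Psi(\{p_1,\dots,p_n\},J)$. By definition of the information-flow graph and the scissor cut map $\Xi$ (Lemma~\ref{lem-protocol-complex}), the protocol complex associated with $\kagraph$ and $\cI_J$ after $r$ rounds is precisely $\cP[J]$: a facet of $\cP[J]$ is $\{(1,w_1),\dots,(n,w_n)\}$ with $w_i=\{(j,v_j):i=j\text{ or }(j,i)\in E(\kagraphr)\}$ for some facet $\{(1,v_1),\dots,(n,v_n)\}$ of $\cI$ with all $v_j\in J$, and this is exactly the condition defining the protocol complex for input complex $\cI_J$.

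Next I would invoke Theorem~\ref{thm:domination-implies-connectivity} directly. That theorem applies to \emph{any} task $(\cI,\cO,\Delta)$ whose input complex is used to build the protocol complex, and its proof (via Lemmas~\ref{lem:first-of-two}, \ref{lem:second-of-two}, and~\ref{lemma-shellability-connectivity}) only uses two facts about the input complex: that it is a pure shellable $(n-1)$-dimensional complex (here $\cI_J=\Psi(\{p_1,\dots,p_n\},J)$ is shellable by Lemma~\ref{fact-pseudosphere-shellable}, since $J\neq\emptyset$), and that $\Xi$ satisfies properties P1 and P2 with respect to it. Both Lemma~\ref{lem:first-of-two} and Lemma~\ref{lem:second-of-two} are stated and proved for an arbitrary pseudosphere input complex and depend only on the structure of $\kagraphr$ (its domination number), not on the particular value set. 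Hence, since $\gamma(\kagraphr)\geq k+1>k$ by hypothesis, Theorem~\ref{thm:domination-implies-connectivity} yields that $\cP[J]$, being the protocol complex associated with $\kagraph$ and $\cI_J$ after $r$ rounds, is at least $(k-1)$-connected.

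The one subtlety worth spelling out—and the step I expect to require the most care—is the degenerate case $|J|=1$. When $J=\{v\}$, the input complex $\cI_J$ consists of a single facet, there is no uncertainty, and $\cP[J]$ is a single $(n-1)$-simplex, which is $c$-connected for every~$c$ and in particular $(k-1)$-connected; this matches the base case already handled at the start of the proof of Theorem~\ref{thm:domination-implies-connectivity} (the $|I|=1$ case there). For $|J|\geq 2$ the argument above goes through verbatim. I would therefore structure the proof of the claim as a short two-line observation: (i) identify $\cP[J]$ as the protocol complex for input complex $\Psi(\{p_1,\dots,p_n\},J)$, and (ii) apply Theorem~\ref{thm:domination-implies-connectivity} with this input complex, noting that the hypothesis $\gamma(\kagraphr)>k$ is unchanged. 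No new topological machinery is needed; the entire content is the (easy but essential) observation that restricting inputs commutes with forming the protocol complex, so that sub-protocol complexes of a pseudosphere-based protocol complex are again pseudosphere-based protocol complexes for the same network.
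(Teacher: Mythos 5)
Your proposal is correct and follows essentially the same route as the paper: identify $\cP[J]$ as the protocol complex for the pseudosphere $\Psi(\{p_1,\dots,p_n\},J)$ under the same $\kagraph$, then invoke Theorem~\ref{thm:domination-implies-connectivity} directly. The explicit handling of $|J|=1$ is a nice added precaution, though as you note it is already covered by the $|I|=1$ base case in the proof of Theorem~\ref{thm:domination-implies-connectivity}.
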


To establish the claim, let $\cI[J]$ be the input complex $\cI$ restricted to values in $J$.
Note that $\cI[J]$ is the pseudosphere  $\Psi(\{p_1,\ldots,p_n\},J)$, and, for a task where $\cI[J]$ is the input complex, $\cP[J]$ is the  corresponding protocol complex.
Theorem~\ref{thm:domination-implies-connectivity} immediately implies that $\cP[J]$ is at least $(k-1)$-connected, as claimed. 

For completing the proof of Theorem~\ref{theo:impossibility-set-agreement}, it is therefore sufficient to show the following. 

\begin{claim}
	\label{claim:connctivity-implies-no-set-agreement}
	If, for every non-empty $J \subseteq I$,
	$\cP[J]$ is at least $(k - 1)$-connected, 
	then $k$-set agreement is not solvable. 
\end{claim}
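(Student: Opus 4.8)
The plan is to argue by contradiction: assume there is an $r$-round algorithm solving $k$-set agreement, which by Lemma~\ref{theo:iffcondition4solvability} yields a chromatic simplicial decision map $\delta:\cP\to\cO$ compatible with $\Delta$, where $\cO$ is the output complex of $k$-set agreement over $I$. The idea is to extract from $\delta$ a Sperner-type coloring of a subdivision of a $k$-dimensional simplex, and then invoke Sperner's Lemma to produce a simplex whose vertices carry $k+1$ distinct output values, contradicting the $k$-set agreement specification (which allows at most $k$ distinct decisions).

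First I would fix $k+1$ distinct input values $a_0,\dots,a_k\in I$ and consider the carrier map $\chi$ obtained by composing, over subsets $J=\{a_0,\dots,a_j\}$, the restricted computation maps with $\delta$; concretely, the relevant combinatorial object is the carrier map on a $k$-simplex $\Sigma$ with vertices labeled $a_0,\dots,a_k$ that sends the face spanned by $\{a_{i_0},\dots,a_{i_j}\}$ to $\cP[\{a_{i_0},\dots,a_{i_j}\}]$. By Claim~\ref{claim:sub-protocol-complex-are-connected}, $\cP[J]$ is $(|J|-2)$-connected for every non-empty $J$, i.e. the image of every $j$-face of $\Sigma$ is $(j-1)$-connected. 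Hence Lemma~\ref{lem:simplicial-approx} applies and gives a simplicial approximation $(\Div,g)$ with $g:\Div\Sigma\to\cP$ carried by $\chi$. Composing with $\delta$ gives a simplicial map $h=\delta\circ g:\Div\Sigma\to\cO$. Now define a coloring $c$ of the vertices of $\Div\Sigma$ by letting $c(x)$ be the output value component of $h(x)$. The key point is that this is a Sperner coloring of $\Div\Sigma$: a vertex $x$ lying in the face spanned by $\{a_{i_0},\dots,a_{i_j}\}$ is mapped by $g$ into $\cP[\{a_{i_0},\dots,a_{i_j}\}]$, a protocol complex in which the only inputs present are $a_{i_0},\dots,a_{i_j}$; therefore, by the $k$-set agreement specification (no value may be output unless it was proposed), $\delta$ must assign it an output value in $\{a_{i_0},\dots,a_{i_j}\}$, so $c(x)\in\{a_{i_0},\dots,a_{i_j}\}$. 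Sperner's Lemma then yields a cell of $\Div\Sigma$ whose $k+1$ vertices receive the $k+1$ distinct colors $a_0,\dots,a_k$.

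It remains to derive the contradiction. The image under $g$ of this rainbow cell is a simplex $\tau$ of $\cP$, which corresponds (via the one-to-one correspondence of facets, extended down to simplices by $\Xi$) to some partial input configuration using only values among $a_0,\dots,a_k$; extend it to a full input facet $\sigma\in\cI$. Then $\delta(\Xi(\sigma))$ is a single facet of $\cO$ that must lie in $\Delta(\sigma)$, yet by construction the decisions on the vertices of $\tau$ already exhibit $k+1$ distinct output values, so this facet violates the $k$-set agreement condition $|\val|\le k$ — contradiction. The main obstacle I anticipate is the bookkeeping at the interface between the topological and operational descriptions: verifying carefully that $\chi$ is a genuine carrier map into $\cP$ (so that Lemma~\ref{lem:simplicial-approx} truly applies), and that the ``values come from proposed inputs'' property of $k$-set agreement translates cleanly into the Sperner boundary condition $c(x)\in\{a_{i_0},\dots,a_{i_j}\}$ for vertices on lower-dimensional faces; once that is pinned down, Sperner's Lemma does the rest almost mechanically.
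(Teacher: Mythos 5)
Your proposal is correct and follows essentially the same route as the paper: both define the carrier map $\chi$ sending $\sigma_J$ to $\cP[J]$, use the $(k-1)$-connectivity hypothesis together with Lemma~\ref{lem:simplicial-approx} to get a simplicial approximation $g:\Div\sigma_I\to\cP$, compose with the decision structure to get a Sperner coloring, and invoke Sperner's Lemma. The only cosmetic differences are that the paper packages the ``output value of a decision'' as an auxiliary simplicial map $f:\cP\to\partial\sigma_I$ and derives the contradiction from the codomain being $\partial\sigma_I$, whereas you compose directly with $\delta:\cP\to\cO$ and derive the contradiction from the rainbow cell yielding a simplex of $\cP$ carrying $k+1$ distinct decisions; you are also a bit more careful in explicitly fixing $k+1$ input values $a_0,\dots,a_k$.
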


We proceed by proving this claim, inspired by a classical construction used for proving impossibility of $k$-set agreement in asynchronous, failure-prone distributed models (see, e.g.,~\cite[Theorem~{10.3.1}]{HerlihyKR13}).
The proof goes roughly as follows. We first observe that the existence of an algorithm for $k$-set agreement implies the existence of a simplicial map $f$ from the protocol complex $\cP$ to the boundary of the simplex $\sigma_I$ with vertex set~$I$. Second, we use the assumption that every subcomplex $\cP[J]$ is at least $(k-1)$-connected to derive the existence of a simplicial approximation $g: \Div\sigma_I \to \cP$ of $\chi$, where $\chi$ is the carrier map mapping every face $\sigma_J$ of $\sigma_I$ to the subcomplex $\cP[J]$. Combining $f$ and $g$ results in a simplicial map $h=f\circ g$ from $\Div\sigma_I$ to the boundary $\partial\sigma_I$ of $\sigma_I$. We then show that $h$ is a Sperner coloring of $\Div\sigma_I$ from which we derive a contradiction by applying Sperner's Lemma.  

Let $\sigma_I$ be the simplex of dimension $k$ with vertex set $I$, and let $\partial\sigma_I$ be its boundary complex,
that is, the complex of dimension $k-1$ whose facets correspond to the subsets of size $k$ of~$I$.
Each face of $\sigma_I$ corresponding to a set $J\subseteq I$ of values, and is denoted by $\sigma_J$. 
We emphasize that $\sigma_I$ and $\sigma_J$ are merely simplices with vertices labeled by the elements on $I$ and $J$, which we use in order to apply Sperner's Lemma. 
These simplices are not labeled with process names, and unlike the other simplicial complexes in this paper, they do not model the distributed system in hand.

Let us assume, for the purpose of contradiction, that $k$-set agreement is solvable. 
An algorithm for $k$-set agreement with set of input values $I$ implies a coloring of the vertices of $\cP$
with values in $I$ such that, for every simplex of $\cP$,
\begin{enumerate}
\item the set of colors assigned to the vertices of that simplex is of size at most $k$, and
\item for any set $J\subseteq I$ of colors, 
	all the nodes in $\cP[J]$ are colored with colors from $J$.
\end{enumerate}
The first property yields from the fact that processes must not output more that $k$ values in an execution, and the second property holds because processes may only decide on values that were proposed.
In other words, there exists a simplicial map 
\[f : \cP \to \partial\sigma_I\] such that, for every non-empty subset $J \subseteq I$, \[f(\cP[J]) \subseteq \partial\sigma_I \cap 2^J.\]
Let us define the carrier map 
\[
\chi:\sigma_I\to 2^\cP
\] 
by 
\[
\chi\left(\sigma_J\right) = \cP[J]
\]
for every $\sigma_J \subseteq \sigma_I$.
By the assumption of the claim, for every nonempty $\sigma_J \subseteq \sigma_I$, the subcomplex $\chi(\sigma_J)$ is $(k-1)$-connected. Therefore, by Lemma~\ref{lem:simplicial-approx}, the carrier map $\chi$ has a simplicial approximation,
that is, 
there exists a subdivision $\Div \sigma_I$ of $\sigma_I$, together with a simplicial map 
\[
g : \Div\sigma_I \to\cP,
\]
such that, for every simplex $\sigma_J \subseteq \sigma_I$, 
we have $g(\Div\sigma_J) \subseteq \chi(\sigma_J)$. 
Let us now consider the composition $h = f\circ g$ of simplicial maps. We have 
\[
h : \Div\sigma_I \to \partial\sigma_I.
\]
The map $h$ can be viewed as a coloring of $\Div\sigma_I$ with the vertices of $\partial\sigma_I$ such that each simplex in $\Div\sigma_I$ is colored with at most $k$ colors. Moreover, for each simplex $\sigma_J \subseteq \sigma_I$, we have
\[
h(\Div\sigma_J) = f(g(\Div\sigma_J)) \subseteq f(\chi(\sigma_J)) = f(\cP[J]).
\]
From the definition of $f$, it follows that each simplex in $\Div\sigma_J$ is colored by $h$ with values appearing in $\sigma_J$. 
Therefore, $h$ is a Sperner coloring of $\Div\sigma_I$. By Sperner's lemma, there exists a simplex $\tau$ of $\Div\sigma_I$, of dimension $k$, colored with all the $k+1$ colors. 
This is a contradiction, because $\tau$ is then mapped by $h$ to $\sigma_I$, which is not in the domain $\partial\sigma_I$ of $h$. This completes the proof of Claim~\ref{claim:connctivity-implies-no-set-agreement}, and of the theorem. 
\qed

Theorem~\ref{theo:impossibility-set-agreement} implies that, in particular, consensus solvability requires the information-flow graph to contain a universal node, i.e., a node that dominates all the other nodes. 

\paragraph{Remark.}
Observe that the converse of Theorem~\ref{theo:impossibility-set-agreement} also holds, i.e., if $\gamma(\kagraphr)\leq k$ then $k$-set agreement is solvable  in $r$ rounds in the \knowall\/ model~$\kagraph$. The algorithm performs as follows. Let $D$ be a dominating set for $\kagraphr$, with $|D|\leq k$. Since $D$ is dominating, every process~$p$ receives the input value of at least one process in~$D$, and can decide on such a value as an output. In total, at most $|D|\leq k$ values are decided.

\bigskip

Theorem~\ref{theo:impossibility-set-agreement}, together with properties~\ref{lemma-from-local-to-knowall} and~\ref{lemma-from-dynamic-to-knowall}, have implications for more traditional computational models:

\begin{corollary}\label{cor:consensus and k-set}
In the \local{} model, any algorithm that solves $k$-set agreement in a network~$G$ requires at least $r$~rounds, 
where $r$ is the smallest integer such that $\gamma(G^r)\leq k$. 
\end{corollary}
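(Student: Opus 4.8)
The plan is to derive Corollary~\ref{cor:consensus and k-set} directly from Theorem~\ref{theo:impossibility-set-agreement} together with Property~\ref{lemma-from-local-to-knowall}, using the observation that the $r$-round information-flow graph of a static network $G$ is exactly its $r$-th power $G^r$. So the first step is to instantiate the \knowall{} model with the constant sequence $\kagraph=(G_t)_{t\geq 1}$ where $G_t=G$ for all $t$, and check that $\kagraphr=G^r$: an arc $(p,q)$ appears in $\kagraphr$ iff there is a temporal path from $p$ to $q$ of length at most $r$ in the constant sequence, which (since every round uses the same graph $G$) is exactly the condition that the distance from $p$ to $q$ in $G$ is at most $r$, i.e.\ $(p,q)\in E(G^r)$. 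Hence $\gamma(\kagraphr)=\gamma(G^r)$.

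Next I would argue the lower bound by contradiction. Suppose some algorithm solves $k$-set agreement in $G$ in the \local{} model using $r'$ rounds, where $r'$ is strictly smaller than the smallest integer $r$ with $\gamma(G^r)\leq k$. By the minimality of $r$, we have $\gamma(G^{r'})>k$ (note that $\gamma(G^i)$ is non-increasing in $i$, since $G^i$ is a subgraph of $G^{i+1}$, so $\gamma(G^{r'})\leq k$ would force $r$ to be at most $r'$, a contradiction). By Property~\ref{lemma-from-local-to-knowall}, this \local{} algorithm yields an $r'$-round algorithm solving $k$-set agreement in the \knowall{} instance $\kagraph$ with $G_t=G$. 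But by the first step, $\gamma(\kagraph_{\leq r'})=\gamma(G^{r'})>k$, so Theorem~\ref{theo:impossibility-set-agreement} asserts that $k$-set agreement is \emph{not} solvable in $r'$ rounds in $\kagraph$ --- a contradiction. Therefore no algorithm can solve $k$-set agreement in fewer than $r$ rounds.

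There is essentially no serious obstacle here; the corollary is a packaging of already-proven machinery. The one point that needs a line of care is the monotonicity remark: one must confirm that $G^r$ has at least all the arcs of $G^{r-1}$ (so that the domination number can only decrease with $r$), which guarantees that ``the smallest $r$ with $\gamma(G^r)\leq k$'' is well-defined whenever such an $r$ exists and that below it the hypothesis of Theorem~\ref{theo:impossibility-set-agreement} holds. A second minor point is handling directed $G$: Property~\ref{lemma-from-local-to-knowall} already allows $G$ to be a directed graph, and the identification $\kagraphr=G^r$ goes through verbatim with directed reachability in place of distance, so nothing extra is needed. I would keep the write-up to a short paragraph: state the power-graph identity, invoke monotonicity, then chain Property~\ref{lemma-from-local-to-knowall} and Theorem~\ref{theo:impossibility-set-agreement}.
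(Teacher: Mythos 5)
Your proof is correct and is exactly the intended derivation: the paper gives no explicit argument for this corollary, merely pointing to Property~\ref{lemma-from-local-to-knowall} and Theorem~\ref{theo:impossibility-set-agreement}, and your chain (identify $\kagraphr=G^r$ for the constant sequence, invoke monotonicity of $\gamma(G^i)$ to get $\gamma(G^{r'})>k$ for $r'<r$, then combine Property~\ref{lemma-from-local-to-knowall} with Theorem~\ref{theo:impossibility-set-agreement}) is precisely that packaging. The attention you pay to the monotonicity of $\gamma(G^i)$ and to the directed case is appropriate and fills in the small gaps the paper leaves implicit.
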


\begin{corollary}
In the dynamic network model, any  algorithm that solves $k$-set agreement against an adversary $A$ requires at least $r$~rounds, 
where $r$ is the smallest integer such that,  for every $\kagraph\in A$, $\gamma(\kagraphr)\leq k$.
\end{corollary}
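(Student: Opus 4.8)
The plan is to obtain this corollary as an immediate consequence of Theorem~\ref{theo:impossibility-set-agreement} combined with Property~\ref{lemma-from-dynamic-to-knowall}, exactly paralleling the proof of Corollary~\ref{cor:consensus and k-set}. Fix the adversary~$A$ and let $r$ be the smallest integer such that $\gamma(\kagraphr)\leq k$ for every $\kagraph\in A$ (if no such integer exists, the statement should be read as asserting that no algorithm solves $k$-set agreement against~$A$, and the same argument applies round by round). I would prove that for every integer $j$ with $0\leq j<r$, there is no algorithm solving $k$-set agreement against~$A$ in~$j$ rounds, which is exactly the assertion that any such algorithm requires at least~$r$ rounds.

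To this end, fix such a~$j$. By the minimality of~$r$, the value~$j$ does not satisfy the defining condition, so there exists a sequence $\kagraph=(G_t)_{t\geq 1}\in A$ with $\gamma(\mathsf{G}_{\leq j})>k$. Suppose, towards a contradiction, that some algorithm solves $k$-set agreement against~$A$ in~$j$ rounds. Since $\kagraph\in A$, Property~\ref{lemma-from-dynamic-to-knowall} tells us that $k$-set agreement is then solvable in~$j$ rounds in the \knowall{} model~$\kagraph$. But Theorem~\ref{theo:impossibility-set-agreement}, applied to~$\kagraph$ with the round count~$j$, asserts that $k$-set agreement is \emph{not} solvable in~$j$ rounds in~$\kagraph$ whenever $\gamma(\mathsf{G}_{\leq j})>k$, which is precisely how~$\kagraph$ was chosen. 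This contradiction shows that no algorithm solves $k$-set agreement against~$A$ in~$j$ rounds, and since $j<r$ was arbitrary, at least~$r$ rounds are needed.

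The argument involves no genuine obstacle; the only point requiring a little care is the bookkeeping around the phrase ``requires at least~$r$ rounds'' and the minimality of~$r$: one must check that for every $j<r$ the defining inequality $\gamma(\mathsf{G}_{\leq j})\leq k$ fails for at least one sequence produced by the adversary (this is exactly what it means for~$r$ to be the least integer for which it holds for all of them, and it is consistent with the fact that the information-flow graph only gains arcs as the number of rounds grows, so the condition is monotone in~$j$), and that an algorithm terminating within~$j$ rounds against~$A$ is, when restricted to a fixed sequence $\kagraph\in A$, precisely a $j$-round protocol in the sense needed to invoke Property~\ref{lemma-from-dynamic-to-knowall} and Theorem~\ref{theo:impossibility-set-agreement}.
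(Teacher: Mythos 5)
Your proof is correct and follows exactly the route the paper intends: the corollary is stated as an immediate consequence of Theorem~\ref{theo:impossibility-set-agreement} and Property~\ref{lemma-from-dynamic-to-knowall}, and you have simply filled in the routine bookkeeping (contrapositive at each $j<r$, minimality of $r$ furnishing a witness sequence with $\gamma(\mathsf{G}_{\leq j})>k$) that the paper leaves implicit.
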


\subsection{Approximate Agreement}
\label{subsec:appli2approxagreement}

Approximate agreement is the agreement task asking processes to output values that are as close as possible to each other, and, 
if all processes are given the same input value, 
then all processes should output that value. 
Specifically, let the input values be in $\{0,1\}$, let $\epsilon>0$, and let $k\geq1/\epsilon$ be an integer.
Then $\epsilon$-approximate agreement asks the $n$ processes to output values 
\[
v'_1,\dots,v'_n\in \left\{0,\frac{1}{k},\frac{2}{k},\ldots,\frac{k-1}{k},1\right\}
\]
such that $|v'_i-v'_j|\leq \epsilon$ for every $i,j$.  The associated input complex $\cal I$ is the same binary pseudosphere as for binary consensus, and the output complex can be defined by its facets: a facet for each set of output values satisfying $|v'_i-v'_j|\leq \epsilon$ for all $i,j$.
The carrier map $\Delta$ maps the all-$b$ input value facet to the all-$b$ output value facet, for every $b\in\{0,1\}$, and maps every other input facet to all the output facets.

Of course, if consensus can be solved, then $\epsilon$-approximate agreement can be solved with any $\epsilon$. The main point for studying approximate agreement is to determine the smallest $\epsilon$ for which $\epsilon$-approximate agreement is solvable, under the assumption that consensus is not solvable. In the proof of the next theorem, we show how topological arguments enable to resolve this problem easily in the \knowall\/ model, i.e., even when the communication between the processes is constrained.

\begin{theorem}\label{theo:approximateagreement}
Let $\kagraph$ be an instance of the \knowall{} model, and let $r\geq 0$. If consensus is not solvable  in~$r$ rounds in~$\kagraph$, then for every $\epsilon<\frac1{n-1}$, $\epsilon$-approximate agreement is also not solvable in~$r$ rounds in~$\kagraph$.
\end{theorem}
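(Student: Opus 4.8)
The plan is to argue by contradiction: assume $\epsilon$-approximate agreement is solvable in $r$ rounds in $\kagraph$ with $\epsilon < \frac{1}{n-1}$, and build from it a solution to consensus in $r$ rounds, contradicting the hypothesis. By Lemma~\ref{theo:iffcondition4solvability}, the approximate-agreement algorithm yields a chromatic simplicial map $\delta:\cP\to\cO_{\mathrm{approx}}$ agreeing with the carrier map $\Delta_{\mathrm{approx}}$, where $\cP$ is the protocol complex associated with $\kagraph$, $\cI$ (the binary pseudosphere) after $r$ rounds. The goal is to post-process $\delta$ into a chromatic simplicial map $\delta':\cP\to\cO_{\mathrm{cons}}$ (the two disjoint facets $\tau_0,\tau_1$) that still agrees with the consensus carrier map, i.e., maps the all-$0$ protocol facet $\Xi(\sigma_0)$ into $\tau_0$ and the all-$1$ facet $\Xi(\sigma_1)$ into $\tau_1$.

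First I would observe that a facet of $\cP$ is the image $\Xi(\sigma)$ of a unique input facet $\sigma$ (Lemma~\ref{lem-protocol-complex}), so $\cP$ has exactly $2^n$ facets. The key quantitative point is a rounding argument: since $\delta$ is a simplicial map into $\cO_{\mathrm{approx}}$, every facet of $\cP$ is sent to a set of $n$ output values, all within pairwise distance $\epsilon < \frac{1}{n-1}$, hence all lying in an open interval of length strictly less than $\frac{1}{n-1} \le \frac{1}{n}\cdot\frac{n}{n-1}$; more to the point, the $n$ values lie in $\{0,\frac1k,\dots,1\}$ and span a window of width $<\frac1{n-1}$. I want to define a "rounding" map $\rho$ on the output vertices so that the composition $\rho\circ\delta$ is still simplicial (sends a simplex to a simplex of $\cO_{\mathrm{cons}}$, i.e., to a subset of a single $\tau_b$) and still respects the consensus specification on the two monochromatic facets. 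The natural candidate is a threshold: $\rho(v)=0$ if $v$ lies below some cutoff, $\rho(v)=1$ otherwise. For $\rho\circ\delta$ to be simplicial we need that no facet of $\cP$ receives output values on both sides of the cutoff — equivalently, the cutoff must avoid every "gap" that some facet's value-window straddles.

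The main obstacle — and the place where $\epsilon<\frac1{n-1}$ is used sharply — is showing such a safe cutoff exists. Here is the mechanism I expect to exploit. Consider the $2^n$ facets of $\cP$; order the corresponding input facets $\sigma_0,\dots,\sigma_{2^n-1}$ so that consecutive ones differ in a single coordinate (a Gray-code / Hamiltonian path on the hypercube of inputs, which is exactly a path through $(n-2)$-adjacent facets, the situation of Lemmas~\ref{lem:first-of-two} and~\ref{lem:second-of-two}). Two such adjacent input facets share an $(n-2)$-face, and by Lemma~\ref{lem-protocol-complex} together with the fact that consensus is \emph{not} solvable (so $\gamma(\kagraphr)\ge 2$, i.e., no universal node, by the remark after Theorem~\ref{theo:impossibility-set-agreement}), $\Xi(\sigma_0)$ and $\Xi(\sigma_i)$ still share at least a common vertex — they intersect. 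Since $\delta$ is simplicial, the value-windows of two intersecting facets overlap, so they differ by less than $\epsilon$; chaining along the Gray path from the all-$0$ facet (whose output is forced to be $\{0,\dots,0\}$ by $\Delta_{\mathrm{approx}}$) to the all-$1$ facet (output forced to $\{1,\dots,1\}$) and counting the steps — there are at most... well, we need the number of *distinct* value levels used to be small enough that some level in $(0,1)$ is unused by *every* facet's window, or more precisely that the windows, which each have width $<\frac1{n-1}$ and move by $<\epsilon<\frac1{n-1}$ between neighbors, cannot collectively cover $(0,1)$ in a way that blocks every cutpoint. The clean way: pick any facet whose $\delta$-image contains $0$ and any whose image contains $1$; along a Gray path between their input facets, the minimum output value changes by less than $\epsilon$ at each of the at most $n$ coordinate-flips needed, so some intermediate facet has all its values in an interval missing a fixed point $c\in(0,1)$; but we need this simultaneously for all $2^n$ facets. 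I'd instead argue directly: each facet contributes a forbidden sub-interval of cutpoints (its value-window) of length $<\frac1{n-1}$; the all-$0$ and all-$1$ facets contribute the points $0$ and $1$ respectively (width-$0$ windows there), and I must show the union of all forbidden windows does not cover $[0,1]$. Since a simplicial map into a subdivision with $k\ge 1/\epsilon$ levels is used, and using that adjacent facets' windows overlap and that the whole configuration is "connected" via the Gray code with at most $2^n-1$ steps each moving the window by $<\epsilon$... this counting is where the tightness $\epsilon=\frac1{n-1}$ enters, and it is the step I expect to require the most care — pinning down exactly why $n-1$ (rather than, say, $n$) is the right denominator, likely because one of the $n$ coordinates is "pinned" by the absence of a universal node, leaving only $n-1$ degrees of freedom in the spread. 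Once a safe cutpoint $c$ is found, set $\rho(v)=0$ for $v<c$ and $\rho(v)=1$ for $v>c$; then $\delta'=\rho\circ\delta$ is a chromatic simplicial map $\cP\to\cO_{\mathrm{cons}}$ agreeing with the consensus carrier map, so by Lemma~\ref{theo:iffcondition4solvability} consensus is solvable in $r$ rounds in $\kagraph$ — the desired contradiction.
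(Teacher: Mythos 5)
Your overall strategy — reduce $\epsilon$-approximate agreement to consensus by thresholding the decision map $\delta$ — is a legitimate framing (it is the contrapositive of the theorem), but it differs from the paper's, which derives a direct contradiction without ever trying to build a consensus algorithm. The paper fixes the \emph{staircase} chain of $n+1$ input facets $\sigma_0,\dots,\sigma_n$, where $\sigma_j$ gives input $1$ to the first $j$ processes and $0$ to the rest. Because $\kagraphr$ has no universal node (the hypothesis that consensus is unsolvable, via Theorem~\ref{theo:impossibility-set-agreement}), each $\Xi(\sigma_{j-1})$ and $\Xi(\sigma_j)$ share a vertex. Tracking the maximum output value along the chain gives, by induction, that every value at $\delta(\Xi(\sigma_j))$ is at most $j\epsilon$; at $j=n-1$ this is $<1$, yet the shared vertex with $\Xi(\sigma_n)$ must have output $1$. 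That is the whole proof, and the denominator $n-1$ appears because the contradiction is extracted at step $n-1$ of an $n$-step chain — not because one degree of freedom is ``pinned,'' as you conjecture.

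The genuine gap in your proposal, which you yourself flag, is the existence of a ``safe cutpoint'' $c\in(0,1)$ avoided by every facet's value-window. You offer no argument for it, and the routes you sketch do not close it. The Gray-code chain on all $2^n$ facets would, if pushed through the same window-drift counting, give a bound on the order of $\epsilon < \frac{1}{2^n - 1}$, far weaker than $\frac{1}{n-1}$. The ``union of forbidden windows has measure $< 1$'' idea fails outright: there are $2^n$ windows each of width up to $\frac{1}{n-1}$, so a volume argument gives nothing. In fact the paper's staircase chain already shows that, under the standing hypothesis (no universal node) and the assumption that $\delta$ exists, one is led directly to an impossibility rather than to a gap in the coverage of $[0,1]$; so there is no clean way to exhibit a safe cutpoint without first re-deriving essentially the same chain argument that makes the cutpoint question moot. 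To repair your proof you would want to abandon the threshold construction and instead track a single short monotone chain as the paper does.
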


\begin{proof}
	Recall that, by Theorem~\ref{theo:impossibility-set-agreement}, consensus is solvable in $r$ rounds in $\kagraph$ if and only if $\kagraphr$ has a dominating node. 
	We show that no graph $\kagraphr$ without dominating node is able to solve $\epsilon$-approximate agreement for 
	$\epsilon<\frac1{n-1}$. 
	For $0\leq j\leq n$, let $\sigma_j$ be the facet of $\cI$
	defined by 
	\[
	\sigma_j=\{(p_i,1)\mid i\leq j\} \cup \{(p_i,0)\mid i> j\},
	\] 
	i.e., with $1$ as the input for the first $j$ processes, and $0$ for the rest.
	The facets $\sigma_j$ and $\sigma_{j+1}$ share an $(n-2)$-face. 
	The mapping $\Xi$ is not necessarily simplicial, so the images of $\sigma_j$ and $\sigma_{j+1}$ under $\Xi$ may not share such a face;
	nevertheless, we show that their images under $\Xi$ do intersect.
	Consider the sequence of facets $\sigma_0,\dots,\sigma_n$ in the input complex, 
	and its image $\Xi(\sigma_0),\dots,\Xi(\sigma_{n})$ in the protocol complex.
	Note that two consecutive simplexes, $\sigma_{j-1}$ and $\sigma_{j}$,
	differ only in the input of~$p_j$.
	Since $\kagraphr$ has no universal nodes, there is a process $p_k$ which does not receive messages from~$p_j$, 
	and thus there is a node $(p_k,v'_k)$ in the protocol complex
	that is shared by both $\Xi(\sigma_{j-1})$ and $\Xi(\sigma_{j})$.
	Applying the same argument for all values of $j=1,\dots,n$,
	we find that the image $(\Xi(\sigma_0),\ldots,\Xi(\sigma_{n}))$ in the protocol complex is path-connected, 
	in the sense that every two consecutive facets intersect.
	
	Assume that a protocol solves $\epsilon$-approximate agreement for some $\epsilon<\frac1{n-1}$. 
	It follows from Lemma~\ref{theo:iffcondition4solvability} that there is a simplicial map $\delta:\cP\rightarrow\cO$
	satisfying $\delta(\Xi(\sigma))\in\Delta(\sigma)$ for every $\sigma\in\cI$.
	Since $\delta$ is simplicial, 
	the image of $(\Xi(\sigma_0),\ldots,\Xi(\sigma_{n}))$ under $\delta$, 
	which is $\delta(\Xi(\sigma_0)),\ldots,\delta(\Xi(\sigma_{n}))$,
	is also path-connected.

	For every $\sigma_j$, let $V_j$ denote the possible output values for the processes in $\delta(\Xi(\sigma_j))$. 
	The definition of approximate agreement
	limits the domains $V_j$, and specifically, since
	\[\delta(\Xi(\sigma_0))=\{(p_i,0), i=1,\dots,n\},\]
	we have $V_0=\{0\}$,
	and similarly $V_{n}=\{1\}$.
	For all other values of $0<j<n$, $\epsilon$-approximate agreement imposes $V_j$ of range $\leq \epsilon$. Furthermore, $V_j\cap V_{j+1}\neq\emptyset$ for every $j\geq 0$ since every two consecutive facets $\delta(\Xi(\sigma_j))$ and $\delta(\Xi(\sigma_{j+1}))$
	share a vertex. 
	Achieving the desired contradiction is now simple.
	Each output value $v_i$ at a vertex of $\delta(\Xi(\sigma_0))$ must satisfy $v_i=0$.
	By connectivity, at least one such vertex is shared with $\delta(\Xi(\sigma_1))$,
	and thus every value $v'_i$ at a vertex  of $\delta(\Xi(\sigma_1))$ must satisfy $v'_i\leq\epsilon$.
	By induction, every value $v'_i$ at a vertex of $\delta(\Xi(\sigma_j))$ must satisfy $v'_i\leq j\epsilon$.
	Hence, every output value $v'_i$ at a vertex of $\delta(\Xi(\sigma_{n-1})$ must satisfy $v'_i\leq (n-1)\epsilon<1$.
	Once again, by connectivity there is a vertex $(p_i,v'_i)$ in $\delta(\Xi(\sigma_{n-1})\cap \delta(\Xi(\sigma_{n}))$.
	We have thus showed that the vertex $(p_i,v'_i)\in\delta(\Xi(\sigma_{n}))$ has output value $v'_i<1$,
	which is a contradiction for the specification of approximate agreement.
\end{proof}

Theorem~\ref{theo:approximateagreement}, together with properties~\ref{lemma-from-local-to-knowall} and~\ref{lemma-from-dynamic-to-knowall}, have implications for more traditional computational models. 

\begin{corollary}
In the \local{} model, any algorithm that solves $\epsilon$-approximate agreement in a network~$G$, with $\epsilon<\frac1{n-1}$, requires at least $r$~rounds, where $r$ is the smallest integer such that $G^r$ has a dominating vertex. 
\end{corollary}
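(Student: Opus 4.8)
The plan is to reduce this corollary directly to Theorem~\ref{theo:approximateagreement} via Property~\ref{lemma-from-local-to-knowall}, exactly as Corollary~\ref{cor:consensus and k-set} was derived from Theorem~\ref{theo:impossibility-set-agreement}. First I would recall that in the \local{} model with communication graph $G$, after $r$ rounds each process knows its $r$-neighborhood; concretely, running $r$ rounds in the \local{} model on $G$ is at least as powerful as running $r$ rounds in the \knowall{} instance $\kagraph = (G_t)_{t\geq 1}$ with $G_t = G$ for all $t$, and for this instance the information-flow graph $\kagraphr$ is precisely $G^r$ (there is a temporal path of length at most $r$ from $u$ to $v$ in the constant sequence exactly when $\mathrm{dist}_G(u,v)\leq r$). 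This identification is the only model-translation step needed.

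Next I would argue the contrapositive. Suppose an algorithm solves $\epsilon$-approximate agreement in $G$ in fewer than $r$ rounds, say in $r' < r$ rounds, where $r$ is the smallest integer for which $G^r$ has a dominating vertex. By minimality of $r$, the graph $G^{r'}$ has no dominating vertex, i.e., $\gamma(G^{r'}) \geq 2$. By Property~\ref{lemma-from-local-to-knowall}, $\epsilon$-approximate agreement is then solvable in $r'$ rounds in the \knowall{} instance $\kagraph$ above, whose information-flow graph after $r'$ rounds is $G^{r'}$. Since $\gamma(G^{r'})>1$, the remark following Theorem~\ref{theo:impossibility-set-agreement} (or Theorem~\ref{theo:impossibility-set-agreement} itself with $k=1$) tells us consensus is \emph{not} solvable in $r'$ rounds in $\kagraph$. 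But then Theorem~\ref{theo:approximateagreement}, applied to $\kagraph$ and $r'$, asserts that for every $\epsilon < \frac1{n-1}$, $\epsilon$-approximate agreement is not solvable in $r'$ rounds in $\kagraph$ — contradicting what we derived from the assumed \local{} algorithm. Hence no algorithm can beat $r$ rounds.

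The only subtlety worth spelling out is the $G^{r'}$-versus-$\kagraphr$ bookkeeping: one must check that the information-flow graph of the constant sequence $(G)_{t\geq 1}$ after $r'$ rounds is exactly the $r'$-th power $G^{r'}$ (including the convention on self-loops / reflexivity, which is harmless since a process always knows its own input), and that "$r$ smallest with $\gamma(G^r)\leq 1$" is the same as "$r$ smallest such that $G^r$ has a dominating vertex." Both are immediate from the definitions. I do not expect any real obstacle here — the corollary is a straightforward packaging of Theorem~\ref{theo:approximateagreement} through the model reduction already established in Section~\ref{subsec:other-models}; the one-line-per-direction structure mirrors Corollary~\ref{cor:consensus and k-set} verbatim with "$k$-set agreement / $\gamma(G^r)\leq k$" replaced by "$\epsilon$-approximate agreement / $G^r$ has a dominating vertex."
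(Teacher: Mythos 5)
Your proof is correct and takes exactly the route the paper intends: the corollary is stated without an explicit argument, as an immediate consequence of Theorem~\ref{theo:approximateagreement} and Property~\ref{lemma-from-local-to-knowall}, and your contrapositive unwinding of that implication --- including the identification of the information-flow graph of the constant sequence with $G^{r'}$, and the use of Theorem~\ref{theo:impossibility-set-agreement} with $k=1$ (or the remark) to translate ``no dominating vertex'' into ``consensus unsolvable'' so that Theorem~\ref{theo:approximateagreement} can be invoked --- is precisely the argument the paper leaves implicit.
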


\begin{corollary}
In the dynamic network model, any algorithm that solves $\epsilon$-approximate agreement against an adversary $A$, with $\epsilon<\frac1{n-1}$, requires at least $r$~rounds, 
where $r$ is the smallest integer such that for every $\kagraph\in A$, $\kagraphr$ has a dominating vertex. 
\end{corollary}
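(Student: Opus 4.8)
The plan is to obtain this corollary as an immediate consequence of Theorem~\ref{theo:approximateagreement} and Property~\ref{lemma-from-dynamic-to-knowall}, argued in the contrapositive, exactly as the previous corollaries of Theorem~\ref{theo:impossibility-set-agreement} are obtained. Assume, for the purpose of contradiction, that some algorithm solves $\epsilon$-approximate agreement against the adversary $A$ in $r'$ rounds for some $r' < r$, where $r$ is the smallest integer such that $\mathsf{G}_{\leq r}$ has a dominating vertex for every $\kagraph=(G_t)_{t\geq 1}\in A$. First I would invoke the minimality of $r$: since $r' < r$, the defining property of $r$ fails at $r'$, so there exists a sequence $\kagraph\in A$ whose information-flow graph $\mathsf{G}_{\leq r'}$ after $r'$ rounds has no dominating vertex.

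Next I would transfer the assumed solvability to this particular instance via Property~\ref{lemma-from-dynamic-to-knowall}: since the algorithm solves $\epsilon$-approximate agreement in $r'$ rounds against $A$, it solves $\epsilon$-approximate agreement in $r'$ rounds in the \knowall{} model instance $\kagraph$. On the other hand, as recalled at the start of the proof of Theorem~\ref{theo:approximateagreement}, Theorem~\ref{theo:impossibility-set-agreement} yields that consensus is solvable in $r'$ rounds in $\kagraph$ if and only if $\mathsf{G}_{\leq r'}$ has a dominating node. Since $\mathsf{G}_{\leq r'}$ has no dominating node, consensus is not solvable in $r'$ rounds in $\kagraph$; hence, by Theorem~\ref{theo:approximateagreement}, $\epsilon$-approximate agreement with $\epsilon<\frac1{n-1}$ is not solvable in $r'$ rounds in $\kagraph$ either, contradicting the previous sentence. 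Therefore no algorithm solving $\epsilon$-approximate agreement against $A$ can terminate in fewer than $r$ rounds.

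The argument is a routine chaining of already-established results, so I do not anticipate a genuine technical obstacle; the only point that needs care is the alignment of quantifiers. The definition of $r$ reads ``smallest $r$ such that for \emph{every} $\kagraph\in A$, $\mathsf{G}_{\leq r}$ is dominated'', so its failure at $r'$ produces an \emph{existential} witness $\kagraph\in A$ with an undominated information-flow graph --- which is precisely the shape of hypothesis needed to apply Property~\ref{lemma-from-dynamic-to-knowall}, which reduces solvability against the whole adversary $A$ to solvability in each individual \knowall{} instance $\kagraph\in A$. Everything else, namely the characterization of consensus solvability by the presence of a dominating node and the reduction from consensus to approximate agreement, is imported verbatim from the preceding theorems.
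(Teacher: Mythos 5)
Your proof is correct and matches the paper's intent: the paper presents this corollary without a written proof, noting only that it follows from Theorem~\ref{theo:approximateagreement} together with Property~\ref{lemma-from-dynamic-to-knowall}, and your chaining of those two results (via the quantifier analysis on the minimality of $r$ and the characterization of consensus solvability by a dominating node) is exactly the routine argument the authors have in mind.
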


\paragraph{Remark.}
The bound in Theorem~\ref{theo:approximateagreement} is tight in the sense that there exists an instance $\kagraph$ of the \knowall{} model for which consensus is impossible, while $\frac1{n-1}$-approximate agreement is solvable. We show that there exists an information-flow graph $\kagraphr$ with no universal node, for which $\frac1{n-1}$-approximate agreement is solvable. 
$\kagraphr$ is simply obtained from a complete graph (each edge corresponds to two arcs oriented in opposite directions), from which all arcs in a single directed Hamiltonian cycle are removed. Indeed, this digraph has no universal node since every node has out-degree $n-2$. The $\frac1{n-1}$-approximate agreement algorithm is straightforward: each process chooses as  output value the average of all the input values it sees.
For correctness, let us consider an input assignment with $z$ input values~$0$, and $n-z$ input values~$1$.
Each process sees $n-1$ input values, out of which either $n-z-1$ or $n-z$ inputs are equal to~1, and thus the outputs are in the range 
	\[
	\left [\frac{n-z-1}{n-1},\frac{n-z}{n-1}\right],
	\]
	whose width is $\frac{1}{n-1}$.

\section{Conclusion and Further Work}

We have demonstrated that combinatorial topology is applicable to distributed network computing. Of course, this is just a first step, and further work will require to incorporate the specific features of different distributed network models, in order to capture the characteristics of each of them. For instance, fully capturing the popular \local{} model requires removing the structure awareness assumption, and studying the details of how the protocol complex evolves round after round. 

Incorporating asynchrony and failures into network computing, from a topological perspective, requires understanding the topological impact of simultaneously subdividing the facets, introducing holes resulting from $t$-resiliency, and introducing scissor cuts resulting from the presence of a network. This is definitely technically challenging, but our paper shows that there are no  conceptual obstacles preventing us from addressing these questions.

\paragraph{Acknowledgements: } The authors are thankful to Eli Gafni for fruitful discussions about solving $k$-set agreement in the \knowall\/ model, to Ran Gelles for discussions of carrier maps and scissor cuts, and to the reviewers of TCS journal for their help in improving the presentation of our results.


\let\OLDthebibliography\thebibliography
\renewcommand\thebibliography[1]{
	\OLDthebibliography{#1}
	\setlength{\parskip}{0pt}
	\setlength{\itemsep}{0pt plus 0.3ex}
}

\bibliographystyle{plain}
\bibliography{biblio}

\begin{thebibliography}{10}

\bibitem{ACPR19}
Manuel Alcantara, Armando Casta{\~{n}}eda, David Flores{-}Pe{\~{n}}aloza, and
  Sergio Rajsbaum.
\newblock The topology of look-compute-move robot wait-free algorithms with
  hard termination.
\newblock {\em Distributed Computing}, 32(3):235--255, 2019.

\bibitem{AlistarhAEGZ18}
Dan Alistarh, James Aspnes, Faith Ellen, Rati Gelashvili, and Leqi Zhu.
\newblock Why extension-based proofs fail.
\newblock In {\em Proceedings of the 51st Annual {ACM} {SIGACT} Symposium on
  Theory of Computing, {STOC}}, pages 986--996, 2019.

\bibitem{AlpernS85}
Bowen Alpern and Fred~B. Schneider.
\newblock Defining liveness.
\newblock {\em Inf. Process. Lett.}, 21(4):181--185, 1985.

\bibitem{AttiyaCHP19}
Hagit Attiya, Armando Casta{\~{n}}eda, Maurice Herlihy, and Ami Paz.
\newblock Bounds on the step and namespace complexity of renaming.
\newblock {\em {SIAM} J. Comput.}, 48(1):1--32, 2019.

\bibitem{BalliuBHORS19}
Alkida Balliu, Sebastian Brandt, Juho Hirvonen, Dennis Olivetti, Mika{\"{e}}l
  Rabie, and Jukka Suomela.
\newblock Lower bounds for maximal matchings and maximal independent sets.
\newblock In {\em 60th IEEE Symposium on Foundations of Computer Science
  (FOCS)}, 2019.

\bibitem{BarenboimEG18}
Leonid Barenboim, Michael Elkin, and Uri Goldenberg.
\newblock Locally-iterative distributed ($\delta+ 1$)-coloring below
  szegedy-vishwanathan barrier, and applications to self-stabilization and to
  restricted-bandwidth models.
\newblock In {\em Proceedings of the 2018 {ACM} Symposium on Principles of
  Distributed Computing, (PODC)}, pages 437--446, 2018.

\bibitem{BarenboimEPS12}
Leonid Barenboim, Michael Elkin, Seth Pettie, and Johannes Schneider.
\newblock The locality of distributed symmetry breaking.
\newblock In {\em 53rd IEEE Symposium on Foundations of Computer Science
  (FOCS)}, pages 321--330, 2012.

\bibitem{BhadraF12}
Sandeep Bhadra and Afonso Ferreira.
\newblock Computing multicast trees in dynamic networks and the complexity of
  connected components in evolving graphs.
\newblock {\em J. Internet Services and Applications}, 3(3):269--275, 2012.

\bibitem{BielyRSSW18}
Martin Biely, Peter Robinson, Ulrich Schmid, Manfred Schwarz, and Kyrill
  Winkler.
\newblock Gracefully degrading consensus and \emph{k}-set agreement in directed
  dynamic networks.
\newblock {\em Theor. Comput. Sci.}, 726:41--77, 2018.

\bibitem{BorowskyG93}
Elizabeth Borowsky and Eli Gafni.
\newblock Generalized {FLP} impossibility result for t-resilient asynchronous
  computations.
\newblock In {\em Proceedings of the Twenty-Fifth Annual {ACM} Symposium on
  Theory of Computing (STOC)}, pages 91--100, 1993.

\bibitem{BrandtFHKLRSU16}
Sebastian Brandt, Orr Fischer, Juho Hirvonen, Barbara Keller, Tuomo
  Lempi{\"{a}}inen, Joel Rybicki, Jukka Suomela, and Jara Uitto.
\newblock A lower bound for the distributed {L}ov{\'{a}}sz local lemma.
\newblock In {\em 48th ACM Symposium on Theory of Computing (STOC)}, pages
  479--488, 2016.

\bibitem{CastanedaFPRRT19talk}
Armando Casta{\~{n}}eda, Pierre Fraigniaud, Ami Paz, Sergio Rajsbaum, Matthieu
  Roy, and Corentin Travers.
\newblock A topological perspective on distributed network algorithms.
\newblock In {\em Structural Information and Communication Complexity - 26th
  International Colloquium, {SIROCCO}}, pages 3--18, 2019.

\bibitem{CastanedaR10}
Armando Casta{\~{n}}eda and Sergio Rajsbaum.
\newblock New combinatorial topology bounds for renaming: the lower bound.
\newblock {\em Distributed Computing}, 22(5-6):287--301, 2010.

\bibitem{CastanedaR12}
Armando Casta{\~{n}}eda and Sergio Rajsbaum.
\newblock New combinatorial topology bounds for renaming: The upper bound.
\newblock {\em J. {ACM}}, 59(1):3:1--3:49, 2012.

\bibitem{CasteigtsFGSY15}
Arnaud Casteigts, Paola Flocchini, Emmanuel Godard, Nicola Santoro, and
  Masafumi Yamashita.
\newblock On the expressivity of time-varying graphs.
\newblock {\em Theor. Comput. Sci.}, 590:27--37, 2015.

\bibitem{CasteigtsFQS12}
Arnaud Casteigts, Paola Flocchini, Walter Quattrociocchi, and Nicola Santoro.
\newblock Time-varying graphs and dynamic networks.
\newblock {\em International Journal of Parallel, Emergent and Distributed
  Systems}, 27(5):387--408, 2012.

\bibitem{ChangLP18}
Yi{-}Jun Chang, Wenzheng Li, and Seth Pettie.
\newblock An optimal distributed $({\Delta}+1)$-coloring algorithm?
\newblock In {\em 50th ACM Symposium on Theory of Computing (STOC)}, pages
  445--456, 2018.

\bibitem{Charron-BostFN15}
Bernadette Charron{-}Bost, Matthias F{\"{u}}gger, and Thomas Nowak.
\newblock Approximate consensus in highly dynamic networks: The role of
  averaging algorithms.
\newblock In {\em Automata, Languages, and Programming - 42nd International
  Colloquium, (ICALP)}, pages 528--539, 2015.

\bibitem{Charron-BostFN16}
Bernadette Charron{-}Bost, Matthias F{\"{u}}gger, and Thomas Nowak.
\newblock Fast, robust, quantizable approximate consensus.
\newblock In {\em 43rd International Colloquium on Automata, Languages, and
  Programming, {ICALP}}, pages 137:1--137:14, 2016.

\bibitem{Charron-BostS09}
Bernadette Charron{-}Bost and Andr{\'{e}} Schiper.
\newblock The heard-of model: computing in distributed systems with benign
  faults.
\newblock {\em Distributed Computing}, 22(1):49--71, 2009.

\bibitem{ChaudhuriHLT00}
Soma Chaudhuri, Maurice Herlihy, Nancy~A. Lynch, and Mark~R. Tuttle.
\newblock Tight bounds for \emph{k}-set agreement.
\newblock {\em J. {ACM}}, 47(5):912--943, 2000.

\bibitem{CouloumaGP15}
Etienne Coulouma, Emmanuel Godard, and Joseph~G. Peters.
\newblock A characterization of oblivious message adversaries for which
  consensus is solvable.
\newblock {\em Theor. Comput. Sci.}, 584:80--90, 2015.

\bibitem{FischerGK17}
Manuela Fischer, Mohsen Ghaffari, and Fabian Kuhn.
\newblock Deterministic distributed edge-coloring via hypergraph maximal
  matching.
\newblock In {\em 58th {IEEE} Annual Symposium on Foundations of Computer
  Science (FOCS)}, pages 180--191, 2017.

\bibitem{FischerLP85}
Michael~J. Fischer, Nancy~A. Lynch, and Mike Paterson.
\newblock Impossibility of distributed consensus with one faulty process.
\newblock {\em J. {ACM}}, 32(2):374--382, 1985.

\bibitem{FuggerNS18}
Matthias F{\"{u}}gger, Thomas Nowak, and Manfred Schwarz.
\newblock Tight bounds for asymptotic and approximate consensus.
\newblock In {\em Proceedings of the 2018 {ACM} Symposium on Principles of
  Distributed Computing, {PODC}}, pages 325--334, 2018.

\bibitem{Ghaffari16}
Mohsen Ghaffari.
\newblock An improved distributed algorithm for maximal independent set.
\newblock In {\em 27th {ACM-SIAM} Symposium on Discrete Algorithms (SODA)},
  pages 270--277, 2016.

\bibitem{GhaffariKM17}
Mohsen Ghaffari, Fabian Kuhn, and Yannic Maus.
\newblock On the complexity of local distributed graph problems.
\newblock In {\em 49th ACM Symposium on Theory of Computing (STOC)}, pages
  784--797, 2017.

\bibitem{GodardP16}
Emmanuel Godard and Eloi Perdereau.
\newblock k-set agreement in communication networks with omission faults.
\newblock In {\em 20th International Conference on Principles of Distributed
  Systems (OPODIS)}, pages 8:1--8:17, 2016.

\bibitem{GoosHS17}
Mika G{\"{o}}{\"{o}}s, Juho Hirvonen, and Jukka Suomela.
\newblock Linear-in-{$\Delta$} lower bounds in the {LOCAL} model.
\newblock {\em Distributed Computing}, 30(5):325--338, 2017.

\bibitem{HarrisSS16}
David~G. Harris, Johannes Schneider, and Hsin{-}Hao Su.
\newblock Distributed $({\Delta}+1)$-coloring in sublogarithmic rounds.
\newblock In {\em 48th ACM Symposium on Theory of Computing (STOC)}, pages
  465--478, 2016.

\bibitem{HerlihyKR13}
Maurice Herlihy, Dmitry Kozlov, and Sergio Rajsbaum.
\newblock {\em Distributed Computing Through Combinatorial Topology}.
\newblock Morgan Kaufmann, 2013.

\bibitem{HerlihyR94}
Maurice Herlihy and Sergio Rajsbaum.
\newblock Set consensus using arbitrary objects.
\newblock In {\em Proceedings of the Thirteenth Annual {ACM} Symposium on
  Principles of Distributed Computing (PODC)}, pages 324--333, 1994.

\bibitem{HerlihyR00}
Maurice Herlihy and Sergio Rajsbaum.
\newblock Algebraic spans.
\newblock {\em Mathematical Structures in Computer Science}, 10(4):549--573,
  2000.

\bibitem{HerlihyRT09}
Maurice Herlihy, Sergio Rajsbaum, and Mark~R. Tuttle.
\newblock An axiomatic approach to computing the connectivity of synchronous
  and asynchronous systems.
\newblock {\em Electr. Notes Theor. Comput. Sci.}, 230:79--102, 2009.

\bibitem{HerlihyS93}
Maurice Herlihy and Nir Shavit.
\newblock The asynchronous computability theorem for t-resilient tasks.
\newblock In {\em 25th {ACM} Symposium on Theory of Computing (STOC)}, pages
  111--120, 1993.

\bibitem{HerlihyS99}
Maurice Herlihy and Nir Shavit.
\newblock The topological structure of asynchronous computability.
\newblock {\em J. {ACM}}, 46(6):858--923, 1999.

\bibitem{KuhnLO10}
Fabian Kuhn, Nancy~A. Lynch, and Rotem Oshman.
\newblock Distributed computation in dynamic networks.
\newblock In {\em 42nd {ACM} Symposium on Theory of Computing (STOC)}, pages
  513--522, 2010.

\bibitem{KuhnMW16}
Fabian Kuhn, Thomas Moscibroda, and Roger Wattenhofer.
\newblock Local computation: Lower and upper bounds.
\newblock {\em J. {ACM}}, 63(2):17:1--17:44, 2016.

\bibitem{KuhnOM11}
Fabian Kuhn, Yoram Moses, and Rotem Oshman.
\newblock Coordinated consensus in dynamic networks.
\newblock In {\em 30th {ACM} Symposium on Principles of Distributed Computing
  (PODC)}, pages 1--10, 2011.

\bibitem{KuhnO2011}
Fabian Kuhn and Rotem Oshman.
\newblock Dynamic networks: Models and algorithms.
\newblock {\em SIGACT News}, 42(1):82--96, 2011.

\bibitem{Linial92}
Nathan Linial.
\newblock Locality in distributed graph algorithms.
\newblock {\em {SIAM} J. Comput.}, 21(1):193--201, 1992.

\bibitem{MendesTH14}
Hammurabi Mendes, Christine Tasson, and Maurice Herlihy.
\newblock Distributed computability in {B}yzantine asynchronous systems.
\newblock In {\em 46th Symposium on Theory of Computing (STOC)}, pages
  704--713, 2014.

\bibitem{Nowak0W19}
Thomas Nowak, Ulrich Schmid, and Kyrill Winkler.
\newblock Topological characterization of consensus under general message
  adversaries.
\newblock In {\em Proceedings of the 2019 {ACM} Symposium on Principles of
  Distributed Computing, (PODC)}, pages 218--227, 2019.

\bibitem{Peleg2000}
David Peleg.
\newblock {\em Distributed Computing: A Locality-Sensitive Approach}.
\newblock SIAM, Philadelphia, PA, 2000.

\bibitem{RajsbaumRT08}
Sergio Rajsbaum, Michel Raynal, and Corentin Travers.
\newblock The iterated restricted immediate snapshot model.
\newblock In {\em 14th Int. Conference on Computing and Combinatorics
  (COCOON)}, pages 487--497, 2008.

\bibitem{SakavalasTseng2019}
Dimitris Sakavalas and Lewis Tseng.
\newblock Network topology and fault-tolerant consensus.
\newblock Synthesis Lectures on Distributed Computing Theory, 2019.

\bibitem{SaksZ93}
Michael~E. Saks and Fotios Zaharoglou.
\newblock Wait-free k-set agreement is impossible: the topology of public
  knowledge.
\newblock In {\em 25th {ACM} Symposium on Theory of Computing (STOC)}, pages
  101--110, 1993.

\bibitem{Suomela13}
Jukka Suomela.
\newblock Survey of local algorithms.
\newblock {\em {ACM} Comput. Surv.}, 45(2):24:1--24:40, 2013.

\end{thebibliography}

\appendix
\bigskip
\centerline{\Large\bf A P P E N D I X}
\section{Basic Topological Concepts}
\label{appendix:topological basics}

A  \emph{simplicial complex} is a finite set $V$ along with a collection $\cK$  of nonempty subsets of $V$ closed under containment (i.e., if $A\in \cK$ and $\emptyset\neq B\subset A$, then $B\in \cK$).
An element of $V$ is called a \emph{vertex} of $\cK$, and the vertex set of $\cK$ is denoted by $V(\cK)=V$.
Each set in $\cK$ is called a \emph{simplex}.
A subset of a simplex is called a \emph{face} of that simplex.
The \emph{dimension} $\dim \sigma$ of a simplex $\sigma$ is
one less than the number of elements of $\sigma$, i.e., $|\sigma|-1$.
We use ``$d$-face'' as shorthand for ``$d$-dimensional face''.
A simplex $\sigma$ in $\cK$ is called a \emph{facet} of $\cK$ if $\sigma$ is not contained in any other simplex.
Note that a set of facets uniquely defines a simplicial complex.
The dimension of a complex is the largest dimension of any of its facets.
A complex is \emph{pure} if all its facets have the same dimension.
For two complexes $\cK$ and $\cL$, if $\cK \subseteq \cL$, we say $\cK$ is a \emph{subcomplex} of $\cL$.
When clear from the context, we refer to the union of one or more simplexes as a complex. Such a complex should be understood as encompassing those simplexes together with all of their faces. In particular, 
we sometimes use a simplex $\sigma$ as shorthand for the complex defined by its power set, i.e., the complex formed by $\sigma$ and all its faces.

The \emph{$d$-skeleton} of a complex $\cK$, denoted $\skel^{d}\cK$, is the subcomplex of $\cK$ composed of all the faces of $\cK$ of dimension at most $d$. If the complex $\cK$ is composed of a single $d$-simplex and all its faces, the \emph{boundary complex} of $\cK$ is its $(d-1)$-skeleton.

Let $\cK$ and $\cL$ be complexes.
A \emph{vertex map} is a function $h:V(\cK)\to V(\cL)$.
If $h$ also carries simplexes of $\cK$ to simplexes of $\cL$,
it is called a \emph{simplicial map}.
We add one or more \emph{labels} to the vertices,
where a labeling is a mapping $\lambda: V \rightarrow D$, for some arbitrary domain $D$.
In this paper, we always have the labeling $\name(\cdot)$, which associates each vertex with a unique name in $\{1,\ldots,n\}$.
In all the complexes we consider, each simplex is properly colored by these names:
if $u$ and $v$ are distinct vertices of a simplex $\sigma$,
then $\name(u) \neq \name(v)$.
In this case we say that the complexes are \emph{chromatic} (with respect to the $\name$ labeling).
A simplicial map $h$ is \emph{chromatic} if it preserves names, i.e., 
$\name(h(v)) = \name(v)$ for any vertex~$v$.
In this paper,
all simplicial maps between chromatic complexes are chromatic.
By extension, given any simplex $\sigma$, we denote by $\name(\sigma)$ the set of names of the vertices appearing in $\sigma$. 

Given two complexes $\cK$ and $\cL$, a \emph{carrier map} $\Phi$ maps each simplex $\sigma \in \cK$ to a subcomplex $\Phi(\sigma)$ of $\cL$, such that for every two simplexes $\tau$ and $\tau'$ in $\cK$ that satisfy $\tau \subseteq \tau'$, we have $\Phi(\tau) \subseteq \Phi(\tau')$. 

Roughly speaking, a \emph{geometric realization} $|\cK|$ of a simplicial complex $\cK$ is a geometric object defined as follows.
Each vertex in $V(\cK)$ is mapped to a point in a Euclidean space, such that the images of the vertices are affinely independent. Each simplex is represented by a polyhedron, which is the convex hull of points representing its vertices.
Figure~\ref{fig:scisor} displays the geometric representations of several simplicial complexes.

Let $k$ be a positive integer. We say that a complex has \emph{a hole in dimension~$k$} if the $k$-sphere~$S^k$ embedded in a geometric realization of the complex cannot be continuously contracted to  a single point within that realization. Informally, a complex is \emph{$k$-connected} if it has no holes in dimension~$k$. More formally, a complex~$\cK$ is \emph{$k$-connected} if every continuous map $h: S^k \to |\cK|$ can be extended to a continuous map $h': D^{k+1} \to |\cK|$ where $D^{k+1}$ denotes the~$(k+1)$-disk.
In dimension $0$, this property simply states that any two points can be linked by a path, i.e., the complex is \emph{path-connected}. In dimension~$1$, it states that any loop can be filled into a disk, i.e., the complex is simply connected.
By convention, a $(-1)$-connected complex is just a non-empty complex, and every complex is $d$-connected for every~$d < -1$. 

Finally, given a set $I$, a \emph{pseudosphere} $\Psi(\{1,\ldots,n\},I)$ is the $(n-1)$-dimensional complex defined as follows:
(1)~every pair $(i,v)$ with $i\in \{1,\ldots,n\}$, and $v\in I$ is a vertex, and 
(2)~every $\{(1, v_1), \hdots, (n, v_n) \}$ with $v_i \in I$ for every $i\in[n]$ is a facet.
Pseudospheres offer a convenient way to describe all possible initial configurations when each process input is an arbitrary value from~$I$.

\end{document}